
\documentclass[10pt,a4paper]{article}
\usepackage{amssymb}
\usepackage{amsmath}
\usepackage{amsthm}
\usepackage{latexsym}
\usepackage[dvips]{epsfig}
\usepackage{mathrsfs}
\usepackage{yhmath}

\theoremstyle{plain}
\newtheorem{theorem}{Theorem}
\newtheorem{proposition}[theorem]{Proposition}
\newtheorem{lemma}[theorem]{Lemma}
\newtheorem{corollary}[theorem]{Corollary}

\newtheorem*{main}{Theorem}
\newtheorem*{definition}{Definition}

\setlength{\textwidth}{148mm}           
\setlength{\textheight}{235mm}          
\setlength{\topmargin}{-5mm}            
\setlength{\oddsidemargin}{5mm}         
\setlength{\evensidemargin}{5mm}

\font\SYM=msbm10
\newcommand{\Real}{\mbox{\SYM R}}
\newcommand{\Complex}{\mbox{\SYM C}}

\newcommand{\Sphere}{\mbox{\SYM S}}


\font\tenscr=rsfs10 scaled1100
\font\sevenscr=rsfs7 
\font\fivescr=rsfs5 
\skewchar\tenscr='177
\skewchar\sevenscr='177
\skewchar\fivescr='177
\newfam\scrfam
\textfont\scrfam=\tenscr
\scriptfont\scrfam=\sevenscr
\scriptscriptfont\scrfam=\fivescr


\newcommand{\updn}[3]{#1^{#2}_{\phantom{#2}#3}}

\begin{document}


\title{\textbf{On the construction of a geometric invariant measuring
    the deviation from Kerr data}}

\author{{\Large Thomas B\"ackdahl} \thanks{E-mail address:
{\tt t.backdahl@qmul.ac.uk}} \\
\vspace{5mm}
{\Large Juan A. Valiente Kroon} \thanks{E-mail address:
{\tt j.a.valiente-kroon@qmul.ac.uk}}\\
School of Mathematical Sciences,\\
 Queen Mary University of London, \\
Mile End Road, London E1 4NS, UK.}

\maketitle

\begin{abstract}
This article contains a detailed and rigorous proof of the
construction of a geometric invariant for initial data sets for the
Einstein vacuum field equations. This geometric invariant vanishes if
and only if the initial data set corresponds to data for the Kerr
spacetime, and thus, it characterises this type of data.  The
construction presented is valid for boosted and non-boosted initial
data sets which are, in a sense, asymptotically Schwarzschildean. As a
preliminary step to the construction of the geometric invariant, an
analysis of a characterisation of the Kerr spacetime in terms of
Killing spinors is carried out. A space spinor split of the
(spacetime) Killing spinor equation is performed, to obtain a set of
three conditions ensuring the existence of a Killing spinor of the
development of the initial data set. In order to construct the
geometric invariant, we introduce the notion of approximate Killing
spinors. These spinors are symmetric valence 2 spinors intrinsic to
the initial hypersurface and satisfy a certain second order elliptic
equation ---the approximate Killing spinor equation. This equation
arises as the Euler-Lagrange equation of a non-negative integral
functional. This functional constitutes part of our geometric
invariant ---however, the whole functional does not come from a
variational principle. The asymptotic behaviour of solutions to the
approximate Killing spinor equation is studied and an existence
theorem is presented.
\end{abstract}

\section{Introduction}

The Kerr spacetime is, undoubtedly, one of the most important exact
solutions to the vacuum Einstein field equations \cite{Ker63}. It describes a
rotating stationary asymptotically flat black hole parametrised by its
mass $m$ and its specific angular momentum $a$. One of the
outstanding challenges of contemporary General Relativity is to obtain
a full understanding of the properties and the structure of the Kerr
spacetime, and of its standing in the space of solutions to the
Einstein field equations.

\medskip
There are a number of difficult conjectures and partial results
concerning the Kerr spacetime. In particular, it is widely expected to
be the only rotating stationary asymptotically flat black hole. This
conjecture has been proved if the spacetime is assumed to be analytic
($C^\omega$) --- see e.g. \cite{ChrCos08} and references
within. Recently, there has been progress in the case where the
spacetime is assumed to be only smooth ($C^\infty$) ---see
\cite{IonKla09a}. Moreover, it has been shown that a regular,
non-extremal stationary black hole solution of the Einstein vacuum
equations which is suitably close to a Kerr solution must be that Kerr
solution ---i.e. \emph{perturbative stability} among the class of
stationary solutions \cite{AleIonKla09}.

\medskip
Another of the conjectures concerning the Kerr spacetime is 
that it describes, in some
sense, the late time behaviour of a spacetime with dynamical (that is,
non-stationary) black holes ---this is sometimes known as the
\emph{establishment point of view of black holes},
cfr. \cite{Pen73}. A step in this direction is to obtain a proof of
the \emph{non-linear stability} of the Kerr spacetime ---this conjecture roughly 
states that the Cauchy problem for the vacuum Einstein field
equations with initial data for a black hole which is suitably close
to initial data for the Kerr spacetime gives rise to a spacetime with the same
global structure as Kerr and with suitable pointwise decay. Numerical
simulations support the conjectures described in this paragraph.

\medskip
A common feature in the problems mentioned in the previous paragraphs
is the need of having a precise formulation of what it means that a
certain spacetime is \emph{close} to the Kerr solution. Due to the
coordinate freedom in General Relativity it is, in general, difficult
to measure how much two spacetimes differ from each other. Statements
made in a particular choice of coordinates can be deceiving. In the
spirit of the geometrical nature of General Relativity, one would like
to make statements which are coordinate and gauge
independent. Invariant characterisations of spacetimes provide a way
of bridging this difficulty.

\medskip
Most analytical and numerical studies of the Einstein field equations
make use of a 3+1 decomposition of the equations and the
unknowns. In this context, the question of whether a given initial
data set for the Einstein field equations corresponds to data for the
Kerr spacetime arises naturally ---an initial data set will be said to
be data for the Kerr spacetime if its development is isometric to a
portion (or all) of the Kerr spacetime. A related issue arises when
discussing the (either analytical or numerical)  3+1 evolution of a
spacetime: do the leaves of the foliation approach, as a result of the
evolution, hypersurfaces of the Kerr spacetime?
In order to address these issues it is important to have a geometric characterisation of the Kerr
solution which is amenable to a 3+1 splitting.

\medskip
A number of invariant characterisations are known in the
literature. Each with their own advantages and disadvantages. For
completeness we discuss some which bear connection to the analysis
presented in this article:

\medskip
\noindent
\textbf{The Simon and Mars-Simon tensors.} A convenient way of
studying stationary solutions to the Einstein field equations is
through the quotient manifold of the orbits of the stationary Killing
vector. The Schwarzschild spacetime is characterised among all
stationary solutions by the vanishing of the Cotton tensor of the
metric of this quotient manifold ---see e.g. \cite{Fri04}. In
\cite{Sim84} a suitable generalisation of the Cotton tensor of the
quotient manifold was introduced ---the \emph{Simon tensor}. The vanishing of
the Simon tensor together with asymptotic flatness and non-vanishing
of the mass characterises the Kerr solution in the class of
stationary solutions. In \cite{Mar99,Mar00} a spacetime version of the
Simon tensor was introduced ---the so-called \emph{Mars-Simon tensor}. The
construction of this tensor requires the \emph{a priori} existence of
a Killing vector in the spacetime. Accordingly, it is tailored for the
problem of the uniqueness of stationary black holes. The vanishing of
the Mars-Simon tensor together with some global conditions (asymptotic
flatness, non-zero mass, stationarity of the Killing vector)
characterises the Kerr spacetime.

\medskip
\noindent
\textbf{Characterisations using concomitants of the Weyl tensor.} A
concomitant of the Weyl tensor is an object constructed from 
tensorial operations on the Weyl tensor and its covariant
derivatives. An invariant characterisation of the Kerr spacetime in
terms of concomitants of the Weyl tensor has been obtained in
\cite{FerSae09}. This result generalises a similar result for the
Schwarzschild spacetime given in \cite{FerSae98}. These
characterisations consist of a set of conditions on concomitants of
the Weyl tensor, which if satisfied, characterise locally the
Kerr/Schwarzschild spacetime. An interesting feature of the
characterisation is that it provides expressions for the stationary
and axial Killing vectors of the spacetime in terms of concomitants
of the Weyl tensor. Unfortunately, the concomitants used in the
characterisation are complicated, and thus, produce very involved
expressions when performing a 3+1 split.

\medskip
\noindent
\textbf{Characterisations by means of generalised symmetries.}
Generalised symmetries (sometimes also known as hidden symmetries) are
generalisations of the Killing vector equation ---like the Killing
tensors and conformal Killing-Yano tensors. These tensors arise
naturally in the discussion of the so-called Carter constant of motion
and in the separability of various types of linear equations on the
Kerr spacetime ---see e.g. \cite{Car68b,KamMcL84,PenRin86}. In
particular, the existence of a conformal Killing-Yano tensor is
equivalent to the existence of a valence-2 symmetric spinor satisfying
the Killing spinor equation. An important property of the
Schwarzschild and Kerr spacetimes is that they admit a Killing
spinor. This Killing spinor generates, in a certain sense, the
Killing vectors and Killing-Yano tensors of the exact solutions in question
\cite{HugSom73b}. Moreover, as it will be discussed in the main part of
this article, for a spacetime which is neither conformally flat nor of
Petrov type N, the existence of a Killing spinor associated to a
Killing-Yano tensor together with the requirement of asymptotic
flatness renders a characterisation of the Kerr spacetime. To the best
of our knowledge, this property has only been discussed in the
literature ---without proof--- in \cite{FerSae07}.

\medskip
Although at first sight independent, the characterisations of the
Schwarzschild and Kerr spacetimes described in the previous paragraphs
are interconnected ---sometimes in very subtle manners. This is not
too surprising as all these characterisations make use in a direct or
indirect manner of the fact that the Kerr spacetime is a vacuum
spacetime of Petrov type D ---see e.g. \cite{SKMHH} for a discussion
of the Petrov classification. The art in producing a useful
characterisation of the Kerr spacetime lies in finding further
conditions on type D spacetimes which are natural and simple to use.

\subsection*{A characterisation of Kerr data}

Characterisations of initial data sets for the Schwarzschild and Kerr
spacetimes have been discussed in
\cite{GarVal07,GarVal08b,Val05b}. These characterisations make use of
a number of local and global ingredients. For example, in 
\cite{GarVal08b} it is necessary to assume the existence of a Killing vector on
the development of the spacetime. 

\medskip
In this article we present a
rigorous and detailed discussion of a geometric invariant
characterising initial data for the Kerr spacetime. A restricted
version of this construction has been presented in \cite{BaeVal10a}.

\medskip
The starting point of our construction is the observation that the
existence of a Killing spinor in the Kerr spacetime is a key
property. It allows to relate the Killing vectors of the spacetime
with its curvature in a neat way. The reason for its importance can be
explained in the following way: from a specific Killing spinor it is
possible to obtain a Killing vector which in general will be
complex. It turns out that for the Kerr spacetime this Killing vector
is in fact real and coincides with the stationary Killing vector. It
can be shown that the Kerr solution is the only asymptotically flat
vacuum spacetime with these properties, if one assumes that there are
no points where the Petrov type is either N or O.

\medskip
Given the aforementioned spacetime characterisation of the Kerr
solution, the question now is how to make use of it to produce a
characterisation in terms of initial data sets. For this, one has to encode the
existence of a Killing spinor at the level of the data. The way of
doing this was first discussed in \cite{GarVal08a} and follows the
spirit of the well-known discussion of how to encode Killing vectors on
initial data ---see e.g. \cite{BeiChr97b}. 

\medskip
The conditions on the initial data that ensure the existence of a
Killing spinor in its development are called the \emph{Killing spinor
initial data equations} and are, like the Killing initial data
equations (KID equations), overdetermined. In \cite{Dai04c}, a
procedure was given on how to construct equations which generalise the
KID equations for time symmetric data. These generalised equations
have the property that for a particular behaviour at infinity they
always admit a solution. If the spacetime admits Killing vectors, then
the solutions to the generalised KID equations with the same
asymptotic behaviour as the Killing vectors are, in fact, Killing
vectors. Therefore, one calls the solutions to the generalised KID
equations \emph{approximate symmetries}. The total number of
approximate symmetries is equal to the maximal number of possible
Killing vectors on the spacetime. A peculiarity of this procedure is
that if the spacetime is not stationary, the approximate Killing
vector associated to a time translation does not have the same
asymptotic behaviour as a time translation\footnote{Here and in what
  follows, for a time translation it is understood a Killing vector
  which in some asymptotically Cartesian coordinate system has a
  leading term of the form $\partial_t$.}.

\medskip
The Killing spinor initial data equations consist of three
conditions: one of them differential (the \emph{spatial Killing spinor
equation})\footnote{The idea of using the spatial part of spinorial
equations to characterise slices of particular spacetimes is not
new. In \cite{Tod84} the spatial twistor equation has been used to
characterise slices of conformally flat spacetimes. See also
\cite{BeiSza97}.} and two \emph{algebraic conditions}. Following the
spirit of \cite{Dai04c} we construct a generalisation of the spatial
Killing spinor equation ---\emph{the approximate Killing spinor
equation}. This equation is elliptic and of second order. This
equation is the Euler-Lagrange equation of an integral functional
---the $L^2$-norm of the exact spatial Killing spinor equation. For
this equation it is possible to prove the following theorem:

\begin{main}
For initial data sets to the Einstein
field equations with suitable asymptotic behaviour, there exists a solution to the approximate Killing
spinor equation with the same asymptotic behaviour as the Killing
spinor of the Kerr spacetime.
\end{main}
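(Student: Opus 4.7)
The plan is to treat the approximate Killing spinor equation as an elliptic problem at infinity, in the spirit of Dain's construction of approximate symmetries. Since the approximate Killing spinor equation is, by construction, the Euler--Lagrange equation of a non-negative $L^2$ functional, its principal part is a formally self-adjoint, strongly elliptic operator $L$ acting on symmetric valence-$2$ spinors on the initial hypersurface. This is the framework in which I would apply Fredholm theory on weighted Sobolev spaces $H^s_\delta$ adapted to the asymptotically Schwarzschildean decay.

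The first step is to produce a reference spinor $\kappa_\star^{AB}$ that realises the Kerr asymptotics. Concretely, in an asymptotic end one writes down the Killing spinor of the Kerr spacetime, restricts it to the data hypersurface through the space-spinor split discussed earlier in the paper, and extends it to the whole initial manifold by means of a smooth cut-off supported near infinity. The sought solution is then decomposed as $\kappa^{AB} = \kappa_\star^{AB} + \vartheta^{AB}$, where $\vartheta^{AB}$ is required to decay strictly faster than $\kappa_\star^{AB}$. Substituting this ansatz into the approximate Killing spinor equation reduces the problem to solving $L\vartheta^{AB} = -F^{AB}$, where the source $F^{AB} := L\kappa_\star^{AB}$ encodes the failure of the Kerr background to solve the equation exactly on non-Kerr data; using the assumed asymptotic form of the physical data together with the explicit Kerr expression one shows that $F^{AB}$ lies in an appropriate weighted $L^2$ space with sufficiently fast fall-off.

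At this point I would invoke the standard theory of elliptic operators on asymptotically flat manifolds: for all but a discrete set of exceptional weights $\delta$, $L: H^2_\delta \to L^2_{\delta-2}$ is Fredholm, and self-adjointness identifies its cokernel with the kernel at the dual weight. The main obstacle, and the step that requires the most care, is the injectivity analysis in the chosen weighted space. Here the variational origin of the equation is essential: any element $\zeta^{AB}$ of the kernel that decays at infinity causes the non-negative functional to vanish after integration by parts, which forces $\zeta^{AB}$ to satisfy the exact spatial Killing spinor equation. A boundary-term analysis at infinity, together with the requirement that on generic (non-Kerr) data the hypersurface carries no global Killing spinor of the relevant decay rate, rules out such $\zeta^{AB}$ and yields solvability. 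The choice of the weight $\delta$ must therefore be made carefully so as to lie outside the exceptional set and to eliminate the spurious kernel elements while still accommodating the slow decay of the Kerr model $\kappa_\star^{AB}$.

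Once $\vartheta^{AB} \in H^2_\delta$ is produced, elliptic regularity bootstraps its smoothness to the regularity permitted by the data, and a weighted Sobolev embedding converts the integral decay into the pointwise statement that $\kappa^{AB} - \kappa_\star^{AB}$ decays faster than the Kerr Killing spinor at infinity. This is precisely the asymptotic behaviour claimed in the theorem, and combined with the construction of $\kappa_\star^{AB}$ it completes the existence proof.
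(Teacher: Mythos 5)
Your overall architecture (seed spinor with Kerr asymptotics, ansatz $\kappa_{AB}=\mathring{\kappa}_{AB}+\theta_{AB}$ with $\theta_{AB}$ in a weighted Sobolev space, Fredholm alternative for the formally self-adjoint elliptic operator $L=\Phi^*\circ\Phi$, elliptic regularity at the end) is exactly the paper's route. However, there is a genuine gap at the decisive step, the triviality of the kernel. You correctly reduce an element $\zeta_{AB}$ of the kernel, decaying at infinity, to a solution of the exact spatial Killing spinor equation $\nabla_{(AB}\zeta_{CD)}=0$ via integration by parts, but you then dispose of it by invoking ``the requirement that on generic (non-Kerr) data the hypersurface carries no global Killing spinor of the relevant decay rate''. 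This is not an admissible hypothesis: it is precisely what must be proved, and it must be proved for \emph{all} data in the class (the theorem asserts existence for every data set with the stated asymptotics, including Kerr data itself). In fact the statement is true unconditionally: any solution of the spatial Killing spinor equation lying in $H^\infty_{-1/2}$ vanishes identically, because the genuine Kerr Killing spinor grows like $r$ and so never lies in this space. The paper's proof of this (Proposition 12, resting on Lemma 10) is the technical heart of the existence argument: using the integrability conditions \eqref{Dxi1}--\eqref{Dxi3} and \eqref{nabla2xi0a}--\eqref{nabla2xi0c} one shows that when $\nabla_{(AB}\kappa_{CD)}=0$ the third Sen derivatives of $\kappa_{AB}$ are linear combinations of lower-order derivatives with coefficients of the right decay, and then a Christodoulou--O'Murchadha type unique-continuation result for such overdetermined systems on asymptotically Euclidean manifolds forces a decaying solution to vanish. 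Without this ingredient your argument does not close; and if the kernel were allowed to be nontrivial on some data, the Fredholm alternative would impose orthogonality conditions on $F_{AB}=L(\mathring{\kappa}_{AB})$ that you never verify.

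Two smaller points. First, the seed: you propose restricting the Kerr Killing spinor and cutting off, but the data are only \emph{asymptotically} Schwarzschildean (possibly boosted), so what is actually needed --- and what the paper proves --- is that on such data the spatial Killing spinor equation can be solved asymptotically, $\nabla_{(AB}\mathring{\kappa}_{CD)}\in H^\infty_{-3/2}$, so that $F_{AB}\in H^\infty_{-5/2}$ as required at the weight $\delta=-1/2$; this uses the precise form \eqref{alphabeta} of the leading metric coefficients and, in the boosted case, the nontrivial expansion \eqref{KillingSpinorLeading}. Your sentence asserting that $F_{AB}$ has ``sufficiently fast fall-off'' is where this verification must happen. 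Second, the paper also establishes uniqueness of the asymptotic expansion up to a constant (Theorem 13) and hence independence of the final solution from the choice of seed, which your sketch does not address but which is needed for the solution --- and later the invariant --- to be well defined.
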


A precise formulation will be given in the main text. In particular,
it will be seen that the conditions on the asymptotic behaviour of the
initial data are rather mild and amount to requiring the data to be,
in a sense, asymptotically Kerr data. Contrasted with the results in \cite{Dai04c}, this
result is notable because, arguably, the most important approximate
symmetry of \cite{Dai04c} does not share the same asymptotic behaviour
as the exact symmetry. The precise version of this theorem generalises
the one discussed in \cite{BaeVal10a} in that it allows for boosted
data. This generalisation is only possible after a detailed analysis
of the asymptotic solutions of the exact Killing spinor equation.

\medskip
The approximate Killing spinor discussed in the previous paragraphs
can be used to construct a geometric invariant for the initial
data. This invariant is global and  involves the $L^2$ norms
of the Killing spinor initial data equations evaluated at the
approximate Killing vector. It should be observed that only part of the invariant satisfies a
variational principle ---this is a further difference with respect to
the construction of \cite{Dai04c}. As the initial data set is assumed to be
asymptotically Euclidean,  one expects its development to be
asymptotically flat. This renders the desired characterisation of Kerr
data and our main result.

\begin{main}
Consider an initial data set for the vacuum Einstein field equations
whose development in a small slab is neither of Petrov type N nor O at
any point, and such that the $L^2$ norm of the Killing spinor initial
data equations evaluated at the solution (with the same asymptotic
behaviour as the Killing spinor of the Kerr spacetime) to the
approximate Killing spinor equation vanishes. Then the initial data
set is locally data for the Kerr spacetime. Furthermore, if the
3-manifold has the same topology as that of hypersurfaces of the Kerr
spacetime, then the initial data set is data for the Kerr spacetime.
\end{main}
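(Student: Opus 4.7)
The plan is to exploit the vanishing of the $L^2$ functional to extract pointwise information from the initial data, then propagate a Killing spinor off the initial hypersurface into the development, and finally invoke the spacetime characterisation of Kerr already recalled in the introduction.

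First, because the functional is built out of non-negative integrands given by the squared norms of the three Killing spinor initial data expressions (the differential spatial Killing spinor equation and the two algebraic conditions), its vanishing forces each of these quantities to be zero almost everywhere on the initial hypersurface. Combining the smoothness assumption on the data with standard elliptic regularity for the approximate Killing spinor equation (a second order elliptic operator), the approximate Killing spinor is smooth and the three Killing spinor initial data equations hold pointwise. Hence the approximate Killing spinor is in fact an exact solution of the Killing spinor initial data equations on the slice.

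Second, I would invoke the propagation result for Killing spinor initial data that was discussed earlier in the paper in connection with the space spinor split of the Killing spinor equation: for a vacuum development, the simultaneous validity of the three Killing spinor initial data equations on a Cauchy slice implies that the development admits a spacetime valence-$2$ symmetric Killing spinor in a slab about the initial surface. This is the spinorial analogue of the classical KID propagation and relies essentially on the vacuum field equations to transport the constraints off the slice. Third, I would apply the spacetime characterisation of Kerr recalled in the introduction: on the slab, the Killing spinor produces an associated complex Killing vector; under the standing Petrov hypothesis (no points of type N or O) and the asymptotic behaviour inherited from the approximate Killing spinor, this Killing vector is in fact real and behaves as a time translation at infinity, forcing the slab to be locally isometric to a portion of Kerr. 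This gives the local part of the conclusion.

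The main obstacle I anticipate lies in the global statement. Passing from a local slab-isometry to a genuine identification of the entire initial hypersurface with a Kerr slice requires a coherent patching argument: the local isometries must be shown to assemble consistently over the connected 3-manifold, using analyticity of the vacuum solution and the uniqueness part of the spacetime characterisation of Kerr; the topological hypothesis is then what rules out pathological configurations such as a non-trivial covering of a Kerr slice. Control of the asymptotic behaviour of the approximate Killing spinor, which by the previous theorem matches that of the Kerr Killing spinor, should pin down the Kerr parameters $m$ and $a$ at the asymptotic end and thereby fix a unique candidate Kerr slice; combined with the topological matching, this should upgrade the local identification to a global one and complete the proof.
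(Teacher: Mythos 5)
Your outline reproduces the paper's overall strategy (vanishing of the non-negative $L^2$ integrals forces the three Killing spinor initial data equations pointwise; Theorem \ref{Theorem:KSData} then gives a Killing spinor on a slab of the development; the Petrov hypothesis excludes types N and O; Mars-type characterisation finishes the local statement), but there is a genuine gap at the decisive step. You assert that the Killing vector $\xi_{AA'}=\nabla^{B}{}_{A'}\kappa_{AB}$ associated to the propagated Killing spinor ``is in fact real'' under the standing Petrov hypothesis and the inherited asymptotics, without any argument. Neither ingredient gives reality by itself: excluding N and O only places the development in the vacuum type D class, which contains many non-Kerr solutions, and what singles out the generalised Kerr-NUT subclass to which the characterisation of Theorem \ref{Theorem:SpacetimeCharacterisation} (via Mars' Theorem \ref{Theorem:MMars}) applies is precisely the reality of $\xi_{AA'}$. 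The paper proves this as follows: the real and imaginary parts of $\xi_{AA'}$ are separately Killing vectors of the development; the Killing initial data of the imaginary part is encoded in $\xi-\hat{\xi}$ and $\xi_{AB}+\hat{\xi}_{AB}$, which by the asymptotic expansion of the approximate Killing spinor are $o_\infty(r^{-1/2})$; and on an asymptotically Euclidean data set a Killing vector whose KID decays at infinity must vanish identically (Beig--Chru\'sciel, Christodoulou--O'Murchadha). Hence the imaginary part is zero, $\xi_{AA'}$ is real and tends to a time translation. One must also verify the remaining hypotheses of the Mars characterisation ---stationary asymptotically flat end with non-vanishing Komar mass, which here is identified with the (non-zero, timelike) ADM mass. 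Without this rigidity argument your third step does not go through, since nothing you have said distinguishes the development from, say, another type D vacuum admitting a Killing spinor whose associated Killing vector is genuinely complex.

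On the global statement your proposal also diverges from, and is heavier than, the paper's treatment: you envisage an analytic-continuation/patching argument and a determination of the Kerr parameters to upgrade the local isometry. The paper instead obtains the corollary directly: once the data is locally Kerr data and the slice is assumed to have the topology of a Kerr Cauchy surface, the conclusion follows from the uniqueness of the maximal globally hyperbolic development of Cauchy data (Choquet-Bruhat \& Geroch). Your concerns about coverings are what the topological hypothesis is there to eliminate, but no analyticity argument is needed at this stage.
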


There are several advantages of this characterisation over previous
ones given in the literature. Most notably, it allows to condense the
non-Kerrness of an initial data set in a single number. That this
invariant constitutes a good distance in the space of initial data
sets will be discussed elsewhere. Furthermore, the way the invariant
is constructed is fully amenable to a numerical implementation ---the
elliptic solvers that one would need to compute the solution to the
approximate Killing spinor equation are, nowadays, standard
technology.

\subsection*{Detailed outline of the article}
The outline of the article is as follows: in Section \ref{Section:Basics} we study
Killing spinors, and their influence on the algebraic type of the
spacetime. We relate the Killing spinors to Killing vectors and
Killing-Yano tensors. Using these results together with a
characterisation of the Kerr spacetime by Mars \cite{Mar00}, we conclude that the Kerr
spacetime can be characterised in terms of existence of a Killing
spinor related to a real Killing vector. This has previously been
overlooked in the literature, but it is a key element in our analysis.

Section \ref{Section:SpaceSpinors} follows with an exposition of space spinors, which will be
the main computational tool for the remainder of the paper.  Following
that, in Section \ref{Section:KSD} we study a 3+1 splitting of the Killing spinor
equation. A similar analysis was carried out in \cite{GarVal08a}, but
here we manage to condense the result into three simple equations, the
\emph{spatial Killing spinor equation} and two algebraic equations. We
also present general equations for the spatial derivatives of a
general valence 2 spinor, which is not necessarily a Killing
spinor. These equations are also used in later parts of the paper.

In Section \ref{Section:ApproximateKS} we introduce the new concept of \emph{approximate Killing
spinors}. These are introduced as solutions to an elliptic equation
formed by composing the spatial Killing spinor operator with its
formal adjoint. That this composed operator is indeed elliptic and
formally self adjoint is proved. We also see that the approximate
Killing spinor equation can be derived from a
variational principle.

To get unique solutions to the approximate Killing spinor equation, we
need to specify the asymptotic behaviour. For a rigorous treatment of
this, we use weighted Sobolev spaces; these are described in Section
\ref{Section:AsymptoticallyEuclideanData}.  Here we also study the asymptotics of a Killing spinor on a
boosted slice of the Schwarzschild spacetime. In general, we study
slices of an arbitrary spacetime with asymptotics similar to those of
the Schwarzschild spacetime. Using these assumptions, we can then in
Section \ref{Section:AB} prove existence of spinors with the same asymptotics as the
Killing spinor in the Schwarzschild spacetime. We later use these
spinors as seeds for solutions to the approximate Killing spinor
equation. In this way we get the desired asymptotic behaviour of our
approximate Killing spinors.

In Section \ref{Section:ApproximateKSinAEM} we study the approximate Killing spinor equation in our
asymptotically Euclidean manifolds to gain existence and uniqueness of
solutions with the desired asymptotics.  This is done by means of the
Fredholm alternative on weighted Sobolev spaces, transforming the
existence problem into a study of the kernel of the Killing spinor
operator.  In this process we get the first part of the geometric
invariant ---the $L^2$ norm of the approximate Killing spinor. This 
norm is proved to be finite. The geometric invariant is constructed in
Section \ref{Section:Invariant}, by adding the $L^2$ norms of the algebraic
conditions. There follows our main theorem: the  invariant
vanishes if and only if the spacetime is the Kerr spacetime. The
invariant is as a consequence of the construction proved to be finite
and well defined.

We also include two appendices. The first describes an alternative
proof of finiteness of a particular boundary integral in Section 
\ref{Section:ApproximateKSinAEM}. The other contains tensor versions of the invariant ---this can be
useful in applications.

\subsection*{General notation and conventions}
All throughout, $(\mathcal{M},g_{\mu\nu})$ will be an orientable and
time orientable globally hyperbolic vacuum spacetime. It follows that
the spacetime admits a spin structure ---see \cite{Ger68,Ger70c}. Here, and in
what follows, $\mu,\,\nu,\cdots$ denote abstract 4-dimensional tensor
indices. The metric $g_{\mu\nu}$ will be taken to have signature
$(+,-,-,-)$. Let $\nabla_\mu$ denote the Levi-Civita connection of
$g_{\mu\nu}$. The sign of the Riemann tensor will be given by the
equation
\[
\nabla_\mu\nabla_\nu\xi_\zeta-\nabla_\nu\nabla_\mu\xi_\zeta=R_{\nu\mu\zeta}{}^\eta\xi_\eta.
\]

\medskip
 The triple $(\mathcal{S}, h_{ab},K_{ab})$ will denote initial data on
a hypersurface of the spacetime $(\mathcal{M},g_{\mu\nu})$. The
symmetric tensors $h_{ab}$, $K_{ab}$ will denote, respectively, the
3-metric and the extrinsic curvature of the 3-manifold
$\mathcal{S}$. The metric $h_{ab}$ will be taken to be negative
definite ---that is, of signature $(-,-,-)$. The indices
$a,\,b,\ldots$ will denote abstract 3-dimensional tensor indices,
while $i,\,j,\ldots$ will denote 3-dimensional tensor coordinate
indices.  Let $D_a$ denote the Levi-Civita covariant derivative of
$h_{ab}$.

\medskip
Spinors will be used systematically. We follow the conventions of
\cite{PenRin84}.  In particular, $A,\,B,\ldots$ will denote abstract
spinorial indices, while $\mathbf{A}, \,\mathbf{B},\ldots$ will be
indices with respect to a specific frame. Tensors and their spinorial
counterparts are related by means of the solder form $\sigma_\mu{}^{AA'}$ satisfying $g_{\mu\nu}=\sigma_\mu^{AA'}\sigma_\nu^{BB'} \epsilon_{AB}\epsilon_{A'B'}$, where $\epsilon_{AB}$
is the antisymmetric spinor and $\epsilon_{A'B'}$ its complex
conjugate copy. One has, for example, that $\xi_\mu =
\sigma_{\mu}{}^{AA'} \xi_{AA'}$.  Let $\nabla_{AA'}$ denote the
spinorial counterpart of the spacetime connection
$\nabla_\mu$. Besides the connection $\nabla_{AA'}$, two other
spinorial connections will be used: $D_{AB}$, the spinorial
counterpart of the Levi-Civita connection $D_a$ and $\nabla_{AB}$,
the Sen connection of $(\mathcal{M},g_{\mu\nu})$ ---full details 
will be given in Section \ref{Section:SpaceSpinors}. 

\medskip
\noindent
\textbf{The Kerr spacetime.} For the Kerr spacetime it will be
understood the maximal analytic extension of the Kerr metric as
described by Boyer \& Lindquist \cite{BoyLin67} and Carter
\cite{Car68a}. When regarding the Kerr spacetime as the development of
Cauchy initial data, we will only consider its maximal globally
hyperbolic development.

\section{Killing spinors: general theory}
\label{Section:Basics}

As mentioned in the introduction, our point of departure will be a
characterisation of the Kerr spacetime based on the existence in the
spacetime of a valence-2 symmetric spinor satisfying the Killing
spinor equation. To the best of our knowledge, this characterisation of the Kerr spacetime has not explicitly been discussed in the literature, save for a side remark in \cite{FerSae07}. In this section we provide a summary of this characterisation and fill in some technical details. 

\subsection{Killing spinors and Petrov type D spacetimes}
A valence-2 Killing spinor is a symmetric spinor
$\kappa_{AB}=\kappa_{(AB)}$ satisfying the equation
\begin{equation}
\label{KillingSpinorEquation} \nabla_{A'(A} \kappa_{BC)}=0.
\end{equation} 
Killing spinors
offer a way of relating properties of the curvature to properties of
the symmetries of the spacetime. Taking a further derivative of
equation \eqref{KillingSpinorEquation}, antisymmetrising and
commuting the covariant derivatives one finds the integrability condition
\begin{equation}
\Psi_{(ABC}{}^F\kappa_{D)F}=0, \label{IntegrabilityCondition}
\end{equation}
where $\Psi_{ABCD}$ denotes the self-dual Weyl spinor. The above
integrability imposes strong restrictions on the algebraic type of
the Weyl spinor. More precisely, it follows that if $\Psi_{ABCD}\neq
0$ and $\kappa_{AB}\neq 0$, then 
\begin{equation}
\Psi_{ABCD} = \psi \kappa_{(AB} \kappa_{CD)}, \label{SolutionIntegrabilityCondition}
\end{equation}
where $\psi$ is a scalar. Thus, $\Psi_{ABCD}$ must be of Petrov type D
or N ---see e.g. \cite{GarVal08a,Jef84}. The converse is also true
\cite{HugPenSomWal72,PenRin86,WalPen70}.  Summarising:

\begin{theorem}[Walker \& Penrose 1970]
\label{Theorem:TypeDhasalwaysaKS}
A vacuum spacetime admits a valence-2 Killing spinor if and only if
it is of Petrov type D, N or O. 
\end{theorem}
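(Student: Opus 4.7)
The plan is to prove the two directions of the equivalence separately. For the forward direction, assume a valence-2 Killing spinor $\kappa_{AB}$ exists. Commuting covariant derivatives in a second derivative of \eqref{KillingSpinorEquation} and using the spinor Ricci identities in vacuum gives precisely the integrability condition \eqref{IntegrabilityCondition}, as already indicated in the excerpt. The first real task is then the algebraic step: showing that \eqref{IntegrabilityCondition} forces $\Psi_{ABCD}$ to have the factored form \eqref{SolutionIntegrabilityCondition} whenever $\Psi_{ABCD}\neq 0$. I would do this by writing $\kappa_{AB}=\alpha_{(A}\beta_{B)}$ and expanding \eqref{IntegrabilityCondition} in a normalised dyad adapted to $\alpha_A$ and $\beta_A$; the coefficients of the irreducible components of $\Psi_{ABCD}$ in this basis then vanish except for the one proportional to $\alpha_{(A}\beta_B\alpha_C\beta_{D)}$. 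Reading off the principal spinors of this expression yields Petrov type D when $\alpha_A$ and $\beta_A$ are linearly independent and type N when they are parallel, while the case $\Psi_{ABCD}=0$ is type O.

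For the converse, I would argue by explicit construction in each case. Type O contains (locally) flat metrics, which admit an obvious four-complex-dimensional family of Killing spinors obtained by parallel transport of constant symmetric spinors. For type D, I would invoke the Goldberg--Sachs theorem, which in vacuum forces the two distinct repeated principal spinors $o_A$, $\iota_A$ of $\Psi_{ABCD}$ to define geodesic, shear-free null congruences. In a principal dyad normalised so that $o_A\iota^A=1$ one has $\Psi_{ABCD}=6\Psi_2\, o_{(A}o_B\iota_C\iota_{D)}$ with $\Psi_2$ the unique non-vanishing Weyl scalar. Trying the ansatz $\kappa_{AB}=\phi\, o_{(A}\iota_{B)}$ and projecting \eqref{KillingSpinorEquation} onto the dyad reduces it to first-order equations relating the four GHP derivatives of $\phi$ to the spin coefficients; combining these with the vacuum Bianchi identities, the compatibility conditions are satisfied exactly when $\phi$ is a constant multiple of $\Psi_2^{-1/3}$, yielding the desired Killing spinor.

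The type N case proceeds in parallel with the simpler ansatz $\kappa_{AB}=\phi\, o_A o_B$, where $o_A$ is now the quadruple principal spinor; again the Goldberg--Sachs theorem supplies the shear-free condition needed to close the system, and $\phi$ is determined by a pair of transport equations along the principal null direction. The main obstacle in both constructions is verifying that the first-order system determining $\phi$ is mutually consistent: this is where the vacuum Bianchi identities do essential work, reducing the several independent directional derivatives of the Weyl scalar to a single relation and so ruling out what would otherwise be overdetermination. Rather than redo these classical computations in full, I would organise the argument so that after the dyad reduction it reduces transparently to the results of Walker and Penrose \cite{WalPen70} and Hughston \emph{et al.}~\cite{HugPenSomWal72}, indicating precisely which spin-coefficient and Bianchi identities are invoked at each step.
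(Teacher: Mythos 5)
The paper does not actually prove this statement: it derives the integrability condition \eqref{IntegrabilityCondition} (your forward direction is exactly this argument, and your algebraic step from \eqref{IntegrabilityCondition} to \eqref{SolutionIntegrabilityCondition} is fine), and for the converse it simply cites \cite{WalPen70,HugPenSomWal72,PenRin86}. Your type O and type D constructions reproduce the classical route of those references (in particular the ansatz $\kappa_{AB}=\phi\,o_{(A}\iota_{B)}$ with $\phi$ a constant multiple of $\Psi_2^{-1/3}$ is precisely Walker--Penrose); a minor slip is the flat-space count, since constant symmetric spinors form a 3-complex-parameter family and the full solution space in Minkowski space is 10-dimensional, but only existence is needed there.

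The genuine gap is the type N case, and it cannot be repaired as proposed. For a null candidate $\kappa_{AB}=\phi\,o_Ao_B$, projecting \eqref{KillingSpinorEquation} on a dyad does not give ``a pair of transport equations along the principal null direction'': besides $\kappa=\sigma=0$ it prescribes \emph{all four} NP directional derivatives of $\phi$ (schematically $D\phi=-2(\epsilon+\rho)\phi$, $\delta\phi=-2(\beta+\tau)\phi$, $\bar\delta\phi=-2\alpha\phi$, $\Delta\phi=-2\gamma\phi$), and the resulting integrability conditions are not consequences of Goldberg--Sachs and the vacuum Bianchi identities. One can see the obstruction invariantly: writing $\omega_A=\phi^{1/2}o_A$ locally, the condition $\nabla_{A'(A}(\omega_B\omega_{C)})=0$ is equivalent (using $\epsilon_{(AB}\omega_{C)}=0$) to the valence-1 twistor equation $\nabla_{A'(A}\omega_{B)}=0$; in a vacuum spacetime a twistor yields, via $\nabla_{AA'}\omega_B\propto\epsilon_{AB}\pi_{A'}$ with $\nabla_{AA'}\pi_{B'}=0$, a covariantly constant spinor, so the metric must be a pp-wave (or flat). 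Hence among vacuum type N spacetimes only the pp-waves admit a valence-2 Killing spinor; e.g.\ the expanding Robinson--Trautman type N vacua admit none, and the references you plan to fall back on (\cite{WalPen70,HugPenSomWal72}) prove only the type D existence. So the ``if'' part of the statement, read literally for arbitrary type N, is not provable by your construction (nor by any other); note that the paper itself never uses that direction --- it only needs ``Killing spinor $\Rightarrow$ type D, N or O'' together with type D existence for Kerr --- which is why it can afford to quote the theorem in this loose form.
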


 From \eqref{SolutionIntegrabilityCondition} it can also be seen that
$\Psi_{ABCD}$ is of Petrov type N if and only if $\kappa_{AB}$ is
  algebraically special. That is, there exists a spinor $\alpha_A$
  such that $\kappa_{AB}=\alpha_A \alpha_B$. Thus, an algebraically
  general Killing spinor $\kappa_{AB}=\alpha_{(A}\beta_{B)}$ is
  always associated to a vacuum spacetime of Petrov type D.

\subsection{The Killing vector associated to a Killing spinor and the generalised Kerr-NUT metrics}

Given a Killing spinor $\kappa_{AB}$, the concomitant 
\begin{equation}
\label{ComplexKillingVector}
\xi_{AA'}=\nabla^B{}_{A'} \kappa{}_{AB},
\end{equation}
 is a complex Killing vector of the spacetime: its real and imaginary
parts are themselves Killing vectors of the spacetime
\cite{HugSom73b}.  In relation to this it should be pointed out that
all vacuum Petrov type D spacetimes are known \cite{Kin69}. It follows
from the analysis in the latter reference that all vacuum, Petrov type
D spacetimes have a commuting pair of Killing vectors.  A key property
of the Kerr spacetime is the following
(cfr. \cite{HugSom73b,PenRin86}):

\begin{proposition}
\label{Proposition:KSrendersKV}
Let $(\mathcal{M},g_{\mu\nu})$ be a vacuum Petrov type D spacetime. The
Killing vector $\xi_{AA'}$ given by \eqref{ComplexKillingVector} is
real in the case of the Kerr spacetime.
\end{proposition}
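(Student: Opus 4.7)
The plan is to exploit the fact that on any vacuum Petrov type D spacetime the Killing spinor is, up to a single multiplicative complex constant, completely determined by the Weyl spinor, and then to evaluate the concomitant \eqref{ComplexKillingVector} explicitly on Kerr and read off that it is real.

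First I would fix the algebraic structure of $\kappa_{AB}$. Since $\Psi_{ABCD}\neq 0$ is of Petrov type D, write $\Psi_{ABCD}=6\Psi_2\,o_{(A}o_{B}\iota_{C}\iota_{D)}$ with $o_A,\iota_A$ the two principal spinors of the Weyl spinor, normalised by $o_A\iota^A=1$. The integrability condition \eqref{IntegrabilityCondition}, in the form \eqref{SolutionIntegrabilityCondition}, then forces $\kappa_{AB}=2\kappa_1\, o_{(A}\iota_{B)}$ for some scalar $\kappa_1$. Substituting this ansatz into the Killing spinor equation \eqref{KillingSpinorEquation} and using the Goldberg--Sachs theorem (which in vacuum type D makes both principal null congruences geodesic and shear-free) together with the vacuum Bianchi identities yields, in GHP notation, a first-order system for $\kappa_1$ whose general solution is $\kappa_1=c\,\Psi_2^{-1/3}$ for a single complex constant $c$. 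Thus $\kappa_{AB}$ is fixed by the geometry up to an overall complex multiplicative constant.

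Next I would specialise to the Kerr spacetime. In Boyer--Lindquist coordinates $(t,r,\theta,\varphi)$ and the standard Kinnersley principal null tetrad one has $\Psi_2=-m/(r-ia\cos\theta)^3$, hence $\kappa_1=c\,(r-ia\cos\theta)$ after absorbing a real factor of $m^{1/3}$ into $c$. With this explicit $\kappa_{AB}$, the quantity $\xi_{AA'}=\nabla^{B}{}_{A'}\kappa_{AB}$ can be computed directly: the $o\iota$-structure together with the Kinnersley spin coefficients reduces the calculation to a handful of GHP directional derivatives of $\kappa_1$, and the outcome is $c'\,(\partial_t)_{AA'}$ for a real constant $c'$ provided $c$ is chosen real. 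Since $\partial_t$ is the real stationary Killing vector of Kerr, the reality of $\xi_{AA'}$ follows.

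The hard part is the explicit identification of $\nabla^B{}_{A'}\bigl(\kappa_1\, o_{(A}\iota_{B)}\bigr)$ with $(\partial_t)_{AA'}$: the real/imaginary cancellations are driven by the precise algebraic relation between $\kappa_1$ and the Kinnersley spin coefficients $\rho,\tau,\pi$, and they have to be tracked carefully. A conceptually cleaner alternative is to perform the same computation throughout the entire Kinnersley family of vacuum type D metrics (which contains the Kerr--NUT subfamily) and to observe that $\operatorname{Im}\xi_{AA'}$ comes out proportional to the NUT-like parameter $\ell$. Since Kerr is precisely the subclass $\ell=0$, the proposition follows, and this also explains why the statement does not extend to a general Kerr--NUT metric.
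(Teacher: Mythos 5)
The paper itself offers no proof of this proposition: it is quoted from the literature (Hughston \& Sommers \cite{HugSom73b}, Penrose \& Rindler \cite{PenRin86}), so there is no internal argument to compare against. Your main line of attack is essentially the standard argument of those references, and it is sound in outline: the integrability condition \eqref{SolutionIntegrabilityCondition} forces $\kappa_{AB}\propto o_{(A}\iota_{B)}$ for a type D Weyl spinor, the Killing spinor equation combined with the vacuum Bianchi identities (Goldberg--Sachs) fixes the scalar factor to be $c\,\Psi_2^{-1/3}$, and on Kerr in the Kinnersley tetrad, $\Psi_2=-m(r-\mathrm{i}a\cos\theta)^{-3}$ gives $\kappa_{AB}\propto(r-\mathrm{i}a\cos\theta)\,o_{(A}\iota_{B)}$, whose divergence \eqref{ComplexKillingVector} is a constant multiple of the stationary Killing vector $\partial_t$. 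Your caveat that reality holds ``provided $c$ is chosen real'' is the right reading of the statement: the Killing spinor is only fixed up to an overall complex constant, and the proposition (as used later in Theorem \ref{Theorem:SpacetimeCharacterisation}(i)) asserts the existence of a normalisation for which $\xi_{AA'}$ is real, not reality for every rescaling.

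One genuine error to flag: the ``conceptually cleaner alternative'' in your last paragraph is wrong. It is not the case that $\operatorname{Im}\xi_{AA'}$ is proportional to the NUT-like parameter within the Kinnersley family; the proper Kerr--NUT metrics (Kinnersley II.C) with $\ell\neq 0$ also have real $\xi_{AA'}$ --- indeed this is exactly the paper's \emph{generalised Kerr--NUT class} (Remarks 1 and 2 after the proposition), which contains II.C and II.E and is equivalently characterised by the existence of a Killing--Yano tensor. Consequently, reality of $\xi_{AA'}$ does not single out Kerr inside the type D family, and your alternative would purport to prove a false characterisation; what actually separates Kerr from Kerr--NUT in the paper is the asymptotic flatness hypothesis fed into Mars's theorem (Theorem \ref{Theorem:MMars}). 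Since the proposition only claims reality \emph{for} Kerr, your main explicit computation suffices and the flawed alternative should simply be dropped.
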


\medskip
\noindent
\textbf{Remark 1.} In what follows, the class of Petrov type D
spacetimes for which $\xi_{AA'}$ is real will be called the
\emph{generalised Kerr-NUT class} ---cfr. \cite{FerSae07}. This class
can be alternatively characterised ---see e.g. \cite{KamMcL84}--- by the
existence of a Killing-Yano tensor
\[
Y_{\mu\nu}=Y_{[\mu\nu]}, \quad \nabla_{(\mu} Y_{\nu)\lambda}=0.
\]
The correspondence between the Killing spinor $\kappa_{AB}$ and the
spinorial counterpart $Y_{AA'BB'}$ of the Killing-Yano tensor, $Y_{\mu\nu}$, is given by
\[
Y_{AA'BB'} \equiv\mbox{i} \left( \kappa_{AB}\epsilon_{A'B'} - \epsilon_{AB} \bar{\kappa}_{A'B'}  \right),
\] 
where the overbar denotes the complex conjugate.

\medskip
\noindent
\textbf{Remark 2.} In terms of the Kinnersley list of type D metrics,
the class of generalised Kerr-NUT metrics contains, in addition to the proper
Kerr-NUT metrics (II.C), also the metrics II.E ---see \cite{DebKamMcL84}.

\medskip
An important property of the generalised Kerr-NUT metrics involves the
Killing form,\linebreak $F_{AA'BB'}=-F_{BB'AA'}$, of a real Killing vector
$\xi_{AA'}$ defined by
\begin{equation}
\label{KillingForm}
F_{AA'BB'} \equiv \frac{1}{2}\left( \nabla_{AA'}\xi_{BB'} - \nabla_{BB'}\xi_{AA'} \right).
\end{equation}
Let
\begin{equation}
\label{SelfDualKillingForm}
\mathcal{F}_{AA'BB'} \equiv \frac{1}{2}\left(F_{AA'BB'} + \mbox{i}F^*_{AA'BB'}   \right)
\end{equation}
denote the corresponding \emph{self-dual Killing form}, with
$F^*_{AA'BB'}$ the Hodge dual of $F_{AA'BB'}$. Due to the symmetries of the Killing form one can write
\begin{equation}
\mathcal{F}_{AA'BB'} = \mathcal{F}_{AB} \epsilon_{A'B'},
\end{equation}
with 
\begin{equation}
\label{KFSpinor}
\mathcal{F}_{AB} \equiv \frac{1}{2} F_{AQ'B}{}^{Q'} = \mathcal{F}_{BA}. 
\end{equation}
One has the following result

\begin{lemma}
\label{Lemma:PrincipalDirections}
For generalised Kerr-NUT spacetimes one has that
\[
\mathcal{F}_{AB} = \varkappa \kappa_{AB},
\]
where $\varkappa$ is a non-vanishing scalar function, so that the
principal spinors of $\mathcal{F}_{AB}$ and $\Psi_{ABCD}$ are
parallel. Equivalently, one has that 
\[
\Psi_{ABPQ} \mathcal{F}^{PQ} = \varphi \mathcal{F}_{AB},
\]
with $\varphi$ a non-vanishing scalar. 
\end{lemma}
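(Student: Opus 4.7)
The plan is to show $\mathcal{F}_{AB}\propto\kappa_{AB}$ by a direct spinorial computation that exploits both the Killing spinor equation and the Petrov type~D form \eqref{SolutionIntegrabilityCondition}. The key intermediate result to aim for is an identity of the form $\mathcal{F}_{AB} = c\,\Psi_{ABPQ}\kappa^{PQ}$ with a specific nonzero constant $c$; once this is in hand, the remaining reduction is purely algebraic. For the setup, since $\xi_\mu$ is Killing its spinorial derivative decomposes as $\nabla_{AA'}\xi_{BB'} = \phi_{AB}\epsilon_{A'B'} + \epsilon_{AB}\bar\phi_{A'B'}$ with $\phi_{AB}$ symmetric, and a short calculation starting from \eqref{KFSpinor} identifies $\mathcal{F}_{AB}=\phi_{AB}$; in particular $\mathcal{F}_{AB}$ is isolated by contracting $\nabla_{AA'}\xi_{BB'}$ against $\tfrac{1}{2}\epsilon^{A'B'}$. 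At the same time, the Killing spinor equation \eqref{KillingSpinorEquation} combined with the definition \eqref{ComplexKillingVector} of $\xi_{AA'}$ yields the irreducible decomposition
\[
\nabla_{A'A}\kappa_{BC} = -\tfrac{1}{3}\bigl(\epsilon_{AB}\,\xi_{CA'} + \epsilon_{AC}\,\xi_{BA'}\bigr),
\]
so that every first covariant derivative of $\kappa_{BC}$ reduces to a $\xi$-term.

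With this setup in hand, the next step is to substitute $\xi_{BB'} = \nabla^C{}_{B'}\kappa_{BC}$ into $\nabla_{AA'}\xi_{BB'}$ and exchange the order of the two derivatives using the spinor commutator $[\nabla_{AA'},\nabla_{BB'}] = \epsilon_{A'B'}\Box_{AB} + \epsilon_{AB}\Box_{A'B'}$. Acting on the unprimed spinor $\kappa_{BC}$ in vacuum, the primed box $\Box_{A'B'}$ annihilates $\kappa_{BC}$ and the unprimed box, after tracing against $\epsilon^{CD}$ in the appropriate way, produces a single Weyl contribution proportional to $\epsilon_{A'B'}\Psi_{ABPQ}\kappa^{PQ}$; any further trace term vanishes by total symmetry of $\Psi_{ABCD}$. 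The reordered first-derivative term is handled by the formula above and gives contributions proportional to $\phi_{AB}\epsilon_{A'B'}$ and $\epsilon_{AB}\bar\phi_{A'B'}$. Matching the coefficient of $\epsilon_{A'B'}$ against $\phi_{AB}$ produces the sought identity $\mathcal{F}_{AB} = c\,\Psi_{ABPQ}\kappa^{PQ}$ with $c$ a specific nonzero constant.

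To conclude, it remains to evaluate $\Psi_{ABPQ}\kappa^{PQ}$ using the type~D form of the Weyl spinor. Substituting $\Psi_{ABCD} = \psi\,\kappa_{(AB}\kappa_{CD)}$ from \eqref{SolutionIntegrabilityCondition} and using the standard identity $\kappa_{A}{}^{C}\kappa_{BC} = -\tfrac{1}{2}\chi^{2}\,\epsilon_{AB}$, with $\chi^{2}:=\kappa_{PQ}\kappa^{PQ}$, one finds that $\Psi_{ABPQ}\kappa^{PQ}$ is a nonzero scalar multiple of $\psi\chi^{2}\kappa_{AB}$. Combining this with the previous step yields $\mathcal{F}_{AB} = \varkappa\,\kappa_{AB}$ with $\varkappa$ proportional to $\psi\chi^{2}$. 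Non-vanishing of $\varkappa$ follows because the generalised Kerr-NUT hypothesis excludes Petrov types~O and~N: type~O would force $\psi = 0$, while type~N would force $\kappa_{AB} = \alpha_A\alpha_B$ to be algebraically special, whence $\chi^{2} = 0$. The equivalent statement $\Psi_{ABPQ}\mathcal{F}^{PQ} = \varphi\,\mathcal{F}_{AB}$ is then immediate by substituting $\mathcal{F}^{PQ} = \varkappa\kappa^{PQ}$ and reusing the same computation of $\Psi_{ABPQ}\kappa^{PQ}$, giving a nonzero $\varphi$ proportional to $\psi\chi^{2}$.

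The main obstacle is the bookkeeping in the middle step: one must carefully track the $\epsilon$-contractions in the commutator and in the first-step decomposition, verify that the primed box annihilates $\kappa_{BC}$ in vacuum, and check that the reordered first-derivative terms reorganise into a single $\phi_{AB}\epsilon_{A'B'}$ contribution on the self-dual side. Once the intermediate identity $\mathcal{F}_{AB} \propto \Psi_{ABPQ}\kappa^{PQ}$ is in place, the rest is algebraic and reflects the geometric fact that in a Petrov type~D vacuum the only symmetric valence-two spinor one can build out of $\kappa_{AB}$ and $\Psi_{ABCD}$ is $\kappa_{AB}$ itself up to scale.
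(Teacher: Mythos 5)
Your proposal is correct and takes essentially the same route as the paper: a direct computation using \eqref{ComplexKillingVector}, the Killing spinor equation and the vacuum commutators (together with the reality of $\xi_{AA'}$) to obtain $\mathcal{F}_{AB}$ as a fixed non-zero multiple of $\Psi_{ABPQ}\kappa^{PQ}$ (the paper's constant is $\tfrac{3}{4}$), followed by the type D algebra showing that $\Psi_{ABPQ}\kappa^{PQ}$ is a non-vanishing multiple of $\kappa_{AB}$. Your normalisation via $\chi^{2}=\kappa_{PQ}\kappa^{PQ}$ is equivalent to the paper's use of $\kappa_{AB}=\alpha_{(A}\beta_{B)}$ with $\varsigma=\alpha_{A}\beta^{A}\neq 0$, and your non-degeneracy argument (excluding types N and O) matches the paper's.
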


\begin{proof}
 One proceeds by a direct computation. One notes
that the expressions \eqref{KillingForm},
\eqref{SelfDualKillingForm} and \eqref{KFSpinor} assume that the
Killing vector $\xi_{AA'}$ is real. Using equations \eqref{ComplexKillingVector} and \eqref{KFSpinor} and the vacuum commutators for $\nabla_{AA'}$ one finds that
\[
\mathcal{F}_{AB} = \frac{3}{4} \Psi_{ABPQ} \kappa^{PQ}. 
\]
As the spacetime is assumed to be of Petrov type D one has that
$\kappa_{AB}=\alpha_{(A}\beta_{B)}$ with $\alpha_A \beta^A
=\varsigma$, where $\varsigma$ is a non-vanishing scalar. From equation \eqref{SolutionIntegrabilityCondition} one finds then that
$\Psi_{ABCD}\negthinspace =\psi \alpha_{(A}\alpha_B \beta_C \beta_{D)}$, so that
\[
\Psi_{ABPQ} \kappa^{PQ} = - \frac{1}{3} \psi\varsigma^2 \kappa_{AB}, 
\] 
and finally that
\[
\mathcal{F}_{AB} = -\frac{1}{4} \psi \varsigma^2 \kappa_{AB},
\]
 from where the desired result follows.
\end{proof}

\medskip
The property that allows us to single out the Kerr spacetime out of
the generalised Kerr-NUT class is given by the following result proved by Mars
\cite{Mar99,Mar00}.

\begin{theorem}[Mars 1999, 2000]
\label{Theorem:MMars}
Let $(\mathcal{M},g_{\mu\nu})$ be a smooth vacuum spacetime with the following properties:
\begin{itemize}
\item[(i)] $(\mathcal{M},g_{\mu\nu})$ admits a Killing vector $\xi_{AA'}$ such that, $\mathcal{F}_{AB}$, the spinorial counterpart of the Killing form of $\xi_{AA'}$ satisfies
\[
\Psi_{ABPQ} \mathcal{F}^{PQ} = \varphi \mathcal{F}_{AB},
\]
with $\varphi$ a scalar;

\item[(ii)] $(\mathcal{M},g_{\mu\nu})$ contains a stationary
asymptotically flat 4-end, and $\xi_{AA'}$ tends to a time translation at
infinity, and the Komar mass of the asymptotic end is non-zero.
 \end{itemize}
Then $(\mathcal{M},g_{\mu\nu})$ is locally isometric to the Kerr spacetime. 
\end{theorem}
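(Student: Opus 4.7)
The plan is to follow the strategy pursued by Mars in \cite{Mar99,Mar00}: reinterpret the eigenspinor condition (i) as the vanishing of a spacetime invariant---the so-called Mars--Simon tensor---and then use the asymptotic data (ii) to fix the free Kerr parameters. The first step is algebraic: condition (i) says that $\mathcal{F}_{AB}$ is an eigenspinor of the Weyl endomorphism $\Psi_{AB}{}^{PQ}$. Running the computation of Lemma \ref{Lemma:PrincipalDirections} in reverse shows that the principal spinors of $\Psi_{ABCD}$ and $\mathcal{F}_{AB}$ must coincide wherever $\mathcal{F}_{AB}\neq 0$, so by Theorem \ref{Theorem:TypeDhasalwaysaKS} the spacetime is locally of Petrov type D on that open set, and $\mathcal{F}_{AB}$ plays the role---up to a rescaling---of a Killing spinor candidate.

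Next I would introduce a complex Ernst--type potential $\sigma$ constructed from $\xi^{AA'}$ and $\mathcal{F}_{AB}$. Using the Killing equation for $\xi^{AA'}$, the definitions \eqref{SelfDualKillingForm}--\eqref{KFSpinor}, and the vacuum Bianchi identities, one verifies that a suitable contraction of $\mathcal{F}_{AB}$ with $\xi^{B}{}_{A'}$ is closed, so that $\sigma$ is well defined on any simply connected domain where $\xi^{AA'}$ does not vanish. The core of Mars's argument is then to combine the eigenspinor condition (i) with the Killing and Bianchi equations to derive an algebraic identity of the schematic form $\mathcal{F}_{AB}\mathcal{F}^{AB} = P(\sigma)$ for an explicit low--order polynomial $P$ determined by integration constants; equivalently, a tensorial quantity built from $\xi$, $\mathcal{F}$ and the self--dual Weyl tensor---the Mars--Simon tensor---vanishes identically on the region under consideration.

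The final step is to combine this algebraic rigidity with the asymptotic hypotheses. Stationarity together with asymptotic flatness of the distinguished end and the assumption that $\xi^{AA'}$ tends to a time translation there fix the leading behaviour of $\sigma$ at infinity; non--vanishing of the Komar mass $m$ ensures that $\sigma$ is non--constant and determines the normalisation constant in the identity $\mathcal{F}_{AB}\mathcal{F}^{AB} = P(\sigma)$. Matching this asymptotic expansion against the known form of the Ernst potential for the Kerr family identifies the two free constants with the Kerr mass $m$ and a specific angular momentum $a$, and the polynomial relation then forces the metric to coincide with that Kerr solution on an open neighbourhood of the asymptotic end. Analyticity of stationary vacuum solutions propagates the isometry to the whole connected component of the region where $\mathcal{F}_{AB}\neq 0$, giving the claimed local isometry.

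The main obstacle I anticipate is precisely the passage from the open set $\{\mathcal{F}_{AB}\neq 0\}$ to a global statement: one must control the zero set of $\mathcal{F}_{AB}$ (which coincides with the set of fixed points of $\xi^{AA'}$ and with the horizon/axis in Kerr) and show that the local identification extends across it. This is exactly the technical heart of \cite{Mar99,Mar00}, where hypotheses (i) and (ii) are tailored to make the extension work; for our purposes the cleanest route is to invoke those results directly rather than reprove the extension argument here.
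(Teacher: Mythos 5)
The first thing to note is that the paper does not prove this statement at all: it is imported as Mars's theorem, attributed and cited to \cite{Mar99,Mar00}, and used as a black box in the proof of Theorem \ref{Theorem:SpacetimeCharacterisation}. So there is no internal proof to compare yours against. Your text is essentially a summary of Mars's published strategy (Ernst-type potential, vanishing of the Mars--Simon tensor, asymptotic matching against the Kerr Ernst potential, extension across the zero set of $\mathcal{F}_{AB}$), and your last paragraph concedes that the technical heart should be invoked from \cite{Mar99,Mar00} rather than reproved. In that sense your proposal ends up doing exactly what the paper does --- citing Mars --- with a roadmap attached; that is acceptable for the role the theorem plays here, but it should not be mistaken for a self-contained proof, since the load-bearing steps (the derivation of the functional identity for $\mathcal{F}_{AB}\mathcal{F}^{AB}$, the vanishing of the Mars--Simon tensor, and the control of the set where $\mathcal{F}_{AB}=0$) are only gestured at.

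One step of your outline is also incorrect as mathematics. The eigenspinor condition $\Psi_{ABPQ}\mathcal{F}^{PQ}=\varphi\,\mathcal{F}_{AB}$ does not by itself imply Petrov type D, and ``running Lemma \ref{Lemma:PrincipalDirections} in reverse'' is not available, because that lemma presupposes the existence of a Killing spinor --- which is precisely what is not yet known at this stage. Concretely, if $\mathcal{F}_{AB}$ is non-null and one writes $\mathcal{F}_{AB}=\alpha_{(A}\beta_{B)}$, the eigenvalue condition only forces the dyad components $\psi_1=\psi_3=0$ of the Weyl spinor in that dyad; $\psi_0$ and $\psi_4$ remain unconstrained, so $\alpha_A,\beta_A$ need not be principal spinors of $\Psi_{ABCD}$ and Theorem \ref{Theorem:TypeDhasalwaysaKS} cannot be applied. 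In Mars's argument the alignment of principal directions and the type D property are consequences of the full proportionality $\Psi_{ABCD}\propto\mathcal{F}_{(AB}\mathcal{F}_{CD)}$, which has to be \emph{derived} from condition (i) together with the Bianchi identities and the asymptotic hypotheses (this is where the non-vanishing Komar mass and the time-translation behaviour of $\xi_{AA'}$ enter), not assumed at the outset. If you keep the proposal, either repair that first step or fold it into the part that is explicitly delegated to \cite{Mar99,Mar00}.
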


\medskip
\noindent
\textbf{Remark.} A stationary asymptotically flat 4-end is an open
submanifold $\mathcal{M}_\infty \subset \mathcal{M}$ diffeomorphic to
$I\times (\Real^3 \setminus \mathcal{B}_R)$, where $I\subset \Real$ is
an open interval and $\mathcal{B}_R$ a closed ball of radius $R$ such
that in the local coordinates $(t,x^i)$ defined by the diffeomorphism, the
metric $g_{\mu\nu}$ satisfies 
\begin{eqnarray*}
&& |g_{\mu\nu}-\eta_{\mu\nu}| + |r\partial_i g_{\mu\nu}| \leq C
r^{-\alpha}, \\
&& \partial_t g_{\mu\nu} =0,
\end{eqnarray*} 
with $C$, $\alpha$ constants, $\eta_{\mu\nu}$ is the Minkowski
metric and $r=\sqrt{(x^1)^2 + (x^2)^2 + (x^3)^2}$. In particular
$\alpha\geq 1$. The definition of the Komar mass is given in
\cite{Kom58}. In this context it coincides with the ADM mass of the
spacetime.

\subsection{Non-degeneracy of the Petrov type of the Kerr spacetime}
Finally, we note the following result about the non-degeneracy of the
Petrov type of the Kerr spacetime \cite{Mar00}.

\begin{proposition}[Mars 2000]
\label{Proposition:Kerrdoesnotdegenerate}
The Petrov type of the Kerr spacetime is always D ---there are no
points where it degenerates to type N or O. 
\end{proposition}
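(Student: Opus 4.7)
My plan is to prove the statement by direct inspection of the Weyl spinor of Kerr in a principal null tetrad. The Petrov type of a vacuum spacetime is encoded in the two invariants
\[
\mathcal{I} = \Psi_{ABCD}\Psi^{ABCD}, \qquad \mathcal{J} = \Psi_{AB}{}^{EF}\Psi_{EF}{}^{GH}\Psi_{GH}{}^{AB};
\]
types N, O (and III) are precisely the algebraic classes on which both $\mathcal{I}$ and $\mathcal{J}$ vanish, while type D requires both to be non-zero and to satisfy the syzygy $\mathcal{I}^3 = 6\mathcal{J}^2$. It therefore suffices to show that $\mathcal{I}$ does not vanish at any point of the maximal analytic extension.

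First I would fix a Kinnersley tetrad aligned with the two repeated principal null directions of Kerr, in which the only non-vanishing Newman--Penrose Weyl scalar is
\[
\Psi_2 = -\frac{M}{(r-\mathrm{i}a\cos\theta)^3},
\]
so that $\Psi_{ABCD} = 6\Psi_2\, o_{(A}o_B\iota_C\iota_{D)}$ with $o_A\iota^A = 1$. Substituting this expression into the definitions of the invariants yields $\mathcal{I} = 6\Psi_2^2$ and $\mathcal{J} = -6\Psi_2^3$ up to a convention-dependent constant, so both invariants are algebraic multiples of powers of $\Psi_2$ and vanish only where $\Psi_2$ does. The syzygy $\mathcal{I}^3 = 6\mathcal{J}^2$ for type D then holds automatically.

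Next I would observe that $\Psi_2$ has no zeros on the spacetime: $M \neq 0$ by definition of the Kerr class, and $|r-\mathrm{i}a\cos\theta|\to\infty$ only in the asymptotic region, not at any point of the manifold. Conversely, the only singular locus of $\Psi_2$ is the ring $\{r=0,\ \cos\theta=0\}$, which is precisely what is excised to construct the maximal analytic extension. Hence $\Psi_2$ is finite and non-zero throughout the Kerr manifold, so $\mathcal{I}\neq 0$ and $\mathcal{J}\neq 0$ everywhere, ruling out Petrov types N and O.

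The only non-trivial point is global coverage: one must verify that the above expression for $\Psi_2$ remains valid on every patch of the maximal extension, including the interior regions, the asymptotically flat block with $r<0$, and their mirror copies beyond the bifurcation surfaces. This is handled by noting that $r$ and $\cos\theta$ extend as analytic functions on the Boyer--Lindquist/Kerr--Schild blocks used to build the maximal extension, and that the expression for $\Psi_2$ in Kerr--Schild form is manifestly analytic away from the ring singularity. Since the ring is not in the spacetime, the conclusion holds globally.
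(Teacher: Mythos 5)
Your argument is essentially correct, but note that the paper itself does not prove this proposition at all: it is quoted as a known result and attributed to Mars \cite{Mar00}, so there is no in-paper proof to compare against. What you supply is the standard direct verification, and it does the job: in a principal (Kinnersley) tetrad the only nonvanishing Newman--Penrose scalar is $\Psi_2=-M/(r-\mathrm{i}a\cos\theta)^3$, so $\Psi_{ABCD}=6\Psi_2\,o_{(A}o_B\iota_C\iota_{D)}$ with $o_A\iota^A\neq 0$, which is manifestly type D wherever $\Psi_2\neq 0$, and $\Psi_2$ has no zeros on the manifold since $r-\mathrm{i}a\cos\theta$ is finite there and $M\neq 0$. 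Two small points deserve care. First, your invariant-theoretic phrasing is slightly off: the conditions $\mathcal{I},\mathcal{J}\neq 0$ together with the syzygy $\mathcal{I}^3=6\mathcal{J}^2$ (in the spinor-contraction normalisation) characterise type II \emph{or} D, not D alone; this does not damage your proof because the explicit two-distinct-repeated-principal-spinor form gives D directly, and for the actual content of the proposition (no degeneration to N or O) the single condition $\mathcal{I}\neq 0$ already suffices, since N, O and III are exactly the classes with $\mathcal{I}=\mathcal{J}=0$. Second, the Kinnersley tetrad degenerates at the horizons and bifurcation spheres, so the pointwise statement there should be run through the scalar invariants $\mathcal{I}$, $\mathcal{J}$ (or through Kerr--Schild/advanced coordinates), which extend analytically across those sets and remain manifestly nonzero because $r$ and $\cos\theta$ extend as analytic functions with $r-\mathrm{i}a\cos\theta\neq 0$ away from the excised ring; you flag exactly this in your final paragraph, so the global-coverage gap is addressed. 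With the syzygy remark corrected, your proof stands as a self-contained substitute for the external citation, which is arguably a gain in completeness over the paper's bare appeal to Mars; the paper's choice buys brevity and defers to a reference where the statement is established in the broader context of Mars's uniqueness analysis.
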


\subsection{A characterisation of the Kerr spacetime using Killing spinors}

As a consequence of Theorem \ref{Theorem:TypeDhasalwaysaKS} and propositions \ref{Proposition:KSrendersKV}, \ref{Proposition:Kerrdoesnotdegenerate}
one obtains the following invariant characterisation of the Kerr
spacetimes. From this characterisation we will extract, in the sequel, a
characterisation of asymptotically Euclidean Kerr data.

\begin{theorem}
\label{Theorem:SpacetimeCharacterisation}
 Let $(\mathcal{M},g_{\mu\nu})$ be a smooth vacuum 
 spacetime such that 
\begin{equation*}
\Psi_{ABCD}\neq 0 ,\qquad
\Psi_{ABCD}\Psi^{ABCD}\neq 0 
\end{equation*}
on $\mathcal{M}$.
Then $(\mathcal{M},g_{\mu\nu})$ is locally 
isometric to the Kerr spacetime if and only if the following
conditions are satisfied:
\begin{itemize}
\item[(i)] there exists a Killing spinor, $\kappa_{AB}$, such that the associated Killing
  vector, $\xi_{AA'}$, is real;
\item[(ii)] the spacetime $(\mathcal{M},g_{\mu\nu})$ has a
  stationary asymptotically flat 4-end with non-vanishing mass in which $\xi_{AA'}$
  tends to a time translation.
\end{itemize}
\end{theorem}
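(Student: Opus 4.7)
The proof naturally splits into two directions, with the reverse direction being the substantive one.

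\medskip
\noindent
\textbf{Forward direction (Kerr $\Rightarrow$ (i), (ii) and non-degeneracy).} I would appeal to three facts that are already assembled in the excerpt. First, Proposition \ref{Proposition:Kerrdoesnotdegenerate} directly gives $\Psi_{ABCD}\neq 0$ at every point, and since Kerr is of Petrov type D (so that $\Psi_{ABCD}=\psi\alpha_{(A}\alpha_B\beta_C\beta_{D)}$ with $\alpha_A\beta^A\neq 0$) a routine contraction gives $\Psi_{ABCD}\Psi^{ABCD}\neq 0$ as well. Second, because Kerr is of Petrov type D, Theorem \ref{Theorem:TypeDhasalwaysaKS} supplies a Killing spinor, and Proposition \ref{Proposition:KSrendersKV} guarantees that the associated Killing vector $\xi_{AA'}$ is real. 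Third, the standard construction of the maximal analytic extension of the Kerr metric (Boyer--Lindquist, Carter) provides a stationary asymptotically flat 4-end of non-vanishing mass in which $\xi_{AA'}$ reduces to $\partial_t$ at infinity, giving (ii).

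\medskip
\noindent
\textbf{Reverse direction ((i), (ii) and non-degeneracy $\Rightarrow$ locally Kerr).} The plan is to reduce the hypotheses to those of Mars' Theorem \ref{Theorem:MMars} and apply it. The steps are:

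\begin{enumerate}
\item From hypothesis (i), the spacetime admits a valence-2 Killing spinor $\kappa_{AB}$. Since $\Psi_{ABCD}\neq 0$, Theorem \ref{Theorem:TypeDhasalwaysaKS} forces the Petrov type to be D or N (type O would require $\Psi_{ABCD}=0$).
\item Using the second non-degeneracy condition, I would rule out Petrov type N pointwise: in type N one has $\Psi_{ABCD}=\alpha_A\alpha_B\alpha_C\alpha_D$, and a direct contraction with $\Psi^{ABCD}$ yields $\Psi_{ABCD}\Psi^{ABCD}=0$, contradicting the hypothesis. Hence the spacetime is of Petrov type D everywhere.
\item Combined with the reality of $\xi_{AA'}=\nabla^B{}_{A'}\kappa_{AB}$, this places $(\mathcal{M},g_{\mu\nu})$ in the generalised Kerr--NUT class, so Lemma \ref{Lemma:PrincipalDirections} applies. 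This yields the algebraic alignment
\[
\Psi_{ABPQ}\mathcal{F}^{PQ}=\varphi\,\mathcal{F}_{AB},
\]
where the proof of Lemma \ref{Lemma:PrincipalDirections} also shows $\mathcal{F}_{AB}=-\tfrac{1}{4}\psi\varsigma^2\kappa_{AB}$, so $\mathcal{F}_{AB}$ is non-trivial wherever $\Psi_{ABCD}$ is. This is exactly condition (i) of Theorem \ref{Theorem:MMars}.
\item Condition (ii) of Theorem \ref{Theorem:MMars} is nothing but hypothesis (ii) of the statement to be proved. Applying Mars' theorem therefore yields that $(\mathcal{M},g_{\mu\nu})$ is locally isometric to the Kerr spacetime.
\end{enumerate}

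\medskip
\noindent
\textbf{Where the work lies.} The two substantive reductions are the elimination of Petrov type N using $\Psi_{ABCD}\Psi^{ABCD}\neq 0$ and the verification that the Mars alignment condition (i) holds under our hypotheses; the former is a one-line algebraic observation and the latter is precisely Lemma \ref{Lemma:PrincipalDirections}. The only conceptual subtlety is noting that the non-degeneracy conditions are assumed globally on $\mathcal{M}$, which is what allows the conclusion of Petrov type D everywhere (so that Lemma \ref{Lemma:PrincipalDirections} can be invoked pointwise) and avoids the need for a separate treatment of exceptional loci. With these observations in place, the theorem follows by direct assembly of Theorem \ref{Theorem:TypeDhasalwaysaKS}, Proposition \ref{Proposition:KSrendersKV}, Lemma \ref{Lemma:PrincipalDirections}, Proposition \ref{Proposition:Kerrdoesnotdegenerate} and Theorem \ref{Theorem:MMars}; no further computation is needed.
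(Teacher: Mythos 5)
Your proposal is correct and follows essentially the same route as the paper's own proof: exclude Petrov types N and O via the non-degeneracy hypotheses, use the reality of $\xi_{AA'}$ to place the spacetime in the generalised Kerr--NUT class so that Lemma \ref{Lemma:PrincipalDirections} supplies the alignment condition, and then invoke Mars' Theorem \ref{Theorem:MMars}. The only difference is that you spell out the (straightforward) necessity direction, which the paper dismisses as clear.
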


\begin{proof}
Clearly, the conditions (i) and (ii) are necessary to obtain the Kerr
spacetime. For the sufficiency, assume that (i) holds, that is, the spacetime has a
Killing spinor $\kappa_{AB}$ such that the associated Killing vector
$\xi_{AA'}$ is real. Accordingly, the spacetime must be of type $D$, $N$ or
$O$. As $\Psi_{ABCD}\neq 0$ and
$\Psi_{ABCD}\Psi^{ABCD}\neq 0$ by hypothesis, the spacetime cannot be
of types $N$ or $O$. By the reality of $\xi_{AA'}$ it must be a
generalised Kerr-NUT spacetime and the conclusion of Lemma
\ref{Lemma:PrincipalDirections} follows. Now, if (ii) holds then by
Theorem \ref{Theorem:MMars}, the spacetime has to be locally the Kerr spacetime. 

\end{proof}

\medskip
\noindent
\textbf{Remark.} It is of interest to see whether the conditions
$\Psi_{ABCD}\neq 0$ and $\Psi_{ABCD}\Psi^{ABCD}\neq 0$ can be
removed. An analysis along what is done in the proof of Theorem
\ref{Theorem:MMars} may allow to do this. This will be discussed
elsewhere.

\section{Space spinors: general theory}
\label{Section:SpaceSpinors}

As mentioned in the introduction, in this article we will make use of
a space spinor formalism to project the longitudinal and transversal
parts of the Killing spinor equation \eqref{KillingSpinorEquation}
with respect to the timelike vector field $\tau^\mu$. The space spinor
formalism was originally introduced in \cite{Som80}. Here we follow conventions and notations similar to those in \cite{GarVal08a}. For
completeness, we introduce all the relevant notation here.

\subsection{Basic definitions}
Let $\tau^\mu$ be a timelike vector field on $(\mathcal{M},g_{\mu\nu})$
with normalisation $\tau_\mu \tau^\mu=2$. Define the projector
\[
h_{\mu\nu}\equiv g_{\mu\nu} -\frac{1}{2} \tau_\mu\tau_\nu.  
\]
We also define the following tensors:
\begin{eqnarray*}
&& K_{\mu\nu} = -h_\mu{}^\lambda h_\nu{}^\rho \nabla_\lambda
\tau_\rho, \\
&& K^\mu = -\frac{1}{2} \tau^\nu \nabla_\nu \tau^\mu.
\end{eqnarray*}
Note that it is not being assumed that $\tau^\mu$ is hypersurface
orthogonal. Thus, the tensor $K_{\mu\nu}$ as defined above is not
necessarily the second fundamental form of a foliation of the
spacetime $(\mathcal{M},g_{\mu\nu})$. 

\medskip
Let $\tau^{AA'}$ denote the
spinorial counterpart of $\tau^\mu$. One has that $\tau^{AA'}\equiv
\sigma_{\mu}{}^{AA'}\tau^\mu$ so that
\[
 \tau_{AA'}\tau^{AA'}=2, \quad \tau^A{}_{A'}\tau^{BA'}=\epsilon^{AB}.
\]
 The spinor $\tau^{AA'}$ allows to introduce the \emph{spatial solder
forms}   
\[
\sigma_\mu{}^{AB}\equiv\sigma_\mu{}^{(A}{}_{A'}\tau^{B)A'}, \quad 
\sigma^\mu{}_{AB} \equiv \tau_{(B}{}^{A'} \sigma^\mu{}_{A)A'},
\]
so that one has 
\begin{eqnarray*}
&& \sigma^\mu{}_{AB} \sigma_\nu{}^{AB}=h^\mu{}_\nu, \quad g_{\mu\nu}\sigma^\mu{}_{AB} \sigma^\nu{}_{CD}=h_{\mu\nu}\sigma^\mu{}_{AB} \sigma^\nu{}_{CD} = \frac{1}{2}( \epsilon_{AC}\epsilon_{BD} + \epsilon_{AD}\epsilon_{BC}), \\
&& \quad \tau_\mu\sigma^\mu{}_{AB}=0, \quad \epsilon_{AB}\epsilon_{A'B'} = \frac{1}{2}\tau_{AA'}\tau_{BB'} + h_{\mu\nu} \sigma^\mu{}_{AE} \sigma^\nu{}_{BF} \tau^E{}_{A'}\tau^F{}_{B'}.
\end{eqnarray*}

\medskip
If $\tau^\mu$ is hypersurface orthogonal, then $h_{ab}$, $K_{ab}$,
$K^a$, $\sigma_a{}^{\mathbf{AB}}$, $\sigma^a{}_{\mathbf{AB}}$ denote, respectively , the
pull-backs to the hypersurfaces orthogonal to $\tau^\mu$ of $h_{\mu\nu}$,
$K_{\mu\nu}$, $K^\mu$, $\sigma_\mu{}^{\mathbf{AB}}$,
$\sigma^\mu{}_{\mathbf{AB}}$ ---note that these objects are spatial, in
the sense that their contraction with $\tau^\mu$ vanishes, and thus,
their pull-backs are well defined.  The
relevant properties of these tensors apply to their pull-backs. Often
we will begin with a spacelike hypersurface $\mathcal{S}$, and define
$\tau^\mu$ as the normal to this hypersurface, we then automatically get the desired properties.

\subsection{Space spinor splittings}
The spinor $\tau^{AA'}$ can be used to construct a formalism
consisting of unprimed indices. For example, given a spacetime spinor
$\zeta_{AA'}$ one can write
\begin{equation}
\label{SpaceSpinorSplit}
\zeta_{AA'} = \frac{1}{2}\tau_{AA'} \zeta-
\tau_{A'}{}^P \zeta_{PA},
\end{equation}
with
\[
\zeta \equiv \tau^{PP'}\zeta_{PP'}, \quad \zeta_{AB} \equiv \tau_{(A}{}^{P'}\zeta_{B)P'}.
\]
This decomposition can be extended in a direct manner to higher
valence spinors.  Any spatial tensor has a space-spinor
counterpart. For example, if $T_\mu{}^\nu$ is a spatial tensor
(i.e. $\tau^\mu T_\mu{}^\nu=0$ and $\tau_\nu T_\mu{}^\nu=0$), then
its  space spinor counterpart is given by $T_{AB}{}^{CD}=\sigma^\mu{}_{AB}\sigma_\nu{}^{CD}T_\mu{}^\nu$. 

\subsection{Spinorial covariant derivatives}
Applying formally the space spinor split given by
\eqref{SpaceSpinorSplit} to the spacetime spinorial covariant
derivative $\nabla_{AA'}$ one obtains
\[
\nabla_{AA'}=\frac{1}{2}\tau_{AA'}\nabla-\tau_{A'}{}^B\nabla_{AB} ,
\]
where we have introduced the differential operators 
\begin{eqnarray*}
&& \nabla \equiv \tau^{AA'}\nabla_{AA'},\\
&& \nabla_{AB} \equiv \tau^{A'}{}_{(A}\nabla_{B)A'}=\sigma^\mu{}_{AB}\nabla_\mu.
\end{eqnarray*}
The latter is referred to as the \emph{Sen connection}. Let $K_{ABCD}$
denote the space spinor counterpart of the tensor $K_{\mu\nu}$.  One
has that 
\[
K_{ABCD}=\tau_D{}^{C'}\nabla_{AB}\tau_{CC'}, \quad K_{ABCD}=K_{(AB)(CD)}.
\]
In the sequel, it will be convenient to write $K_{ABCD}$ in terms of
its irreducible components. For this define
\[
\Omega_{ABCD}\equiv K_{(ABCD)}, \quad  \Omega_{AB}\equiv
K_{(A}{}^C{}_{B)C},  \quad K\equiv\updn{K}{AB}{AB},
\] 
so that one can write
\begin{equation}
\label{KSplit}
K_{ABCD}=\Omega_{ABCD}-\frac{1}{2}\epsilon_{A(C}\Omega_{D)B}-\frac{1}{2}\epsilon_{B(C}\Omega_{D)A}-\frac{1}{3}\epsilon_{A(C}\epsilon_{D)B}K,
\end{equation}
If $\tau^\mu$ is hypersurface orthogonal, then $\Omega_{AB}=0$, and
thus $K_{\mu\nu}$ can be regarded as the extrinsic curvature of the
leaves of a foliation of the spacetime $(\mathcal{M},g_{\mu\nu})$. Let
$K_{AB}$ denote the spinorial counterpart of the acceleration $K_\mu$. It has the symmetry $K_{AB}=K_{(AB)}$ 
and satisfies  
\[
K_{AB}=\tau_B{}^{A'}\nabla\tau_{AA'} .
\]

\medskip
If $\tau^\mu$ is hypersurface orthogonal then the pull-back, $D_a$, of  $D_\mu \equiv
h^\nu{}_\mu \nabla_\nu$ corresponds to the Levi-Civita connection of
the intrinsic metric of the leaves of the foliation of hypersurfaces
orthogonal to $\tau^\mu$. Its spinorial counterpart is given by 
$D_{AB}=D_{(AB)}=\sigma^a{}_{AB}D_a$. The Sen connection,
$\nabla_{AB}$, and the Levi-Civita connection, $D_{AB}$, are related to
each other through the spinor $K_{ABCD}$. For example, for a valence 1
spinor $\pi_C$ one has that
\[
\nabla_{AB} \pi_C = D_{AB}\pi_C + \frac{1}{2} K_{ABC}{}^{D} \pi_D,
\]
with the obvious generalisations for higher valence spinors.

\subsection{Hermitian conjugation}
Given a spinor $\pi_{A}$, we define its \emph{Hermitian conjugate} via
\[
\hat{\pi}_{A} \equiv \tau_{A}{}^{E'}\bar{\pi}_{E'}.
\] 
The Hermitian conjugate can be extended to higher valence symmetric spinors in the
obvious way.  The spinors $\nu_{AB}$ and $\xi_{ABCD}$ are said to be
real if 
\[
\hat{\nu}_{AB}=-\nu_{AB},\quad \hat{\xi}_{ABCD}=\xi_{ABCD}.
\]
It can be verified that $\nu_{AB}
\hat{\nu}^{AB}, \; \xi_{ABCD} \hat{\xi}^{ABCD}\geq 0$. If the spinors
are real, then there exist real spatial tensors $\nu_a$, $\xi_{ab}$ such that
$\nu_{AB}$ and $\xi_{ABCD}$ are their spinorial counterparts. 

Notice that the differential operator $D_{AB}$ is real in the sense
that
\[
\widehat{D_{AB}\pi_C}=-D_{AB}\hat\pi_C.
\]
Crucially, however, one has that
\[
\widehat{\nabla_{AB}\pi_C}=-\nabla_{AB}\hat\pi_C + \tfrac{1}{2}K_{ABC}{}^D\hat\pi_D.
\]

\subsection{Commutators}
The analysis in the sequel will require intensive use of the
commutators of the covariant derivative operators $\nabla$ and
$\nabla_{AB}$. These can be derived from a space spinor splitting of
the commutator of $\nabla_{AA'}$. 

\medskip
Define
\[
\square_{AB} \equiv \nabla_{C'(A}\nabla_{B)}{}^{C'}, \quad
\widehat{\square}_{AB} \equiv \tau_A{}^{A'} \tau_B{}^{B'}\square_{A'B'}=\tau_A{}^{A'} \tau_B{}^{B'}\nabla_{C(A'}\nabla_{B')}{}^{C}.
\]
The action of these operators on a spinor $\pi_A$ is given by
\[
\square_{AB}\pi_C =\Psi_{ABCQ}\pi^Q + \tfrac{1}{2}\Lambda
\epsilon_{C(A}\pi_{B)}, \quad \widehat{\square}_{AB} \pi_C
=\tau_A{}^{A'}\tau_B{}^{B'} \Phi_{FCA'B'}\pi^F,
\]
where $\Phi_{ABA'B'}$ and $\Lambda$ denote respectively, the spinor
counterparts of the tracefree part of the Ricci tensor $R_{\mu\nu}$ and the Ricci scalar $R$
of the spacetime metric $g_{\mu\nu}$. Clearly, the above expressions
simplify in the case of a vacuum spacetime, where we have
$\Phi_{ABA'B'}=0$, $\Lambda=0$.

\medskip
In terms of $\square_{AB}$ and $\widehat{\square}_{AB}$,  the
commutators of $\nabla$ and $\nabla_{AB}$ read
\begin{subequations}
\begin{align}
[\nabla,\nabla_{AB}] ={}&
\widehat\square_{AB}-\square_{AB}-\tfrac{1}{2}K_{AB}\nabla+K^D{}_{(A}\nabla_{B)D}-K_{ABCD}\nabla^{CD},\label{commutator1}\\
[\nabla_{AB},\nabla_{CD}] ={}& \frac{1}{2}\left( 
\epsilon_{A(C}\square_{D)B} + \epsilon_{B(C}\square_{D)A} 
\right) +\frac{1}{2}\left( 
\epsilon_{A(C}\widehat\square_{D)B} 
+\epsilon_{B(C}\widehat\square_{D)A} 
\right)\nonumber \\
&+\frac{1}{2}(K_{CDAB}\nabla-K_{ABCD}\nabla) 
+K_{CDQ(A}\nabla_{B)}{}^Q-K_{ABQ(C}\nabla_{D)}{}^Q. \label{commutator2}
\end{align}
\end{subequations}

\subsection{Decomposition of the Weyl spinor}
The Hermitian conjugation can  be used to decompose the Weyl spinor
$\Psi$ in terms of its electric and magnetic parts via
\[
E_{ABCD} \equiv \frac{1}{2}\left( \Psi_{ABCD} +
  \hat{\Psi}_{ABCD}\right), \quad B_{ABCD}\equiv
\frac{\mbox{i}}{2}\left(\hat{\Psi}_{ABCD} - \Psi_{ABCD} \right),
\]
so that
\[
\Psi_{ABCD} = E_{ABCD} + \mbox{i}B_{ABCD}. 
\]
The spinorial Bianchi identity  $\nabla^{AA'}\Psi_{ABCD}=0$ can be
split using the space spinor formalism to render
\begin{subequations}
\begin{eqnarray}
&& \nabla\Psi_{ABCD}=2\nabla^E{}_A\Psi_{BCDE}, \label{Bianchi1}\\
&& \nabla^{AB}\Psi_{ABCD}=0. \label{Bianchi2}
\end{eqnarray}
\end{subequations}

\medskip
Crucial for our applications is that the spinors $E_{ABCD}$ and
$B_{ABCD}$ can be expressed in terms of quantities intrinsic to a
hypersurface $\mathcal{S}$. More precisely, if $\Omega_{AB}=0$, one has that 
\begin{subequations}
\begin{eqnarray}
&&  E_{ABCD}= -r_{(ABCD)} + \tfrac{1}{2}\Omega_{(AB}{}^{PQ}\Omega_{CD)PQ}
- \tfrac{1}{6}\Omega_{ABCD}K, \label{Weyl:Electric}\\
&&  B_{ABCD}=-\mbox{i}\ D^Q{}_{(A}\Omega_{BCD)Q}, \label{Weyl:Magnetic}
\end{eqnarray}
\end{subequations}
where $r_{ABCD}$ is the space spinor counterpart of the  Ricci tensor
of the intrinsic metric of the hypersurface $\mathcal{S}$. 

\subsection{Space spinor expressions in Cartesian coordinates}
In some occasions it will be necessary to give spinorial expressions
in terms of Cartesian or asymptotically Cartesian frames and
coordinates. For this we make use of the spatial Infeld-van der
Waerden symbols $\sigma^{i}{}_{\mathbf{A}\mathbf{B}}$,
$\sigma_{i}{}^{\mathbf{A}\mathbf{B}}$. Given $x^{i}, \; \xi_{i}\in
\Real^3$ we shall follow the convention that
\[
x^{\mathbf{AB}} \equiv
\sigma_{i}{}^{\mathbf{A}\mathbf{B}} x^{i},
\quad \xi_{\mathbf{AB}} \equiv
\sigma^{i}{}_{\mathbf{A}\mathbf{B}} \xi_{i},
\]
with
\begin{equation}
x^{\mathbf{AB}}= \frac{1}{\sqrt{2}}
\left(
\begin{array}{cc}
-x^1 + \mbox{i}x^2 & x^3 \\
x^3 & x^1 +\mbox{i} x^2
\end{array}
\right),
\quad 
\xi_{\mathbf{AB}}= \frac{1}{\sqrt{2}}
\left(
\begin{array}{cc}
-\xi_1 - \mbox{i}\xi_2 & \xi_3 \\
\xi_3 & \xi_1 -\mbox{i}\xi_2
\end{array}
\right).
\label{CartesianSpinor}
\end{equation}

\section{Killing spinor data}
\label{Section:KSD}
In this section we review some aspects of the space spinor
decomposition of the Killing spinor equation \eqref{KillingSpinorEquation}. A first
analysis along these lines was first carried out in \cite{GarVal08a}. The
current presentation is geared towards the construction of geometric
invariants.

\subsection{General observations}
Given a symmetric spinor $\kappa_{AB}$ (not necessarily a Killing
spinor), it will be convenient to define the following spinors:
\begin{subequations}
\begin{align}
\xi &\equiv \nabla^{PQ}\kappa_{PQ},\label{xi_sen_1} \\
\xi_{BF} &\equiv \frac{3}{2}\nabla_{(F}{}^{D}\kappa_{B)D},\label{xi_sen_2}\\
\xi_{ABCD} &\equiv \nabla_{(AB}\kappa_{CD)}\label{xi_sen_3},\\
\xi_{AA'} &\equiv \nabla^B{}_{A'}\kappa_{AB},\\
H_{A'ABC} &\equiv 3 \nabla_{A'(A}\kappa_{BC)},\\
S_{AA'BB'} &\equiv \nabla_{AA'}\xi_{BB'} + \nabla_{BB'}\xi_{AA'}.
\end{align}
\end{subequations}
We will use this notation throughout the rest of the paper.
Clearly, for a Killing spinor one has
\[
H_{A'ABC}=0, \quad S_{AA'BB'}=0. 
\]
The spinors $\xi$, $\xi_{AB}$ and $\xi_{ABCD}$ arise in the
space spinor decomposition of the spinors $H_{A'ABC}$ and
$\xi_{AA'}$. To see this, let $\tau^{AA'}$ denote,  as in section \ref{Section:SpaceSpinors}, 
the spinorial counterpart of a timelike vector with normalisation
$\tau_{AA'}\tau^{AA'}=2$. Some manipulations show that
\begin{subequations}
\begin{eqnarray}
&&
\xi_{AA'}=\tfrac{1}{2}\tau_{AA'}\xi-\tfrac{2}{3}\tau^B{}_{A'}\xi_{AB}+\tfrac{1}{2}\tau^B{}_{A'}\nabla\kappa_{AB},
\label{Split:xi}\\
&&
H_{A'ABC}=\tau_{A'(A}\xi_{BC)}+\tfrac{3}{2}\tau_{A'(A}\nabla\kappa_{BC)}-3\tau_{A'}{}^D\xi_{ABCD}. \label{Split:H}
\end{eqnarray}
\end{subequations}
Furthermore, the spinors $\xi$, $\xi_{AB}$ and $\xi_{ABCD}$ correspond
to the irreducible components of $\nabla_{AB}\kappa_{CD}$ so that one
can write:
\begin{equation}\label{SenDiffKappaSplit}
\nabla_{AB}\kappa_{CD}=\xi_{ABCD}-\tfrac{1}{3}\epsilon_{A(C}\xi_{D)B}-\tfrac{1}{3}\epsilon_{B(C}\xi_{D)A}-\tfrac{1}{3}\epsilon_{A(C}\epsilon_{D)B}\xi.
\end{equation}

Using the commutator \eqref{commutator2} for vacuum one can obtain
equations for the derivatives of $\xi$ and $\xi_{AB}$ ---these will be
used systematically in the sequel. The irreducible components of the
derivative $\nabla_{AB}\xi_{CD}$ are given by:
\begin{subequations}
\begin{align}
\nabla^{AB}\xi_{AB}={}&
-\tfrac{1}{2}K\xi+\tfrac{3}{4}\Omega^{ABCD}\xi_{ABCD}
+\tfrac{1}{2}\Omega^{AB}\xi_{AB}
-\tfrac{3}{4}\Omega^{AB}\nabla\kappa_{AB}, \label{Dxi1}\\
\nabla^C{}_{(A}\xi_{B)C}={}&
\nabla_{AB}\xi
+\tfrac{3}{2}\Psi_{ABCD}\kappa^{CD} 
-\tfrac{2}{3}K\xi_{AB} 
-\tfrac{1}{2}\Omega_{ABCD}\xi^{CD} 
-\tfrac{3}{2}\xi_{(A}{}^{CDF}\Omega_{B)CDF}\nonumber\\
&-\tfrac{3}{2}\nabla^{CD}\xi_{ABCD} 
-\tfrac{1}{2}\Omega_{AB}\xi
+\tfrac{1}{2}\Omega_{(A}{}^C\xi_{B)C}
+\tfrac{3}{4}\Omega^{CD}\xi_{ABCD}
-\tfrac{3}{2}\Omega_{(A}{}^C\nabla\kappa_{B)C},
\label{Dxi2}\\ 
\nabla_{(AB}\xi_{CD)}={}& 
3\Psi_{F(ABC}\kappa_{D)}{}^F 
+K\xi_{ABCD}
-\tfrac{1}{2}\Omega_{ABCD}\xi 
+\Omega_{(ABC}{}^F\xi_{D)F} 
-\tfrac{3}{2}\Omega^{PQ}{}_{(AB}\xi_{CD)PQ} \nonumber \\
&+3\nabla^Q{}_{(A}\xi_{BCD)Q}
+\tfrac{1}{2}\Omega_{(AB}\xi_{CD)}
-\tfrac{3}{2}\Omega^F{}_{(A}\xi_{BCD)F}
+\tfrac{3}{2}\Omega_{(AB}\nabla\kappa_{CD)}. 
\label{Dxi3}
\end{align}
\end{subequations}
We note the appearance of the term $\nabla_{AB}\xi$ in
\eqref{Dxi2}. Thus, there is no independent equation for the
derivative of $\xi$. 

Finally, we consider the equations for the second order derivatives of
$\xi$. For the sake of simplicity, we restrict our attention to the
case when $\Omega_{AB}=0$ so that $K_{ABCD}=K_{CDAB}$.  For notational
purposes we define
$\Omega_{ABCDEF}\equiv\nabla_{(AB}\Omega_{CDEF)}$. One finds:
\begin{subequations}
\begin{align}
\nabla^{AB} \nabla_{AB} \xi ={}& 
-\tfrac{1}{6}K^2\xi
-\tfrac{1}{2}\Omega^{ABCD}\Omega_{ABCD}\xi
+3\Psi_A{}^{CDF}\Omega_{BCDF}\kappa^{AB}
+\xi_{AB}\nabla^{AB}K\nonumber\\
&+\tfrac{3}{4}\hat\Psi^{ABCD}\xi_{ABCD}
-\tfrac{9}{4}\Psi^{ABCD}\xi_{ABCD}
+2 K \Omega^{ABCD}\xi_{ABCD}\nonumber\\
&-\tfrac{15}{4}\Omega^{ABFH}\Omega^{CD}{}_{FH}\xi_{ABCD}
+\tfrac{9}{2}\Omega^{ABCD}\nabla^F{}_D\xi_{ABCF} \nonumber \\
&+\tfrac{3}{2}\nabla^{AB}\nabla^{CD}\xi_{ABCD},\label{nabla2xi0a}\\
\nabla^C{}_{(A}\nabla_{B)C}\xi ={}
&\tfrac{1}{2}\Omega_{ABCD}\nabla^{CD}\xi
-\tfrac{1}{3}K \nabla_{AB}\xi, \label{nabla2xi0b}\\
\nabla_{(AB}\nabla_{CD)}\xi={}&
-4 K \Psi_{(ABC}{}^E \kappa_{D)E} 
+\tfrac{1}{2} \hat\Psi_{ABCD}\xi 
-\tfrac{5}{2} \Psi_{ABCD}\xi 
-\tfrac{2}{3} \hat\Psi_{(ABC}{}^E \xi_{D)E} \nonumber\\ 
& -\tfrac{10}{3} \Psi_{(ABC}{}^E \xi_{D)E}
+\Omega_{ABCDEL} \xi^{EL} 
+\tfrac{4}{3} K{}^2 \xi_{ABCD}
+3 \Omega_{EFL(ABC}\xi_{D)}{}^{ELF}\nonumber\\
&+3 \Psi_{(AB}{}^{EL} \xi_{CD)EL} 
-\tfrac{3}{2}\xi_{(A}{}^{ELF}\Omega_{BCD)}{}^H\Omega_{ELFH} 
-3 \Psi_{EL(A}{}^F \kappa ^{EL} \Omega_{BCD)F}\nonumber\\
&-\xi^{EL} \Omega_{ELF(A} \Omega_{BCD)}{}^F 
+\tfrac{2}{3}K\xi_{(A}{}^E\Omega_{BCD)E}
+\tfrac{1}{2} \xi ^{ELFH} \Omega_{EL(AB} \Omega_{CD)FH}\nonumber\\ 
&-3 \Psi_{E(B}{}^{LF} \kappa_A{}^E \Omega_{CD)LF} 
-3 \Psi_{E(AB}{}^F \kappa^{EL}\Omega_{CD)LF}
-\Omega_{ELF(B}\xi_A{}^E \Omega_{CD)}{}^{LF}\nonumber\\ 
&-4 K \xi_{(AB}{}^{EL} \Omega_{CD)EL} 
-\tfrac{1}{2}\xi\Omega_{(AB}{}^{EL}\Omega_{CD)EL}
+\tfrac{3}{2} \xi ^{ELFH} \Omega_{E(ABC} \Omega_{D)LFH}\nonumber\\
&-2  \Omega_{E(BC}{}^H \xi_A{}^{ELF} \Omega_{D)LFH} 
+\tfrac{1}{4}\xi^{ELFH}\Omega_{ABCD}\Omega_{ELFH}
-\tfrac{1}{3} K \xi  \Omega_{ABCD}\nonumber \\
&+\tfrac{1}{2}\xi_{(AB}{}^{EL}\Omega_{CD)}{}^{FH}\Omega_{ELFH} 
+\tfrac{2}{5} \xi_{(CD} \nabla _{AB)}K +\tfrac{12}{5} \xi_{E(BCD} \nabla_{A)}{}^EK \nonumber\\
&-3 \Omega_{E(BCD} \nabla_{A)}{}^E\xi 
-\tfrac{3}{2}\Omega_{(A}{}^{ELF}\nabla_{CD}\xi_{B)ELF}
-\tfrac{3}{2} \Omega_{F(A}{}^{EL} \nabla_D{}^F\xi_{BC)EL}\nonumber\\
&-\tfrac{9}{2} \Omega_{(AB}{}^{EL} \nabla_D{}^F\xi_{C)ELF} 
-\tfrac{9}{2} \nabla_{L(D}\nabla_C{}^E\xi_{AB)E}{}^L
-\tfrac{3}{2} \nabla_{L(D}\nabla^{EL}\xi_{ABC)E}\nonumber\\
&-6 K \nabla_{E(D}\xi_{ABC)}{}^E 
+3 \Omega_{L(AB}{}^E \nabla^{LF}\xi_{CD)EF}
-3 \Omega_{(ABC}{}^E \nabla^{LF}\xi_{D)ELF}\nonumber\\ 
&-3\kappa^{EL}\nabla_{L(D}\Psi_{ABC)E} 
+3\kappa_{(A}{}^E \nabla_D{}^L\Psi_{BC)EL}.\label{nabla2xi0c}
\end{align}
\end{subequations}

The equations presented in this section 
have been deduced using the tensor algebra
suite {\tt xAct} for {\tt Mathematica} ---see \cite{xAct}.

\subsection{Propagation of the Killing spinor equation}

A straightforward consequence of the Killing spinor equation 
\eqref{KillingSpinorEquation} in a vacuum spacetime is that:
\begin{equation}\label{boxkappa}
\square\kappa_{AB}=-\Psi_{ABCD}\kappa^{CD},
\end{equation}
where $\square \equiv \nabla^{AA'}\nabla_{AA'}$.  The latter equation
is obtained by applying the differential operator $\nabla^{AA'}$ to
equation \eqref{KillingSpinorEquation} and then using the vacuum commutator relation for the spacetime Levi-Civita connection.

\bigskip
The wave equation \eqref{boxkappa} plays a role in the
discussion of the \emph{propagation} of the Killing spinor
equation. More precisely, one has the following result
---cfr. \cite{GarVal08a} for further details.

\begin{lemma}
Let $\kappa_{AB}$ be a solution to equation \eqref{boxkappa}. Then the
corresponding spinor fields $H_{A'ABC}$ and $S_{AA'BB'}$ will satisfy
the system of wave equations
\begin{subequations}
\begin{eqnarray}
&& \square H_{A'ABC}= 4\left(\Psi_{(AB}{}^{PQ} H_{C)PQA'} + \nabla_{(A}{}^{Q'}S_{BC)Q'A'}\right), \label{wave1} \\
&& \square S_{AA'BB'} = -\nabla_{AA'} \left( \Psi_B{}^{PQR}H_{B'PQR} \right)-\nabla_{BB'}\left( \Psi_A{}^{PQR}H_{A'PQR}\right) \nonumber \\
&& \hspace{4cm} + 2\Psi_{AB}{}^{PQ}S_{PA'QB'} + 2 \bar{\Psi}_{A'B'}{}^{P'Q'}S_{AP'BQ'}. \label{wave2}
\end{eqnarray}
\end{subequations}
\end{lemma}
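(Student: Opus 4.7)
The plan is to derive the two wave equations by applying $\square = \nabla^{EE'}\nabla_{EE'}$ directly to the defining expressions of $H_{A'ABC}$ and $S_{AA'BB'}$, commuting the derivatives using the vacuum spinorial commutators (where $[\nabla_{EE'},\nabla_{AA'}]$ acts through $\square_{EA}$ and $\bar\square_{E'A'}$ producing only Weyl-curvature terms, since $\Phi_{ABA'B'}=0$ and $\Lambda=0$), and then invoking the hypothesis $\square\kappa_{AB}=-\Psi_{ABCD}\kappa^{CD}$ to eliminate the resulting $\square\kappa$.

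For \eqref{wave1}, I would start from $H_{A'ABC}=3\nabla_{A'(A}\kappa_{BC)}$ and compute $\square H_{A'ABC}=3\nabla^{EE'}\nabla_{EE'}\nabla_{A'(A}\kappa_{BC)}$. Commuting $\nabla_{EE'}$ past the outer $\nabla_{A'(A}$ generates (i) a term with $\square\kappa$ in the middle, which by the hypothesis becomes $-\nabla_{A'(A}(\Psi_{BC)EF}\kappa^{EF})$ and, upon using the spinor Bianchi identity $\nabla^{AA'}\Psi_{ABCD}=0$, produces a $\Psi$-times-$\nabla\kappa$ contribution; (ii) curvature terms from $\square_{EA}\kappa_{BC}$ of the form $\Psi_{EAB}{}^Q\kappa_{QC}+\Psi_{EAC}{}^Q\kappa_{BQ}$, contracted with $\nabla^{EE'}$, which after using the Bianchi identity yield another $\Psi\cdot\nabla\kappa$ piece; and (iii) the $\bar\square_{E'A'}$ contribution, giving a term with $\bar\Psi_{A'}{}^{Q'}$ and a derivative of $\kappa$. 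Under the symmetrisation in $(A,B,C)$, the $\Psi\cdot\nabla\kappa$ pieces repackage as $4\Psi_{(AB}{}^{PQ}H_{C)PQA'}$ through the definition of $H$, while the $\bar\Psi$-derivative and Bianchi-remainder terms reorganise as $4\nabla_{(A}{}^{Q'}S_{BC)Q'A'}$ using $S_{AA'BB'}=2\nabla_{(A(A'}\xi_{B)B')}$ and $\xi_{AA'}=\nabla^B{}_{A'}\kappa_{AB}$.

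For \eqref{wave2}, starting from $S_{AA'BB'}=\nabla_{AA'}\xi_{BB'}+\nabla_{BB'}\xi_{AA'}$ with $\xi_{BB'}=\nabla^E{}_{B'}\kappa_{BE}$, applying $\square$ and commuting through both levels of derivatives brings $\square\kappa_{BE}$ into the middle; substituting $\square\kappa_{BE}=-\Psi_{BEFG}\kappa^{FG}$ and commuting once more to produce the outer structure, one obtains a $\nabla(\Psi\kappa)$-type term plus commutator remainders. The $\nabla(\Psi\kappa)$ piece is converted, via the Bianchi identity and the definition of $H$, into $-\nabla_{AA'}(\Psi_B{}^{PQR}H_{B'PQR})$ and its $(A,B)$-swap under symmetrisation in the two pairs $(AA'),(BB')$. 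The remaining commutator terms naturally split into an unprimed Weyl piece of the form $\Psi_{AB}{}^{PQ}\nabla_{PA'}\xi_{QB'}$ and a primed analogue in $\bar\Psi_{A'B'}{}^{P'Q'}$; symmetrising in $(AA')\leftrightarrow (BB')$ and recognising $S$ on the right delivers exactly $2\Psi_{AB}{}^{PQ}S_{PA'QB'}+2\bar\Psi_{A'B'}{}^{P'Q'}S_{AP'BQ'}$.

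The main obstacle is the algebraic bookkeeping: each commutator step on a multi-index spinor produces a proliferation of curvature-times-derivative terms, and the key technical point is that \emph{every} such term can be absorbed either into $\nabla H$, into $S$, or into a Bianchi-identity contraction that re-enters the same pool. This closure is what makes \eqref{wave1}--\eqref{wave2} a genuine homogeneous linear wave system in $(H,S)$ and is best handled systematically with the \texttt{xAct} package the authors already employ; conceptually, the indispensable inputs are the vacuum commutators (in particular $\square_{AB}\pi_C=\Psi_{ABCQ}\pi^Q$), the spinor Bianchi identity $\nabla^{AA'}\Psi_{ABCD}=0$, and the decomposition $\nabla_{A'A}\kappa_{BC}=\tfrac{1}{3}H_{A'ABC}+\text{trace terms expressible in }\xi$.
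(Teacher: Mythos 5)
Your plan is the right one and is essentially the proof the paper relies on: the paper itself states this lemma without an in-text proof, deferring to \cite{GarVal08a}, where exactly this computation is carried out — apply $\square$ to the definitions of $H_{A'ABC}$ and $S_{AA'BB'}$, normal-order with the vacuum commutators (so only $\Psi$ and $\bar\Psi$ terms survive), substitute $\square\kappa_{AB}=-\Psi_{ABCD}\kappa^{CD}$, use the Bianchi identity $\nabla^{AA'}\Psi_{ABCD}=0$, and re-express all surviving derivative terms through the decomposition of $\nabla_{A'A}\kappa_{BC}$ into $H_{A'ABC}$ and $\xi_{AA'}$ (your identity $S_{AA'BB'}=2\nabla_{(A(A'}\xi_{B)B')}$ is legitimate here because $\nabla^{AA'}\xi_{AA'}=0$ identically in vacuum). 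The only caveat is that your argument is an outline that leaves the coefficient bookkeeping (the factors $4$ and $2$, and the cancellation of the uncontracted $\nabla\Psi\,\kappa$ terms between the two commutation stages and the $\nabla S$ term) to machine verification, which is precisely how the authors handle it with {\tt xAct}.
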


The crucial observation is that the right hand sides of equations
\eqref{wave1} and \eqref{wave2} are homogeneous
expressions of the unknowns and their first order derivatives. The
hyperbolicity of equations \eqref{wave1} and \eqref{wave2} imply the
following result ---again, cfr. \cite{GarVal08a} for further details.

\begin{proposition}
\label{Proposition:KSDevelopment}
The development $(\mathcal{M},g_{\mu\nu})$ of an
initial data set for the vacuum Einstein field equations,
$(\mathcal{S},h_{ab},K_{ab})$, has a Killing spinor in the 
domain of dependence of $\mathcal{U}\subset\mathcal{S}$ if 
and only if the following equations are satisfied on $\mathcal{U}$.
\begin{subequations} 
\begin{eqnarray}
&& H_{A'ABC}=0, \label{old_kspd1}\\
&& \nabla H_{A'ABC}=0, \label{old_kspd2}\\
&& S_{AA'BB'}=0, \label{old_kspd3}\\
&& \nabla S_{AA'BB'}=0. \label{old_kspd4}
\end{eqnarray}
\end{subequations}
\end{proposition}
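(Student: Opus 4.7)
The necessity direction is immediate: if $\kappa_{AB}$ is a Killing spinor on the development $(\mathcal{M},g_{\mu\nu})$, then $H_{A'ABC}=3\nabla_{A'(A}\kappa_{BC)}$ vanishes identically on $\mathcal{M}$ by definition, and the associated $\xi_{AA'}$ from \eqref{ComplexKillingVector} is a Killing vector so that $S_{AA'BB'}=0$ everywhere; restricting to $\mathcal{U}$ and taking normal derivatives yields \eqref{old_kspd1}--\eqref{old_kspd4} at once. The substantive content of the proposition lies in the sufficiency direction.

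For the sufficiency, I would proceed in two main steps. First, I would regard $\kappa_{AB}|_{\mathcal{U}}$ together with its normal derivative $\nabla\kappa_{AB}|_{\mathcal{U}}$ as Cauchy data for the linear scalar-type wave equation \eqref{boxkappa}, and extend $\kappa_{AB}$ uniquely to the domain of dependence $D(\mathcal{U})\subset\mathcal{M}$ by standard linear hyperbolic theory on the vacuum background. Second, with this extension in hand, the preceding lemma furnishes the coupled wave system \eqref{wave1}--\eqref{wave2} satisfied by the derived spinor fields $H_{A'ABC}$ and $S_{AA'BB'}$. Since the right-hand sides of \eqref{wave1}--\eqref{wave2} are linear and homogeneous in $(H,S)$ and their first derivatives, with coefficients built solely from the background Weyl spinor, the pair $(H,S)$ satisfies a linear hyperbolic system whose Cauchy data on $\mathcal{U}$ consist precisely of the quantities $H$, $\nabla H$, $S$, $\nabla S$ appearing in \eqref{old_kspd1}--\eqref{old_kspd4}. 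By hypothesis these all vanish, so uniqueness of the Cauchy problem with trivial data forces $H_{A'ABC}\equiv 0$ and $S_{AA'BB'}\equiv 0$ throughout $D(\mathcal{U})$. The identity $H_{A'ABC}\equiv 0$ is the Killing spinor equation \eqref{KillingSpinorEquation}, which completes the argument.

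The main obstacle is to justify the clean uniqueness statement used in the last step. The principal part of \eqref{wave1}--\eqref{wave2} is a pair of decoupled scalar wave operators, so diagonal hyperbolicity is transparent; however, both equations couple the two unknowns through first derivatives (for instance the right-hand side of \eqref{wave1} contains $\nabla^{Q'}_{(A}S_{BC)Q'A'}$, and that of \eqref{wave2} contains $\nabla_{AA'}(\Psi_B{}^{PQR}H_{B'PQR})$), so one must verify that this is a genuine lower-order perturbation for which standard energy estimates close. Reducing \eqref{wave1}--\eqref{wave2} to first-order symmetric hyperbolic form and invoking Gronwall's inequality takes care of this in a routine manner on the smooth vacuum development, with no analytic input beyond what is already used in \cite{GarVal08a}. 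A secondary bookkeeping point to address is that the hypotheses \eqref{old_kspd2} and \eqref{old_kspd4} implicitly involve higher normal derivatives of $\kappa_{AB}$; these can be traded for tangential derivatives of the Cauchy data $(\kappa_{AB},\nabla\kappa_{AB})|_{\mathcal{U}}$ by repeated use of \eqref{boxkappa}, so that \eqref{old_kspd1}--\eqref{old_kspd4} are legitimately conditions on the data alone.
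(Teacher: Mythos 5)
Your proposal is correct and follows essentially the same route as the paper: the paper's (sketched) argument is precisely that $\kappa_{AB}$ is propagated by the wave equation \eqref{boxkappa}, that $H_{A'ABC}$ and $S_{AA'BB'}$ then obey the homogeneous hyperbolic system \eqref{wave1}--\eqref{wave2}, and that uniqueness for this system with the vanishing Cauchy data \eqref{old_kspd1}--\eqref{old_kspd4} forces $H\equiv 0$, $S\equiv 0$ on the domain of dependence, with details deferred to \cite{GarVal08a}. Your added remarks on the symmetric hyperbolic reduction and on trading higher normal derivatives for tangential data via \eqref{boxkappa} are exactly the bookkeeping the cited reference (and later Theorem \ref{Theorem:KSData}) carries out.
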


\subsection{The Killing spinor data equations}

The \emph{Killing spinor data} conditions obtained in Proposition
\ref{Proposition:KSDevelopment} can be reexpressed in terms of
conditions on the spinor $\kappa_{AB}$ which are intrinsic to the
hypersurface $\mathcal{S}$. For this one uses the split of $\xi_{AA'}$
and $H_{A'ABC}$ given by equations \eqref{Split:xi}-\eqref{Split:H}.
Extensive computations using the {\tt xAct} suite for {\tt Mathematica} render the
following result. 

\begin{theorem}
\label{Theorem:KSData}
Let $(\mathcal{S},h_{ab},K_{ab})$ be an initial data set for the Einstein vacuum field equations, where $\mathcal{S}$ is a Cauchy hypersurface. Let $\mathcal{U}\subset\mathcal{S}$ be an open set. 
The development of the initial data set will then have a Killing spinor in the domain of dependence of $\mathcal{U}$ if and only if 
\begin{subequations}
\begin{eqnarray}
&& \xi_{ABCD}=0,\label{kspd1}\\
&& \Psi_{(ABC}{}^F\kappa_{D)F}=0, \label{kspd2}\\
&& 3\kappa_{(A}{}^E\nabla_B{}^F\Psi_{CD)EF}+\Psi_{(ABC}{}^F\xi_{D)F}=0,\label{kspd3}
\end{eqnarray}
\end{subequations}
are satisfied on $\mathcal{U}$. The Killing spinor is obtained by 
evolving \eqref{boxkappa} with initial data satisfying conditions \eqref{kspd1}-\eqref{kspd3} and
\begin{equation}
\nabla\kappa_{AB}=-\tfrac{2}{3}\xi_{AB} \label{kspd4}
\end{equation}
on $\mathcal{U}$.
\end{theorem}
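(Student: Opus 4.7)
The plan is to translate the four spacetime conditions of Proposition \ref{Proposition:KSDevelopment}, namely $H_{A'ABC}=0$, $\nabla H_{A'ABC}=0$, $S_{AA'BB'}=0$ and $\nabla S_{AA'BB'}=0$ on $\mathcal{U}$, into conditions intrinsic to $\mathcal{S}$ by combining the space spinor splittings \eqref{Split:xi}--\eqref{Split:H}, the derivative identities \eqref{Dxi1}--\eqref{Dxi3}, and the wave equation \eqref{boxkappa}. The point is that the wave equation allows one to trade the $\nabla$-derivative (normal to $\mathcal{S}$) for spatial data, so that each of the four conditions ultimately becomes a statement about $\kappa_{AB}$, the spinors $\xi$, $\xi_{AB}$, $\xi_{ABCD}$, and the Weyl spinor on $\mathcal{U}$.

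For the first two conditions I would use the decomposition \eqref{Split:H}. Since $\xi_{ABCD}$ is totally symmetric and $\xi_{BC}+\tfrac{3}{2}\nabla\kappa_{BC}$ is symmetric of valence $2$, and since the tensors $\tau_{A'(A}$ and $\tau_{A'}{}^{D}$ carry independent irreducible content, the vanishing of $H_{A'ABC}$ splits into the two independent conditions $\xi_{ABCD}=0$ and $\nabla\kappa_{AB}=-\tfrac{2}{3}\xi_{AB}$, which are exactly \eqref{kspd1} and \eqref{kspd4}. The algebraic equation \eqref{kspd2} is nothing other than the universal integrability condition \eqref{IntegrabilityCondition} of the Killing spinor equation, restricted to the slice; once we impose \eqref{kspd1} and \eqref{kspd4}, substitution into the commutator \eqref{commutator2} applied to $\kappa_{AB}$ shows that \eqref{kspd2} is both necessary and automatic from $H=0$ together with the vacuum assumption.

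Condition \eqref{kspd3} should arise from $\nabla H_{A'ABC}=0$ and $\nabla S_{AA'BB'}=0$ on $\mathcal{S}$. I would apply $\nabla$ to \eqref{Split:H} and use the wave equation \eqref{boxkappa} to rewrite every occurrence of $\nabla^{2}\kappa_{AB}$ in terms of $\nabla^{AB}\nabla_{AB}\kappa_{AB}$ and $\Psi_{ABCD}\kappa^{CD}$; the space spinor identities \eqref{Dxi1}--\eqref{Dxi3} handle the spatial derivatives of $\xi$ and $\xi_{AB}$ that appear. After setting $\xi_{ABCD}=0$, $\Psi_{(ABC}{}^F\kappa_{D)F}=0$, and $\nabla\kappa_{AB}=-\tfrac{2}{3}\xi_{AB}$, and after applying the Bianchi identities \eqref{Bianchi1}--\eqref{Bianchi2} to eliminate residual spatial derivatives of $\Psi_{ABCD}$ not already organised into the combination appearing in \eqref{kspd3}, the only non-trivially surviving irreducible piece of $\nabla H_{A'ABC}=0$ (and analogously of $\nabla S_{AA'BB'}=0$) is the totally symmetric expression $3\kappa_{(A}{}^{E}\nabla_{B}{}^{F}\Psi_{CD)EF}+\Psi_{(ABC}{}^{F}\xi_{D)F}=0$. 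The converse direction then runs this same calculation backwards: \eqref{kspd1}--\eqref{kspd4} imply $H_{A'ABC}$, $\nabla H_{A'ABC}$, $S_{AA'BB'}$ and $\nabla S_{AA'BB'}$ all vanish on $\mathcal{U}$, so Proposition \ref{Proposition:KSDevelopment} delivers the Killing spinor in the domain of dependence.

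The main obstacle is not conceptual but combinatorial: the second-order identities \eqref{nabla2xi0a}--\eqref{nabla2xi0c}, once substituted into the derivatives of the split expressions \eqref{Split:xi}--\eqref{Split:H}, produce a large number of terms involving $\Omega_{ABCD}$, $K$, and their derivatives, as well as several different contractions of $\Psi_{ABCD}$ with $\kappa_{AB}$ and $\xi_{AB}$. Checking that, modulo \eqref{kspd1}--\eqref{kspd2} and the Bianchi identity, all of these terms collapse to the compact form \eqref{kspd3} without leaving any residual condition is precisely the sort of bookkeeping that motivates the authors' reliance on the \texttt{xAct} computer algebra suite; organising the irreducible decomposition cleanly, so that the reduction is transparent rather than miraculous, is the delicate part of the proof.
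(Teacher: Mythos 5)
Your overall strategy --- splitting the four spacetime conditions of Proposition \ref{Proposition:KSDevelopment} in the space spinor formalism, using the wave equation \eqref{boxkappa} to eliminate normal derivatives, and invoking the Bianchi identity to bring the derivative conditions into the compact form \eqref{kspd3} --- is the same as the paper's, and your identification of \eqref{kspd1} and \eqref{kspd4} with the irreducible parts of $H_{A'ABC}=0$ is correct. The genuine gap is your treatment of \eqref{kspd2}. You claim that, once \eqref{kspd1} and \eqref{kspd4} hold, the spatial commutator \eqref{commutator2} applied to $\kappa_{AB}$ makes \eqref{kspd2} ``both necessary and automatic from $H=0$ together with the vacuum assumption.'' That cannot be right: if \eqref{kspd2} followed from \eqref{kspd1} and \eqref{kspd4} alone, it would be redundant in the statement of the theorem, and it is not. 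The integrability condition \eqref{IntegrabilityCondition} is obtained by differentiating the Killing spinor equation in all four spacetime directions, so it is only available once the Killing spinor is already known to exist on a neighbourhood; on a single slice, $H_{A'ABC}=0$ gives no control of the transversal derivative of $H$, and the commutator \eqref{commutator2} is an identity relating the antisymmetrised second Sen derivatives of $\kappa_{AB}$ --- which are not determined by \eqref{kspd1} and \eqref{kspd4} --- to curvature terms; it does not force $\Psi_{(ABC}{}^{F}\kappa_{D)F}=0$. Indeed, \eqref{Dxi3} with $\xi_{ABCD}=0$ merely expresses $\nabla_{(AB}\xi_{CD)}$ in terms of $\Psi_{F(ABC}\kappa_{D)}{}^{F}$ and lower-order pieces; it does not annihilate that combination.

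In the paper, \eqref{kspd2} is precisely the content of the second spacetime condition $\nabla H_{A'ABC}=0$: decomposing it under \eqref{old_kspd1} gives $\nabla\xi_{ABCD}=0$ and $\nabla^{2}\kappa_{AB}=-\tfrac{2}{3}\nabla\xi_{AB}$; the mixed commutator \eqref{commutator1} together with \eqref{Dxi3}, \eqref{kspd1} and \eqref{kspd4} yields $\nabla\xi_{ABCD}=4\Psi_{(ABC}{}^{F}\kappa_{D)F}$, so the vanishing of the normal derivative of the spatial Killing spinor equation is \emph{equivalent} to the algebraic condition \eqref{kspd2}, while the second irreducible piece is shown to be automatic from \eqref{boxkappa}. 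Your attribution of \eqref{kspd3} should be sharpened accordingly: $\nabla H_{A'ABC}=0$ is exhausted by \eqref{kspd2}, and \eqref{kspd3} comes from $\nabla S_{AA'BB'}=0$ after reduction and use of \eqref{Bianchi1}. Moreover, in the sufficiency direction you must also verify that $S_{AA'BB'}=0$ itself reduces, via \eqref{Dxi1}--\eqref{Dxi3} and the commutators, to conditions already implied by \eqref{kspd1}, \eqref{kspd2} and \eqref{kspd4} --- a nontrivial step (the paper shows its irreducible parts collapse to \eqref{kspd2} plus identities) that your outline passes over.
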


\bigskip
\noindent
\textbf{Remark 1.} Conditions \eqref{kspd1}-\eqref{kspd3} are intrinsic
to $\mathcal{U}\subset \mathcal{S}$ and will be referred to as the
\emph{Killing spinor initial data equations}. In particular, equation
\eqref{kspd1}, which can be written as
\begin{equation}
\label{SpatialKillingSpinorEquation}
\nabla_{(AB}\kappa_{CD)}=0,
\end{equation}
will be called the \emph{spatial Killing spinor equation}, whereas \eqref{kspd2} and \eqref{kspd3} will be known as
the \emph{algebraic conditions}.

\bigskip
\noindent
\textbf{Remark 2.} Theorem \ref{Theorem:KSData} is an improvement on
Proposition 6 of \cite{GarVal08a} where the interdependence of the equations implied by \eqref{old_kspd1}-\eqref{old_kspd4} was not analysed.

\begin{proof}
The proof of Theorem \ref{Theorem:KSData} consists of a space spinor
decomposition of the conditions \eqref{old_kspd1}-\eqref{old_kspd4}
and of an analysis of the dependencies of the resulting conditions.
All calculations are made on $\mathcal{U}\subset\mathcal{S}$.

\begin{itemize}
\item \emph{Decomposition of equation \eqref{old_kspd1}.} Splitting
$\tau_F{}^{A'}H_{A'ABC}$ into irreducible parts gives that
\eqref{old_kspd1} is equivalent to
\begin{subequations}
\begin{eqnarray}
&&\xi_{ABCD}=0,\label{xi4vanishing}\\
&&\nabla\kappa_{AB}=-\tfrac{2}{3}\xi_{AB}\label{nablakappa}.
\end{eqnarray}
\end{subequations}

\item \emph{Decomposition of equation \eqref{old_kspd2}.} It follows that
\[
\tau_D{}^{A'}\nabla H_{A'ABC}=\nabla(\tau_D{}^{A'}
H_{A'ABC})+H_{A'ABC}K_{DF}\tau^{FA'}.
\]
Hence, under the condition \eqref{old_kspd1}, the irreducible parts of
$\tau_D{}^{A'}\nabla H_{A'ABC}$ are given by
\begin{subequations}
\begin{eqnarray}
&& \nabla\xi_{ABCD}=0,\label{nablaxi4}\\
&& \nabla^2\kappa_{AB}=-\tfrac{2}{3}\nabla\xi_{AB}.\label{nabla2kappa}
\end{eqnarray}
\end{subequations}
 From the commutator \eqref{commutator1} together with \eqref{xi4vanishing} and \eqref{nablakappa} we get
\begin{align*}
\nabla\xi_{ABCD}={}&\nabla\nabla_{(AB}\kappa_{CD)}\\
={}&2\Psi_{(ABC}{}^F\kappa_{D)F}-\tfrac{1}{3}\Omega_{(AB}\xi_{CD)}-\tfrac{1}{3}\Omega_{ABCD}\xi+\tfrac{2}{3}\Omega_{(ABC}{}^F\xi_{D)F}-\tfrac{2}{3}\nabla_{(AB}\xi_{CD)}.
\end{align*}
Equation \eqref{Dxi3} and again \eqref{xi4vanishing} and \eqref{nablakappa} then yield
\begin{equation}
\nabla\xi_{ABCD}=4\Psi_{(ABC}{}^F\kappa_{D)F}.
\end{equation}
Using the commutator \eqref{commutator1} one obtains that
\begin{subequations}
\begin{align}
\nabla\xi={}&
\nabla_{AB}\nabla\kappa^{AB}
-\tfrac{1}{3}K\xi
+\tfrac{2}{3}K^{AB}\xi_{AB}
+\tfrac{2}{3}\Omega^{AB}\xi_{AB}
-\Omega^{ABCD}\xi_{ABCD}
-\tfrac{1}{2}K^{AB}\nabla\kappa_{AB}
\label{nablaxi0a}\\
\nabla\xi_{AB}={}&
\tfrac{3}{2}\Psi_{ABCD}\kappa^{CD}
-\tfrac{1}{2}K_{AB}\xi
-\tfrac{1}{3}K\xi_{AB}
+\tfrac{1}{2}K^C{}_{(A}\xi_{B)C}
+\tfrac{3}{4}K^{CD}\xi_{ABCD}
-\tfrac{1}{2}\xi\Omega_{AB} \nonumber \\
&-\tfrac{1}{2}\xi^C{}_{(A}\Omega_{B)C}
+\tfrac{3}{4}\Omega^{CD}\xi_{ABCD}
+\tfrac{3}{2}\xi_{(A}{}^{CDF}\Omega_{B)CDF}
+\tfrac{1}{2}\xi^{CD}\Omega_{ABCD} \nonumber \\
&-\tfrac{3}{4}K^C{}_{(A}\nabla\kappa_{B)C}
+\nabla_{C(A}\nabla\kappa_{B)}{}^C\label{nablaxi2a}
\end{align}
\end{subequations}

In terms of the normal derivative and the Sen connection, equation \eqref{boxkappa} reads 
\begin{align}
\nabla^2\kappa_{AB}={}&
-2\Psi_{ABCD}\kappa^{CD}
 -K\nabla\kappa_{AB}
 -\tfrac{2}{3}\nabla_{AB}\xi
 -\tfrac{4}{3}\nabla_{C(A}\xi^C{}_{B)}
 -2\nabla^{CD}\xi_{ABCD} \nonumber \\
&+\tfrac{1}{3}K_{AB}\xi -\tfrac{2}{3}K^C{}_{(A}\xi_{B)C}
 +K^{CD}\xi_{ABCD}
 +\tfrac{2}{3}\Omega_{AB}\xi
 +\tfrac{4}{3}\xi^C{}_{(A}\Omega_{B)C} \nonumber \\
&+2\xi_{ABCD}\Omega^{CD}.\label{boxkappasplit}
\end{align}
It is worth stressing that equations \eqref{nablaxi0a},
\eqref{nablaxi2a} and \eqref{boxkappasplit} are valid not only on
$\mathcal{U}$, but on the spacetime. Hence, it makes sense taking
normal derivatives of these equations. Using \eqref{nablaxi2a},
\eqref{nablakappa} and \eqref{xi4vanishing}, the wave equation
\eqref{boxkappa} is seen to imply
\begin{align*}
\nabla^2\kappa_{AB}+\tfrac{2}{3}\nabla\xi_{AB}={}&
-\Psi_{ABCD}\kappa^{CD}
+\tfrac{4}{9}K\xi_{AB}
+\tfrac{1}{3}\Omega_{AB}\xi
+\xi^C{}_{(A}\Omega_{B)C}\nonumber \\
&+\tfrac{1}{3}\Omega_{ABCD}\xi^{CD}
-\tfrac{2}{3}\nabla_{AB}\xi
-\tfrac{2}{3}\nabla_{C(A}\xi^C{}_{B)}.
\end{align*}
Using equations \eqref{Dxi2}, \eqref{nablakappa},
\eqref{xi4vanishing}, the latter equation reduces to \eqref{nabla2kappa}.
This far we have that for all solutions to \eqref{boxkappa}, the system \eqref{old_kspd1}, \eqref{old_kspd2} is equivalent to the system \eqref{kspd1}, \eqref{kspd2}, \eqref{kspd4}.

\item \emph{Decomposition of equation \eqref{old_kspd3}.} Splitting
$\tau_C{}^{A'}\tau_D{}^{B'}S_{AA'BB'}$ into irreducible parts yields
\begin{subequations}
\begin{align}
&\nabla_{(AB}\nabla\kappa_{CD)}-\Omega_{ABCD}\xi+\tfrac{4}{3}K_{(ABC}{}^F\xi_{D)F}-K_{(ABC}{}^F\nabla\kappa_{D)F}-\tfrac{4}{3}\nabla_{(AB}\xi_{CD)}=0,\label{S0Full1}\\
&2\nabla\xi-\tfrac{4}{3}K^{AB}\xi_{AB}+K^{AB}\nabla\kappa_{AB}=0, \label{S0Full2}\\
&\tfrac{4}{3}\nabla_{AB}\xi^{AB}+K\xi-\tfrac{4}{3}\Omega^{AB}\xi_{AB}+\Omega^{AB}\nabla\kappa_{AB}-\nabla_{AB}\nabla\kappa^{AB}=0, \label{S0Full3}
\\
&\tfrac{1}{2}K_{BD}\xi
-\tfrac{2}{3}K^A{}_{(B}\xi_{D)A}
+\tfrac{1}{2}K^A{}_{(B}\nabla\kappa_{D)A}
-\tfrac{2}{3}K_{BDAC}\xi^{AC}
+\tfrac{1}{2}K_{BDAC}\nabla\kappa^{AC}\nonumber\\
&+\tfrac{2}{3}\nabla\xi_{BD}
-\tfrac{1}{2}\nabla^2\kappa_{BD}
+\nabla_{BD}\xi =0. \label{S0Full4}
\end{align}
\end{subequations}
Using equations \eqref{Dxi1}, \eqref{Dxi3}, \eqref{xi4vanishing},
\eqref{nablakappa} and \eqref{nabla2kappa}, one sees that equations
\eqref{S0Full1}-\eqref{S0Full3} simplify to
\begin{subequations}
\begin{eqnarray}
&& \Psi^F{}_{(ABC}\kappa_{D)F}=0,\\
&& \nabla\xi=K^{AB}\xi_{AB},\label{nablaxi0}\\
&& \nabla\xi_{BD}=-\tfrac{1}{2}K_{BD}\xi
+K^A{}_{(B}\xi_{D)A}
+K_{BDAC}\xi^{AC}
-\nabla_{BD}\xi,\label{nablaxi2}
\end{eqnarray}
\end{subequations}
while equation \eqref{S0Full4} is seen to be satisfied
identically. Furthermore, employing equations \eqref{Dxi1},
\eqref{Dxi2}, \eqref{nablaxi0a}, \eqref{xi4vanishing},
\eqref{nablakappa} and \eqref{nablaxi2a} one obtains equation
\eqref{nablaxi0} and \eqref{nablaxi2}. Hence, they are a consequence
of the commutators, \eqref{nablakappa} and \eqref{xi4vanishing}. One
concludes that for all solutions to \eqref{boxkappa}, the equations
\eqref{kspd1}, \eqref{kspd2} together with \eqref{kspd4} are equivalent to \eqref{old_kspd1}, \eqref{old_kspd2}, \eqref{old_kspd3}.

\item \emph{Decomposition of equation \eqref{old_kspd4}}. A straightforward
computation shows that
\begin{eqnarray*}
&& \tau_C{}^{A'}\tau_D{}^{B'}\nabla S_{AA'BB'} \\
&& \hspace{1cm} =\nabla(\tau_C{}^{A'}\tau_D{}^{B'}S_{AA'BB'})+K_{CF}S_{AA'BB'}\tau_D{}^{B'}\tau^{FA'}+K_{DF}S_{AA'BB'}\tau_C{}^{A'}\tau^{FB'}.
\end{eqnarray*}
Hence, if condition \eqref{old_kspd3} holds, the irreducible parts of
$\tau_C{}^{A'}\negthinspace\tau_D{}^{B'}\nabla S_{AA'BB'}$ are $\nabla$-derivatives
of \eqref{S0Full1}-\eqref{S0Full4}. Using equation
\eqref{nabla2kappa}, these components become
\begin{subequations}
\begin{eqnarray}
&&
\Omega_{ABCD}\nabla\xi
+\Omega_{(AB}\nabla\xi_{CD)}
-2\Omega^F{}_{(ABC}\nabla\xi_{D)F}
+\tfrac{2}{3}\xi_{(AB}\nabla\Omega_{CD)}\nonumber\\
&&\hspace{1cm}-\tfrac{1}{2}\nabla\kappa_{(AB}\nabla\Omega_{CD)}
+\xi\nabla\Omega_{ABCD}
+\tfrac{4}{3}\xi^F{}_{(A}\nabla\Omega_{BCD)F}\nonumber\\
&&\hspace{1cm}-(\nabla\kappa^F{}_{(A})\nabla\Omega_{BCD)F}
+\tfrac{4}{3}\nabla\nabla_{(AB}\xi_{CD)}
-\nabla\nabla_{(AB}\nabla\kappa_{CD)}=0, \label{nablaS0Full1}\\
&&2\nabla^2\xi-2K^{AB}\nabla\xi_{AB}+(\nabla K^{AB})\nabla\kappa_{AB}-\tfrac{4}{3}\xi^{AB}\nabla K_{AB}=0, \label{nablaS0Full2}\\
&&\xi\nabla K
+K\nabla\xi
-2\Omega^{AB}\nabla\xi_{AB}
-\tfrac{4}{3}\xi^{AB}\nabla\Omega_{AB}
+(\nabla\kappa^{AB})\nabla\Omega_{AB}\nonumber\\
&&\hspace{1cm}+\tfrac{4}{3}\nabla\nabla_{AB}\xi^{AB}
-\nabla\nabla_{AB}\nabla\kappa^{AB}=0, \label{nablaS0Full3}\\
&&\nabla^3\kappa_{BD}+\tfrac{2}{3}\nabla^2\xi_{BD}
=\tfrac{4}{3}\xi^A{}_{(B}\nabla\kappa_{D)A}
+\xi\nabla\kappa_{BD}
-\tfrac{4}{3}\xi^{AC}\nabla K_{BDAC}  \nonumber \\
&& \hspace{1cm}+(\nabla K^A{}_{(B})\nabla\kappa_{D)A}+(\nabla K_{BDAC})\nabla\kappa^{AC}
+K_{BD}\nabla\xi -2 K^A{}_{(B}\nabla\xi_{D)A}\nonumber \\
&& \hspace{1cm}-2 K_{BDAC}\nabla^{AC}+2\nabla^2\xi_{BD}
+2\nabla\nabla_{BD}\xi. \label{nablaS0Full4}
\end{eqnarray}
\end{subequations}
Now, using the commutator \eqref{commutator1}, and equations
\eqref{nabla2kappa} and \eqref{nablakappa} it is easy so see that
\begin{equation}\label{nablaDdnablakappa}
\nabla\nabla_{AB}\nabla\kappa_{CD}=-\tfrac{2}{3}\nabla\nabla_{AB}\xi_{CD}.
\end{equation}
Taking the normal derivative of the spacetime equations
\eqref{nablaxi0a}-\eqref{nablaxi2a} and using the relations
\eqref{nablaDdnablakappa}, \eqref{Dxi1}, \eqref{Dxi2},
\eqref{xi4vanishing}, \eqref{nablakappa}, \eqref{nablaxi4} and
\eqref{nabla2kappa} one gets
\begin{align*}
\nabla^2\xi={}&\xi^{AB}\nabla K_{AB}+K^{AB}\nabla\xi_{AB},\\
\nabla^2\xi_{AB}={}&
-\tfrac{1}{2}\xi\nabla K_{AB}
-\xi^C{}_{(A}\nabla K_{B)C}
+\tfrac{1}{3}\xi_{AB}\nabla K
-\tfrac{1}{2}K_{AB}\nabla\xi
+\tfrac{1}{3}K \nabla\xi_{AB} \nonumber \\
&+K^C{}_{(A}\nabla\xi_{B)C}-\Omega^C{}_{(A}\nabla\xi_{B)C}
+\Omega_{ABCD}\nabla\xi^{CD}
+\xi^C{}_{(A}\nabla\Omega_{B)C} \nonumber \\
&+\xi^{CD}\nabla\Omega_{ABCD}
-\nabla\nabla_{AB}\xi.
\end{align*}
Using these last two equations together with equations \eqref{Dxi1},
\eqref{Dxi3}, \eqref{xi4vanishing}, \eqref{nablakappa},
\eqref{nablaxi4}, \eqref{nabla2kappa} and \eqref{nablaxi0} one finds
that the system \eqref{nablaS0Full1}-\eqref{nablaS0Full4} reduces to
\begin{subequations}
\begin{eqnarray}
&&
4\Psi^F{}_{(ABC}\xi_{D)F}+6\kappa^F{}_{(A}\nabla\Psi_{BCD)F}\label{TempEqnablaPsi}=0, \\
&&\nabla^3\kappa_{BD}+\tfrac{2}{3}\nabla^2\xi_{BD}=0.\label{nabla3kappa}
\end{eqnarray}
\end{subequations}
Taking the normal derivative of equation \eqref{boxkappasplit} and
using equations \eqref{Dxi2}, \eqref{xi4vanishing},
\eqref{nablakappa}, \eqref{nablaxi4}, \eqref{nabla2kappa} and
\eqref{nablaxi0} one gets equation \eqref{nabla3kappa}. Finally, using
the Bianchi equation \eqref{Bianchi1}, one has that equation
\eqref{TempEqnablaPsi} reduces to
\begin{equation}
3\kappa_{(A}{}^E\nabla_B{}^F\Psi_{CD)EF}
+\Psi_{(ABC}{}^F\xi_{D)F}=0
\end{equation}
\end{itemize}
This completes the proof.
\end{proof}

\noindent
\textbf{Remark.} Note that the result is independent of $K_{AB}$ and $\Omega_{AB}$.

\subsubsection{The Killing spinor initial data conditions in terms of the Levi-Civita connection}
It should be stressed that the Killing spinor equations
\eqref{kspd1}-\eqref{kspd3} are truly intrinsic to the hypersurface
$\mathcal{S}$. This can be more easily seen by expressing the Sen
connection, $\nabla_{AB}$, in terms of the intrinsic (Levi-Civita)
connection of the hypersurface, $D_{AB}$, and the second fundamental
form $K_{ABCD}$. One obtains the following completely equivalent set
of equations:
\begin{eqnarray*}
&&D_{(AB}\kappa_{CD)}+\Omega_{(ABC}{}^E\kappa_{D)E}=0,\\
&&\Psi_{(ABC}{}^F\kappa_{D)F}=0,\\
&&3\kappa_{(A}{}^E D_B{}^F\Psi_{CD)EF}
-\tfrac{3}{4}\Psi_{L(ABC}D^{HL}\kappa_{D)H}
-\tfrac{3}{4}\Psi_{L(ABC}D_{D)}{}^F\kappa^L{}_F\nonumber\\
&&\hspace{1cm}+\tfrac{3}{4}\Psi_{(ABC}{}^L\Omega_{D)FHL}\kappa^{FH}
+\tfrac{3}{2}\Psi_{(AB}{}^{HL}\kappa_{C}{}^{F}\Omega_{D)FHL}
-\tfrac{3}{2}\Psi_{FH(A}{}^{L}\Omega_{BCD)L}\kappa^{FH}\nonumber\\
&&\hspace{1cm}+\tfrac{3}{8}\Psi_{FH(AB}\kappa_{CD)}\Omega^{FH}
+\tfrac{3}{4}\Psi_{FH(AB}\Omega_{CD)}\kappa^{FH}=0,
\end{eqnarray*}
where the last expression was simplified using the first algebraic
condition, and the value of the Weyl spinor is expressed
in terms of initial data quantities via formulae
\eqref{Weyl:Electric}-\eqref{Weyl:Magnetic}.

\subsection{The integrability conditions of the spatial Killing spinor equation}
For the rest of the paper we assume that the tensor $K_{ab}$
is symmetric ---accordingly, $\Omega_{AB}=0$. 
The condition $\xi_{ABCD}\equiv\nabla_{(AB}\kappa_{CD)}=0$ 
does not immediately give information about the other irreducible
components of $\nabla_{AB}\kappa_{CD}$, namely $\xi$ and
$\xi_{AB}$. However, using $\xi_{ABCD}=0$ and $\Omega_{AB}=0$ in the
relations \eqref{Dxi1}-\eqref{Dxi3} one finds that
$\nabla_{AB}\xi_{CD}$ can be written in terms of $\nabla_{AB}\xi$ and
lower order derivatives of $\kappa_{AB}$. Furthermore, using
$\xi_{ABCD}=0$ in the relations \eqref{nabla2xi0a}-\eqref{nabla2xi0c},
we see that the second order derivatives of $\xi$ can be expressed in
terms of lower order derivatives of $\kappa_{AB}$.  This yields the
following result which will play a role in the sequel:

\begin{lemma}
\label{Lemma:ThirdDerivative}
Assume that $\nabla_{(AB}\kappa_{CD)}=0$, then
\[
\nabla_{AB}\nabla_{CD} \nabla_{EF} \kappa_{GH} = H_{ABCDEFGH},
\]
where $H_{ABCDEFGH}$ is a linear combination of $\kappa_{AB}$,
$\nabla_{AB}\kappa_{CD}$ and $\nabla_{AB}\nabla_{CD} \kappa_{EF}$ with
coefficients depending on $\Psi_{ABCD}$, $\hat{\Psi}_{ABCD}$ and
$K_{ABCD}$.
\end{lemma}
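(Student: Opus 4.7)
\noindent The plan is to carry out, stage by stage, the reduction sketched in the paragraph preceding the lemma, using in sequence the irreducible splittings \eqref{SenDiffKappaSplit}, \eqref{Dxi1}--\eqref{Dxi3} and \eqref{nabla2xi0a}--\eqref{nabla2xi0c}. The standing hypothesis $\xi_{ABCD}=0$ from \eqref{kspd1} holds pointwise on $\mathcal{U}$, and since the Sen operator $\nabla_{AB}=\sigma^\mu{}_{AB}\nabla_\mu$ is tangential to $\mathcal{S}$, it also forces $\nabla_{EF}\xi_{ABCD}=0$ and $\nabla_{EF}\nabla_{GH}\xi_{ABCD}=0$ on $\mathcal{U}$. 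Every occurrence of $\xi_{ABCD}$ or one of its Sen derivatives in the identities just listed can therefore be dropped; we also set $\Omega_{AB}=0$, in accordance with the convention of this subsection.

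Stage one. From \eqref{SenDiffKappaSplit} with $\xi_{ABCD}=0$,
\[
\nabla_{EF}\kappa_{GH}=-\tfrac{1}{3}\epsilon_{E(G}\xi_{H)F}-\tfrac{1}{3}\epsilon_{F(G}\xi_{H)E}-\tfrac{1}{3}\epsilon_{E(G}\epsilon_{H)F}\xi,
\]
so $\nabla\kappa$ reduces to a constant-coefficient combination of $\xi$ and $\xi_{AB}$. Stage two: applying $\nabla_{CD}$ and re-assembling the three irreducible parts of $\nabla_{CD}\xi_{EF}$ via \eqref{Dxi1}--\eqref{Dxi3}, whose right-hand sides are algebraic in $\Psi_{ABCD}, K_{ABCD}, \kappa_{AB}, \xi_{AB}$ and contain $\nabla_{AB}\xi$ as the only derivative surviving the hypothesis, produces an expression for $\nabla_{CD}\nabla_{EF}\kappa_{GH}$ as a linear combination of $\kappa_{AB}$, $\nabla_{AB}\kappa_{CD}$ and $\nabla_{AB}\xi$ with coefficients built from $\Psi_{ABCD}$, $\hat\Psi_{ABCD}$ and $K_{ABCD}$ (and their Sen derivatives).

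Stage three is the crux. Applying one further $\nabla_{AB}$, every distributed term except $\nabla_{AB}\nabla_{CD}\xi$ sits manifestly in the claimed class generated by $\kappa$, $\nabla\kappa$ and $\nabla^2\kappa$. To handle $\nabla_{AB}\nabla_{CD}\xi$ I reconstruct it from its three irreducible pieces
\[
\nabla^{AB}\nabla_{AB}\xi,\qquad \nabla^{C}{}_{(A}\nabla_{B)C}\xi,\qquad \nabla_{(AB}\nabla_{CD)}\xi,
\]
which are precisely equations \eqref{nabla2xi0a}, \eqref{nabla2xi0b} and \eqref{nabla2xi0c}. Once the $\xi_{ABCD}$-terms together with their Sen derivatives are erased and $\Omega_{AB}=0$ is imposed, each right-hand side is a polynomial in $\Psi, \hat\Psi$ and $K$ acting on $\kappa$, $\xi$, $\xi_{AB}$, together with at most $\nabla\xi$. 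Since $\xi$ and $\xi_{AB}$ are components of $\nabla\kappa$, and $\nabla\xi$ is a component of $\nabla^2\kappa$, all three irreducible parts, and hence the full tensor $\nabla_{AB}\nabla_{CD}\xi$, lie in the allowed class, which closes the argument.

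The expected obstacle is purely combinatorial: one needs to check that a tensor with the symmetries of $\nabla_{AB}\nabla_{CD}\xi$ is recovered unambiguously from its three irreducible parts, and that when Sen derivatives are commuted into the canonical orders required by the identities of Section \ref{Section:KSD}, the commutator \eqref{commutator2} produces only terms algebraic in curvature and lower-order derivatives of $\kappa$, hence nothing outside the permitted class. Both points are routine within the space-spinor formalism and are in any case naturally handled by the \texttt{xAct} computations already used to derive \eqref{nabla2xi0a}--\eqref{nabla2xi0c}.
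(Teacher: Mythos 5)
Your proposal is correct and follows essentially the same route as the paper: impose $\xi_{ABCD}=0$ (hence also its vanishing Sen derivatives) and $\Omega_{AB}=0$ in the split \eqref{SenDiffKappaSplit}, use \eqref{Dxi1}--\eqref{Dxi3} to reduce $\nabla_{AB}\xi_{CD}$ to $\nabla_{AB}\xi$ plus lower-order terms, and use \eqref{nabla2xi0a}--\eqref{nabla2xi0c} to reconstruct the three irreducible pieces of $\nabla_{AB}\nabla_{CD}\xi$ from lower-order derivatives of $\kappa_{AB}$. Your explicit remarks on dropping Sen derivatives of $\xi_{ABCD}$, on the pair-exchange (commutator) part of $\nabla_{AB}\nabla_{CD}\xi$, and on the appearance of Sen derivatives of $\Psi_{ABCD}$ and $K_{ABCD}$ in the coefficients only make explicit what the paper's sketch leaves implicit.
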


\medskip
\noindent
\textbf{Remark.} It is important to point out that the assertion of
the lemma is false if $\nabla_{(AB}\kappa_{CD)}\neq 0$.

\section{The approximate Killing spinor equation}
\label{Section:ApproximateKS}

In what follows we will regard the spatial Killing spinor equation
\eqref{kspd1} as the key condition of the Killing spinor initial data
equations. Equation (\ref{kspd1}) is an overdetermined
condition for the 3 (complex) components of the spinor
$\kappa_{AB}$: not every initial data set $(\mathcal{S},h_{ab},K_{ab})$
admits a solution. One would like to deduce a new equation which always
has a solution and such that any solution to equation (\ref{kspd1}) is
also a solution to the new equation. 

\subsection{The approximate Killing spinor operator}
Let $\mathfrak{S}_2$ and $\mathfrak{S}_4$ denote, respectively, the
spaces of totally symmetric valence 2 and valence 4
spinors. Given $\zeta_{ABCD}, \; \chi_{ABCD}\in \mathfrak{S}_4$, we
introduce an inner product in $\mathfrak{S}_4$ via:
\[
\langle \zeta_{ABCD}, \chi_{EFGH}\rangle = \int_{\mathcal{S}} \zeta_{ABCD} \hat{\chi}^{ABCD} \mbox{d}\mu,
\]
where $\mbox{d}\mu$ denotes the volume form of the 3-metric
$h_{ab}$. We introduce the spatial Killing spinor operator $\Phi$
via
\[
\Phi: \mathfrak{S}_2 \rightarrow \mathfrak{S}_4, \quad \Phi(\kappa)_{ABCD}= \nabla_{(AB}\kappa_{CD)}. 
\]
Now, consider the pairing
\begin{eqnarray*}
&& \langle \nabla_{(AB}\kappa_{CD)}, \zeta_{EFGH} \rangle = \int_{\mathcal{S}} \nabla_{(AB}\kappa_{CD)} \hat{\zeta}^{ABCD} \mbox{d}\mu \\
&& \phantom{\langle \nabla_{(AB}\kappa_{CD)}, \zeta_{EFGH} \rangle}= \int_{\mathcal{S}} \nabla_{AB}\kappa_{CD} \hat{\zeta}^{ABCD} \mbox{d}\mu.
\end{eqnarray*}
The formal adjoint, $\Phi^*$, of the spatial Killing operator can be obtained from the latter expression by integration by
parts. To this end we note the identity:
\begin{eqnarray}
&& \int_{\mathcal{U}} \nabla^{AB} \kappa^{CD} \hat{\zeta}_{ABCD} \mbox{d}\mu - \int_{\mathcal{U}} \kappa^{AB} \widehat{\nabla^{CD} \zeta_{ABCD}}\mbox{d}\mu + \int_{\mathcal{U}} 2\kappa^{AB} \Omega^{CDF}{}_A\hat\zeta_{BCDF}\mbox{d}\mu \nonumber \\
&& \hspace{2cm}= \int_{\partial \mathcal{U}} n^{AB} \kappa^{CD} \hat{\zeta}_{ABCD} \mbox{d}S, \label{IntegrationbyParts}
\end{eqnarray}
with $\mathcal{U}\subset \mathcal{S}$, and where $\mbox{d}S$ denotes
the area element of $\partial \mathcal{U}$, $n_{AB}$ is the spinorial
counterpart of its outward pointing normal, and $\zeta_{ABCD}$ is a
symmetric spinor. From \eqref{IntegrationbyParts} it follows that 
\begin{equation}
\label{FormalAdjoint}
\Phi^*:\mathfrak{S}_4 \rightarrow \mathfrak{S}_2, \quad \Phi^*(\zeta)_{CD}=\nabla^{AB}\zeta_{ABCD}-2\Omega^{ABF}{}_{(C}\zeta_{D)ABF}.
\end{equation}

\begin{definition}
The composition operator $L\equiv \Phi^*\circ \Phi: \mathfrak{S}_2\rightarrow \mathfrak{S}_2$ given by:
\begin{equation}
L(\kappa_{CD}) \equiv \nabla^{AB} \nabla_{(AB} \kappa_{CD)}-\Omega^{ABF}{}_{(A}\nabla_{|DF|}\kappa_{B)C}-\Omega^{ABF}{}_{(A}\nabla_{B)F}\kappa_{CD}=0, \label{ApproximateKillingEquation} 
\end{equation}
will be called the \emph{approximate Killing spinor operator}, and equation
\eqref{ApproximateKillingEquation} the \emph{approximate Killing spinor equation}.
\end{definition}

\medskip
\noindent
\textbf{Remark.} Note that every solution to the spatial Killing spinor
equation \eqref{SpatialKillingSpinorEquation} is also a solution to
equation \eqref{ApproximateKillingEquation}.

\subsection{Ellipticity of the approximate Killing spinor operator}
As a prior step to the analysis of the solutions to the approximate Killing spinor equation \eqref{ApproximateKillingEquation}, we look first at its ellipticity properties.

\begin{lemma}
The operator $L$ defined by equation \eqref{ApproximateKillingEquation} is a formally self-adjoint elliptic operator.
\end{lemma}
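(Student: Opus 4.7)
The plan is to verify the two assertions---formal self-adjointness and ellipticity---separately, both flowing from the construction $L=\Phi^{*}\circ\Phi$. For formal self-adjointness, the key point is that $\Phi^{*}$ has been constructed as the formal adjoint of $\Phi$ via the integration-by-parts identity~\eqref{IntegrationbyParts}. Hence, for any sufficiently decaying $\kappa,\mu\in\mathfrak{S}_{2}$,
\[
\langle L\kappa,\mu\rangle=\langle\Phi^{*}\Phi\kappa,\mu\rangle=\langle\Phi\kappa,\Phi\mu\rangle=\langle\kappa,\Phi^{*}\Phi\mu\rangle=\langle\kappa,L\mu\rangle,
\]
so this assertion is essentially tautological; one only needs to record that the Hermitian inner product respects $(\Phi^{*})^{*}=\Phi$, which is clear from the explicit form~\eqref{FormalAdjoint}.

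The substantive part is ellipticity. Since $L$ is a second-order differential operator whose top-order piece is $\kappa_{CD}\mapsto\nabla^{AB}\nabla_{(AB}\kappa_{CD)}$, the principal symbol at a real cotangent covector $\xi_{a}$, with spinorial counterpart $\xi_{AB}$ satisfying $\hat\xi_{AB}=-\xi_{AB}$, is
\[
\sigma_{\xi}(L)(\kappa)_{CD}=\xi^{AB}\xi_{(AB}\kappa_{CD)}.
\]
To show invertibility for $\xi\neq 0$ I would contract with $\hat\kappa^{CD}$. Setting $T_{ABCD}\equiv\xi_{(AB}\kappa_{CD)}$, the total symmetry of $T$ together with the rule $\hat\xi^{AB}=-\xi^{AB}$ and the distributivity of Hermitian conjugation over products gives
\[
\xi^{(AB}\hat\kappa^{CD)}=-\hat\xi^{(AB}\hat\kappa^{CD)}=-\widehat{\xi^{(AB}\kappa^{CD)}}=-\hat T^{ABCD},
\]
so that
\[
\sigma_{\xi}(L)(\kappa)_{CD}\hat\kappa^{CD}=T_{ABCD}\,\xi^{(AB}\hat\kappa^{CD)}=-T_{ABCD}\hat T^{ABCD}\leq 0,
\]
with equality iff $T_{ABCD}=0$, by the positivity property recorded in Section~\ref{Section:SpaceSpinors}.

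It remains to show that $\xi_{(AB}\kappa_{CD)}=0$ with $\xi\neq 0$ forces $\kappa_{AB}=0$---this is the overdetermined ellipticity of $\Phi$ itself. The cleanest route is polynomial: associating to a totally symmetric spinor $\eta_{A_{1}\cdots A_{k}}$ the degree-$k$ polynomial $\eta(z)=\eta_{A_{1}\cdots A_{k}}z^{A_{1}}\cdots z^{A_{k}}$ in spinor variables, total symmetrization corresponds exactly to polynomial multiplication, so $\xi_{(AB}\kappa_{CD)}z^{A}z^{B}z^{C}z^{D}=(\xi_{AB}z^{A}z^{B})(\kappa_{CD}z^{C}z^{D})$. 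When $\xi\neq 0$ the first factor is a nonzero element of the integral domain $\mathbb{C}[z^{0},z^{1}]$, so the second factor must vanish identically, forcing $\kappa_{CD}=0$. This proves injectivity, and hence by equal dimensions of the (finite-dimensional) symbol space $\mathfrak{S}_{2}$, invertibility of $\sigma_{\xi}(L)$, establishing ellipticity. The main place where care is needed is the bookkeeping of signs under Hermitian conjugation on raised indices; once $\hat\xi^{AB}=-\xi^{AB}$ is in hand, the rest is routine.
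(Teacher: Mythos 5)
Your argument is correct, but for the ellipticity part it takes a genuinely different route from the paper. The self-adjointness step is the same in both cases: $L=\Phi^*\circ\Phi$ is formally self-adjoint by construction, via \eqref{IntegrationbyParts} and \eqref{FormalAdjoint}. For ellipticity, the paper proceeds by brute force: it writes the principal part $\partial^{\mathbf{AB}}\partial_{(\mathbf{AB}}\kappa_{\mathbf{CD})}$ in a Cartesian spin frame, recasts it as the real $6\times 6$ matrix operator \eqref{matrixA} acting on the real and imaginary parts of $(\kappa_0,\kappa_1,\kappa_2)$, and computes the determinant of the symbol explicitly, obtaining $\tfrac{1}{36}\left((\xi_1)^2+(\xi_2)^2+(\xi_3)^2\right)^6$, which vanishes only for $\xi_i=0$. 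You instead exploit the structure $L=\Phi^*\Phi$ at the symbol level: the Hermitian pairing reduces invertibility of $\sigma_\xi(L)$ to injectivity of the symbol of $\Phi$, i.e.\ to showing $\xi_{(AB}\kappa_{CD)}=0$, $\xi_{AB}\neq 0$ forces $\kappa_{AB}=0$, which you settle with the standard identification of totally symmetric spinors with homogeneous polynomials in two variables and the integral-domain property of $\Complex[z^0,z^1]$. Both are valid; your route is coordinate-free, shorter, and makes transparent that ellipticity of $L$ is exactly overdetermined ellipticity (injectivity of the symbol) of the spatial Killing spinor operator, and it generalises immediately to other operators of the form $\Phi^*\Phi$; the paper's computation, by contrast, delivers the explicit symbol determinant, which leaves nothing implicit about conventions. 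Two small points you should nail down for completeness: the strict positive-definiteness of the pairing $T_{ABCD}\hat T^{ABCD}$ (the paper records only $\geq 0$ in Section \ref{Section:SpaceSpinors}, though definiteness is standard for the $\tau$-induced Hermitian metric), and the sign bookkeeping for Hermitian conjugation versus index raising, which for the even-valence spinors appearing here indeed introduces no extra sign --- you correctly flag both.
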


\begin{proof} 
The operator is by construction formally self-adjoint 
as it is given by the composition of an operator and its formal adjoint. In
order to verify ellipticity, it suffices to look at the operator
\[
L'(\kappa)_{\mathbf{CD}}\equiv \partial^{\mathbf{AB}} \partial_{(\mathbf{AB}} \kappa_{\mathbf{CD})},
\]
corresponding to the principal part of $L$ in some Cartesian spin
frame. In the corresponding Cartesian coordinates $(x^1,x^2,x^3)$ one
has that
\[
\partial_{\mathbf{AB}}= \frac{1}{\sqrt{2}}
\left(
\begin{array}{cc}
-\partial_1 - \mbox{i}\partial_2 & \partial_3 \\
\partial_3 & \partial_1 -\mbox{i}\partial_2
\end{array}
\right),
\quad 
\partial^{\mathbf{AB}}= \frac{1}{\sqrt{2}}
\left(
\begin{array}{cc}
-\partial_1 + \mbox{i}\partial_2 & \partial_3 \\
\partial_3 & \partial_1 +\mbox{i}\partial_2
\end{array}
\right).
\]
In particular,  $\partial^{\mathbf{AB}} \partial_{\mathbf{AB}}=\Delta\equiv \partial^2_1 +\partial^2_2 +\partial^2_3$, the flat Laplacian. One notes that
\[
\partial^{\mathbf{PQ}}\partial_{(\mathbf{PQ}} \kappa_{\mathbf{AB})} = \frac{1}{6} \partial^{\mathbf{PQ}}\partial_{\mathbf{PQ}} \kappa_{\mathbf{AB}}+ \frac{2}{3} \partial^{\mathbf{PQ}} \partial_{\mathbf{P}(\mathbf{A}} \kappa_{\mathbf{B})\mathbf{Q}} + \frac{1}{6} \partial^{\mathbf{PQ}} \partial_{\mathbf{AB}} \kappa_{\mathbf{PQ}}. 
\]
Now, writing 
\[
\kappa_{0}\equiv\kappa_{00}, \quad \kappa_{1}\equiv\kappa_{01}, \quad \kappa_{2}\equiv\kappa_{11},
\]
one has that $L'$ can be expressed in matricial form as $A^{ij} \partial_i \partial_j u$,
where 
\begin{equation}
\label{matrixA}
A^{ij}\partial_i \partial_j \equiv 
\frac{1}{12}
\left(
\begin{array}{cccccc}
7\Delta -\partial_3^2 & -2\partial_1 \partial_3 & \partial_2^2-\partial_1^2 & 0 & -2\partial_2 \partial_3 & -2\partial_1 \partial_2 \\
-\partial_1\partial_3 & 6\Delta+2\partial^2_3 & \partial_1 \partial_3 & \partial_2\partial_3 & 0 & \partial_2 \partial_3 \\
\partial_2^2-\partial_1^2 & 2\partial_1\partial_3 & 7\Delta-\partial^2_3 & 2 \partial_1\partial_2 & -2\partial_2\partial_3 & 0 \\
0 & 2\partial_2\partial_3 & 2\partial_1\partial_2 & 7\Delta-\partial_3^2 & -2\partial_1\partial_3 & \partial^2_2-\partial_1^2 \\
-\partial_2\partial_3 & 0 & -\partial_2\partial_3 & -\partial_1\partial_3 & 6\Delta+2\partial_3^2 & \partial_1\partial_3 \\
-2\partial_1\partial_2 & \partial_2\partial_3 & 0& \partial_2^2-\partial_1^2 & 2 \partial_1\partial_3 & 7\Delta-\partial_3^2
\end{array}
\right),
\end{equation}
and 
\begin{equation}
\label{vectorU}
u\equiv 
\left(
\begin{array}{c}
\mbox{Re}(\kappa_0) \\
\mbox{Re}(\kappa_1) \\
\mbox{Re}(\kappa_2) \\
\mbox{Im}(\kappa_0) \\
\mbox{Im}(\kappa_1) \\
\mbox{Im}(\kappa_2) \\
\end{array}
\right).
\end{equation}
The symbol, $l(\xi_i)$, of the operator given by \eqref{matrixA} is
then given by replacing $\partial_i$ with $\xi_i\in \Real^3$. One
finds that
\[ \det l(\xi_i)=\frac{1}{36}\left(
(\xi_1)^2+(\xi_2)^2+(\xi_3)^2\right)^6,
\] so that $\det l(\xi_i)=0$ if and only if $\xi_i=0$. Accordingly,
the operator $L=\Phi^*\circ\Phi$ is elliptic.
\end{proof}

\subsection{A variational formulation}
We note that the approximate Killing spinor
equation arises naturally from a variational principle.

\begin{lemma}
The approximate Killing spinor equation \eqref{ApproximateKillingEquation} is the Euler-Lagrange equation of the functional
\begin{equation}
J = \int_{\mathcal{S}} \nabla_{(AB} \kappa_{CD)} \widehat{\nabla^{AB} \kappa^{CD}} \mbox{d}\mu. \label{functional}
\end{equation}
\end{lemma}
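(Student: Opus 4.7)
The plan is to recognize that the functional is precisely the squared $L^2$-norm of $\Phi(\kappa)$, that is,
\[
J = \langle \Phi(\kappa), \Phi(\kappa) \rangle,
\]
and then use the adjoint relation already established in equation \eqref{IntegrationbyParts} to derive its critical points.

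First I would take a compactly supported smooth variation $\eta_{AB} \in \mathfrak{S}_2$ and consider the one-parameter family $\kappa_{AB} + \epsilon \eta_{AB}$. Because $\Phi$ is a linear differential operator, the first variation of $J$ takes the form
\[
\frac{d}{d\epsilon}\bigg|_{\epsilon=0} J = \int_{\mathcal{S}} \nabla_{(AB}\eta_{CD)} \widehat{\nabla^{AB}\kappa^{CD}}\, \mbox{d}\mu + \int_{\mathcal{S}} \nabla_{(AB}\kappa_{CD)} \widehat{\nabla^{AB}\eta^{CD}}\, \mbox{d}\mu,
\]
where I have used that the total symmetrisation can be dropped against the symmetric factor $\Phi(\kappa)$ (resp.\ $\Phi(\eta)$). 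These two integrals are complex conjugates of each other in view of the reality properties of the Hermitian pairing, so it suffices to analyse one of them and take twice its real part.

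Next I would apply the integration-by-parts identity \eqref{IntegrationbyParts} to the first integral, with $\kappa \to \eta$ and $\zeta_{ABCD}\to \Phi(\kappa)_{ABCD} = \nabla_{(AB}\kappa_{CD)}$. Since $\eta_{AB}$ has compact support in $\mathcal{S}$, the boundary integral on $\partial\mathcal{U}$ vanishes for $\mathcal{U}$ sufficiently large. The remaining bulk terms reassemble, by the definition \eqref{FormalAdjoint} of $\Phi^*$, into
\[
\int_{\mathcal{S}} \eta^{CD}\, \widehat{\Phi^*\Phi(\kappa)_{CD}}\, \mbox{d}\mu,
\]
so that altogether
\[
\frac{d}{d\epsilon}\bigg|_{\epsilon=0} J = 2\,\mathrm{Re}\int_{\mathcal{S}} \eta^{CD}\, \widehat{L(\kappa)_{CD}}\, \mbox{d}\mu.
\]

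Since this must vanish for every compactly supported $\eta_{AB} \in \mathfrak{S}_2$, and both real and purely imaginary variations $\eta_{AB}$ (and $\mathrm{i}\eta_{AB}$) are admissible, the Hermitian pairing $\langle \eta, L(\kappa)\rangle$ vanishes identically on a dense subspace, forcing the Euler--Lagrange equation $L(\kappa)_{CD} = \Phi^*\Phi(\kappa)_{CD} = 0$, which is precisely the approximate Killing spinor equation \eqref{ApproximateKillingEquation}. The only real subtlety is the careful bookkeeping of Hermitian conjugation and the appropriate treatment of real versus complex variations; the use of both $\eta$ and $\mathrm{i}\eta$ as test spinors eliminates this subtlety and yields the full equation (not merely its real part).
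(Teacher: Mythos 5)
Your proposal is correct and follows essentially the same route as the paper, whose proof is simply the observation that the lemma is a direct consequence of the integration-by-parts identity \eqref{IntegrationbyParts} defining $\Phi^*$. You have merely spelled out the standard details (compactly supported variations, the Hermitian pairing, and the use of both $\eta_{AB}$ and $\mathrm{i}\eta_{AB}$ to recover the full complex equation), all of which are sound.
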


\begin{proof}
This is a direct consequence of the identity \eqref{IntegrationbyParts}. 
\end{proof}

\section{Asymptotically Euclidean manifolds}
\label{Section:AsymptoticallyEuclideanData}

After having studied some formal properties of the Killing spinor
initial data equations \eqref{kspd1}-\eqref{kspd3},\eqref{kspd4}, and the approximate Killing spinor equation 
\eqref{ApproximateKillingEquation}, we proceed to analyse their
solvability on asymptotically Euclidean manifolds. In order to do this we introduce some relevant terminology and ancillary results.

\subsection{General assumptions}
In what follows, we will be concerned with vacuum spacetimes arising
as the development of 
asymptotically Euclidean data sets. Let $(\mathcal{S},h_{ab},K_{ab})$,
denote a smooth initial data set for the vacuum Einstein field
equations. The pair $(h_{ab},K_{ab})$ satisfies on the 3-dimensional
manifold $\mathcal{S}$  the vacuum constraint
equations
\begin{subequations}
\begin{eqnarray}
&& -2r - K^a{}_a K^b{}_b + K_{ab}K^{ab}=0, \label{Hamiltonian}\\
&& D^a K_{ab}- D_b K^a{}_a=0, \label{Momentum}
\end{eqnarray}
\end{subequations}
where $r$ and $D$ denote, respectively, the Ricci scalar and the
Levi-Civita connection of the negative definite 3-metric $h_{ab}$,
while $K_{ab}$ corresponds to the extrinsic curvature of
$\mathcal{S}$. The unusual coefficients in the formulae above come
from our normalisation of $\tau^\mu$.  For an \emph{asymptotic end} it
will be understood an open set diffeomorphic to the complement of a
closed ball in $\Real^3$.  In what follows, the 3-manifold
$\mathcal{S}$ will be assumed to be the union of a compact set and two
asymptotically Euclidean ends, $i_1, \;i_2$.

\subsection{Weighted Sobolev norms}
In order to discuss the decays of the various fields on the 3-manifold
$\mathcal{S}$ we make use of weighted Sobolev spaces.
In what follows, we follow the ideas of \cite{Can81} written in
terms of the conventions of \cite{Bar86}. Choose an
arbitrary point $O\in \mathcal{S}$, and let
\[
\sigma(x) \equiv (1 +d(O,x)^2 )^{1/2},
\]
where $d$ denotes the Riemannian distance function on
$\mathcal{S}$. The function $\sigma$ is used to define the following
weighted $L^2$ norm:
\begin{equation}
\label{Definition:WeightedSobolev}
\Vert u\Vert_\delta \equiv \left(\int_{\mathcal{S}} |u|^2 \sigma^{-2\delta-3} \mbox{d}x\right)^{1/2},
\end{equation}
for $\delta\in \mathbb{R}$. In particular, if $\delta=-3/2$ one
recovers the usual $L^2$ norm. Different choices of origin give rise
to equivalent weighted norms ---as mentioned before, the convention of
indices used in the definition of the norm \eqref{Definition:WeightedSobolev}
follows the one of  Bartnik \cite{Bar86}. The fall off conditions of the various fields will be expressed in
terms of weighted Sobolev spaces $H^s_\delta$ consisting of functions
for which the norm
\[
\Vert u\Vert_{s,\delta} \equiv \sum_{0\leq |\alpha| \leq s} \Vert
D^\alpha u\Vert_{\delta-|\alpha|} < \infty,
\]
with $s$ a non-negative integer, and where
$\alpha=(\alpha_1,\alpha_2,\alpha_3)$ is a multiindex,
$|\alpha|=\alpha_1 +\alpha_2 +\alpha_3$.  We say that $u\in
H^\infty_\delta$ if $u\in H^s_\delta$ for all $s$.  We will say that a
spinor or a tensor belongs to a function space if its norm does. For
instance, the notation $\zeta_{AB}\in H^s_\delta$ is a short hand
notation for
$(\zeta_{AB}\hat\zeta^{AB}+\zeta_A{}^A\hat\zeta_B{}^B)^{1/2}\in
H^s_\delta$.  

\medskip
We will make use of the following result: 

\begin{lemma}
Let $u\in H^\infty_\delta$. Then $u$ is smooth (i.e. $C^\infty$)  over
$\mathcal{S}$ and has a fall off at infinity such that $D^l u = o(r^{\delta-|l|})$.
\end{lemma}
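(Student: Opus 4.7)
The plan is to prove this by reducing to the standard (unweighted) Sobolev embedding theorem through a scaling argument on dyadic annuli in the asymptotic ends, combined with the local Sobolev embedding for smoothness.

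For the smoothness assertion, I would observe that on any precompact open set $\mathcal{U}\subset\mathcal{S}$ the weight $\sigma$ is bounded above and below by positive constants, so the weighted norm $\Vert u\Vert_{s,\delta}$ restricted to $\mathcal{U}$ is equivalent to the ordinary Sobolev norm $\Vert u\Vert_{H^s(\mathcal{U})}$. Since $u\in H^s_\delta$ for every $s$, it follows that $u\in H^s_{\mathrm{loc}}(\mathcal{S})$ for every $s$, so Morrey's embedding $H^s_{\mathrm{loc}}\hookrightarrow C^k_{\mathrm{loc}}$ (valid for $s>k+3/2$ in three dimensions) yields $u\in C^\infty(\mathcal{S})$.

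For the decay statement, I would work in the two asymptotic ends, where $\mathcal{S}$ is diffeomorphic to $\Real^3\setminus\overline{B}$ and $\sigma(x)\sim |x|=r$. Fix $R\geq 1$ and set $A_R=\{R\leq r\leq 2R\}$. Rescale via $y=x/R$ and $\tilde u(y)=u(Ry)$, so that $A_R$ is mapped to the fixed annulus $A_1=\{1\leq|y|\leq 2\}$. Under this rescaling, partial derivatives satisfy $D^\alpha\tilde u(y)=R^{|\alpha|}(D^\alpha u)(Ry)$ and the volume element transforms by $dx=R^3\,dy$. Inserting these into the definition of the weighted norm gives the key identity
\begin{equation*}
\int_{A_R}|D^\alpha u|^2\,\sigma^{-2(\delta-|\alpha|)-3}\,dx\ \approx\ R^{-2\delta}\int_{A_1}|D^\alpha\tilde u|^2\,dy,
\end{equation*}
up to a constant independent of $R$. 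Summing over $|\alpha|\leq s$ yields $\Vert\tilde u\Vert_{H^s(A_1)}\leq C\,R^{\delta}\,\Vert u\Vert_{s,\delta;A_R}$.

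Now I would apply the classical Sobolev embedding $H^s(A_1)\hookrightarrow C^l(\overline{A_1})$ for $s>l+3/2$, which gives $\sup_{A_1}|D^\alpha\tilde u|\leq C\Vert\tilde u\Vert_{H^s(A_1)}$ for $|\alpha|\leq l$. Translating back through the rescaling, the relation $|D^\alpha u(x)|=R^{-|\alpha|}|D^\alpha\tilde u(y)|$ produces the pointwise bound
\begin{equation*}
\sup_{x\in A_R}|D^lu(x)|\ \leq\ C\,R^{\delta-|l|}\,\Vert u\Vert_{s,\delta;A_R},
\end{equation*}
valid for any $s>l+3/2$. The improvement from $O$ to $o$ is then immediate: because $u\in H^s_\delta(\mathcal{S})$ globally, the tail contribution $\Vert u\Vert_{s,\delta;\{r\geq R\}}$ tends to zero as $R\to\infty$ by dominated convergence, so $\Vert u\Vert_{s,\delta;A_R}\to 0$ and hence $|D^lu(x)|=o(r^{\delta-|l|})$ as $r\to\infty$.

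The only delicate point is the bookkeeping of the scaling exponents in the weighted norm, which must exactly reproduce the factor $R^{\delta-|l|}$; once that calculation is done correctly, the remainder of the argument is a routine application of Sobolev embedding on the fixed reference annulus $A_1$. No deep additional input is required, and indeed this lemma is essentially a restatement of standard results on weighted Sobolev spaces (cf.\ \cite{Bar86,Can81}).
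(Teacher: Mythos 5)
Your proof is correct, and the scaling bookkeeping is right: with Bartnik's index convention the weight on $D^\alpha u$ is $\sigma^{-2(\delta-|\alpha|)-3}$, and the factors $R^{-2|\alpha|}$ from $D^\alpha\tilde u$ and $R^{3}$ from the volume element combine with $R^{-2(\delta-|\alpha|)-3}$ to give exactly $R^{-2\delta}$, so the annulus estimate $\sup_{A_R}|D^l u|\leq C R^{\delta-|l|}\Vert u\Vert_{s,\delta;A_R}$ and the little-$o$ improvement from the vanishing tail both go through. The route is, however, different from the paper's: the paper does not reprove anything, it quotes Bartnik's Theorem 1.2 (iv) for the pointwise decay $u=o(r^{\delta})$ and then gets the derivative decay essentially for free from the definition of the norms, since $u\in H^\infty_\delta$ implies $D^l u\in H^\infty_{\delta-|l|}$ and the same cited theorem applies to $D^l u$ with the shifted weight; smoothness is likewise dispatched by the standard Sobolev embedding, as in your first step. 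What you have written is in effect a self-contained proof of the cited result by the standard rescaled-annulus argument, which is more work but makes the lemma independent of the reference; the paper's version is shorter and makes the derivative statement transparent from the index convention alone. One small point you should make explicit if you keep the self-contained version: on the ends the weighted norms are built from the Levi-Civita derivative $D$ of $h_{ab}$, so to run the rescaling with coordinate derivatives on the fixed annulus you need the equivalence of the two, which holds because in the asymptotically Cartesian coordinates the metric satisfies the decay \eqref{decay1} and hence the rescaled metrics on $A_1$ are uniformly controlled in every $C^k$ norm; this is routine but is the one place where the hypotheses on $(\mathcal{S},h_{ab})$ enter your argument.
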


The smoothness of $u$ follows from the Sobolev embedding theorems. The
proof of the behaviour at infinity of $u$ can be found in \cite{Bar86}
---cfr. Theorem 1.2 (iv)--- while the decay for the derivatives follows from
the definition of the weighted Sobolev norms.

\medskip
\noindent
\textbf{Remark.} Here $r$ is a radial coordinate on the asymptotic end ---see the next section for details.

\medskip
We also note the following multiplication lemma
---cfr. e.g. Theorem 5.6 in \cite{Can81}.

\begin{lemma}
\label{Lemma:Multiplication}
Let $u\in H^\infty_{\delta_1}$, $v\in H^\infty_{\delta_2}$. Then
\[
u v \in H^\infty_{\delta_1+\delta_2+\varepsilon}, \quad \varepsilon>0.
\]
\end{lemma}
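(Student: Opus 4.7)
The plan is to reduce the claim to a pointwise decay estimate on $uv$ and its derivatives, and then to verify the weighted $L^2$ integrability by direct computation, with the small parameter $\varepsilon$ entering only to secure convergence of an otherwise borderline radial integral at infinity. Since $H^\infty_{\delta_1+\delta_2+\varepsilon}$ is defined as the intersection $\bigcap_{s\geq 0} H^s_{\delta_1+\delta_2+\varepsilon}$, it suffices to show that $uv\in H^s_{\delta_1+\delta_2+\varepsilon}$ for every non-negative integer $s$, and for this I will bound $\Vert D^\alpha(uv)\Vert_{\delta_1+\delta_2+\varepsilon-|\alpha|}$ for each multiindex $\alpha$ with $|\alpha|\leq s$.

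First I would invoke the previous lemma to deduce that $u,v$ are smooth on $\mathcal S$ and that for every multiindex $\beta$, $D^\beta u = o(r^{\delta_1-|\beta|})$ and $D^\gamma v = o(r^{\delta_2-|\gamma|})$ as $r\to\infty$. Expanding with the Leibniz rule gives
\[
D^\alpha(uv)=\sum_{\beta+\gamma=\alpha} c_{\beta\gamma}\,D^\beta u\cdot D^\gamma v,
\]
so outside a sufficiently large compact set $K\subset\mathcal S$ one obtains the pointwise bound
\[
|D^\alpha(uv)(x)|\leq C\,\sigma(x)^{\delta_1+\delta_2-|\alpha|},
\]
with $C$ arbitrarily small provided $K$ is chosen large enough.

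Next I would evaluate the weighted norm. On the asymptotic ends, using geodesic polar-type coordinates in which the volume element behaves like $r^2\,dr\,d\omega$, and using that $\sigma\sim r$ for large $r$, I would estimate
\[
\int_{\mathcal S\setminus K}|D^\alpha(uv)|^2\,\sigma^{-2(\delta_1+\delta_2+\varepsilon-|\alpha|)-3}\,\mathrm{d}x
\leq C^2\int_{\mathcal S\setminus K}\sigma^{-2\varepsilon-3}\,\mathrm{d}x,
\]
and the last integral reduces to $\int_R^\infty r^{-2\varepsilon-1}\,dr$ (times an integral over the sphere at infinity), which converges precisely because $\varepsilon>0$. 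On the compact piece $K$, smoothness of $u$ and $v$ makes $D^\alpha(uv)$ uniformly bounded and the weight $\sigma$ is bounded away from zero, so the integral is trivially finite. Summing over $|\alpha|\leq s$ yields $uv\in H^s_{\delta_1+\delta_2+\varepsilon}$, and since $s$ is arbitrary the conclusion $uv\in H^\infty_{\delta_1+\delta_2+\varepsilon}$ follows.

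The only subtle point, and in my view the main conceptual obstacle, is the rôle of the $\varepsilon$: if one tried to take $\varepsilon=0$, the pointwise decay of $uv$ would only give an integrand of order $\sigma^{-3}$ at infinity, whose integral diverges logarithmically. The strict $o$-estimate (rather than $O$) from the Sobolev lemma saves exactly enough to cover the borderline case by choosing $K$ large enough that the constant $C$ is as small as desired, but it still does not yield convergence without an additional positive weight; hence the shift by $\varepsilon$. All other steps amount to routine bookkeeping of Leibniz terms.
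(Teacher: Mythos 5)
Your argument is correct, but it follows a different route from the paper: the paper does not prove this lemma at all, it simply cites Cantor's multiplication theorem (Theorem 5.6 of the reference \cite{Can81}), which is a general multiplication result for weighted Sobolev spaces of \emph{finite} regularity, proved there via weighted H\"older and Sobolev embedding arguments under suitable index conditions. Your proof instead exploits the full strength of the $H^\infty$ hypothesis: you use the preceding lemma to convert membership in $H^\infty_{\delta_1}$ and $H^\infty_{\delta_2}$ into uniform pointwise decay $D^\beta u=o(r^{\delta_1-|\beta|})$, $D^\gamma v=o(r^{\delta_2-|\gamma|})$ of all derivatives, then apply Leibniz and integrate directly, with the weight arithmetic $\sigma^{2(\delta_1+\delta_2-|\alpha|)}\sigma^{-2(\delta_1+\delta_2+\varepsilon-|\alpha|)-3}=\sigma^{-2\varepsilon-3}$ and $\sigma\sim r$, $\mathrm{d}x\sim r^2\,\mathrm{d}r\,\mathrm{d}\omega$ on the two ends giving a convergent radial integral $\int r^{-2\varepsilon-1}\,\mathrm{d}r$ precisely for $\varepsilon>0$; the compact core is handled trivially. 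This is more elementary and self-contained, and it is exactly in the spirit of the paper's own Lemma \ref{Lemma:FinerMultiplication}, but it is genuinely tied to infinite regularity, whereas the cited theorem also covers finite $s$. One small quibble: your closing discussion overstates the r\^ole of the strict $o$-estimate. Once $\varepsilon>0$ is fixed, an $O(r^{\delta_1+\delta_2-|\alpha|})$ bound with any constant suffices, and conversely no choice of a large compact set $K$ (however small it makes $C$) would rescue the case $\varepsilon=0$, since $\int_R^\infty r^{-1}\,\mathrm{d}r$ diverges for every $C>0$; so the $o$ versus $O$ distinction is immaterial here, though this does not affect the validity of your proof.
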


\noindent
\textbf{Notation.}  We will often write $u=o_\infty(r^\delta)$ for
$u\in H^\infty_\delta$ at an asymptotic end. 

\medskip
For the present applications we will require a somehow finer
multiplication lemma concerning the behaviour at infinity. For this we
exploit the fact that we are working with smooth functions. More precisely:

\begin{lemma}
\label{Lemma:FinerMultiplication}
Let $u=o_\infty(r^{\delta_1})$, $v=o_\infty(r^{\delta_2})$ and $w=O(r^\gamma)$. Then
\[
uv =o(r^{\delta_1+\delta_2}), \quad uw=o(r^{\delta_1+\gamma}).
\]
\end{lemma}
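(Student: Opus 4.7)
The plan is to reduce the claim to a purely pointwise calculation at infinity, exploiting the Sobolev-type embedding lemma stated immediately before. That lemma gives, for any $u \in H^\infty_{\delta_1}$, a smooth function with $|u(x)|/r^{\delta_1} \to 0$ on each asymptotic end (and similarly for $v$). The hypothesis $w = O(r^\gamma)$ gives, by definition, a constant $C$ and a radius $r_0$ such that $|w(x)| \leq C r^\gamma$ for $r \geq r_0$. In this way all three hypotheses are converted into clean pointwise bounds at infinity.

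With these in hand, the first assertion follows by factoring the ratio. Explicitly, for $r$ large one writes
\[
\frac{|u(x)\, v(x)|}{r^{\delta_1+\delta_2}} \;=\; \frac{|u(x)|}{r^{\delta_1}}\cdot \frac{|v(x)|}{r^{\delta_2}},
\]
and the right-hand side is a product of two quantities each tending to zero, hence tends to zero; this is precisely the statement $uv = o(r^{\delta_1+\delta_2})$. For the second assertion one uses the same kind of factorisation,
\[
\frac{|u(x)\, w(x)|}{r^{\delta_1+\gamma}} \;=\; \frac{|u(x)|}{r^{\delta_1}}\cdot \frac{|w(x)|}{r^{\gamma}} \;\leq\; C\, \frac{|u(x)|}{r^{\delta_1}},
\]
whose right-hand side tends to zero, delivering $uw = o(r^{\delta_1+\gamma})$.

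There is essentially no obstacle here. The only point worth emphasising is \emph{why} the conclusion is sharper than Lemma \ref{Lemma:Multiplication}: the latter is a statement about membership in the weighted Sobolev spaces $H^\infty_\delta$ and genuinely loses an $\varepsilon$, owing to the Hölder and multiplication estimates underlying Theorem~5.6 of \cite{Can81}, whereas the present lemma is a purely pointwise little-$o$ statement to which those Sobolev inequalities do not apply. Consequently the argument is simply a reduction to the ordinary calculus fact that the product of a null sequence with a null sequence (or with a bounded sequence) is null. The only care required is in invoking the preceding embedding correctly to convert the $o_\infty$ hypothesis into the pointwise decay needed for the factorisations above.
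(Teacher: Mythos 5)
Your proof is correct and follows essentially the same route as the paper: use the preceding embedding lemma to turn the $o_\infty$ hypotheses into pointwise decay, bound $w$ by $Cr^\gamma$, and multiply the resulting estimates. The paper merely packages the decay through the sphere maxima $f(r)\equiv\max_{\partial\mathcal{S}_r}|u\,r^{-\delta_1}|$ and $g(r)\equiv\max_{\partial\mathcal{S}_r}|v\,r^{-\delta_2}|$ (using compactness of $\partial\mathcal{S}_r$) to make the decay uniform in the angular directions, which is a presentational detail rather than a different argument.
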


\begin{proof}
Let $\partial \mathcal{S}_r$ denote the surfaces of constant $r$. For
sufficiently large $r$ (so that one is in an asymptotic end), the
surface $\partial \mathcal{S}_{r}$ has the topology of the
2-sphere. Now, the functions $u, \; v$ are continuous and the surfaces
$\partial \mathcal{S}_{r}$ are compact.  Therefore, for sufficiently
large $r$ the functions
\[
 f(r)\equiv \max_{\partial \mathcal{S}_r}|u r^{-\delta_1}|, \quad g(r)\equiv \max_{\partial \mathcal{S}_r}|v r^{-\delta_2}|,
\]
 are finite and well defined. Furthermore $r^{\delta_1}|u|\leq f(r)$, $r^{\delta_2}|v|\leq g(r)$.
By construction, one has that $f(r)=o(1)$
and $g(r)=o(1)$ ---that is, $f,\; g\rightarrow 0$ for $r\rightarrow
\infty$. One also has that $|w r^{-\gamma}|$ is bounded by a constant
$C$. Hence,
\begin{eqnarray*}
&& |uv| \leq f(r)g(r) r^{\delta_1+\delta_2} = o(r^{\delta_1+\delta_2}), \\
&& |uw| \leq f(r) r^{\delta_1} |w| \leq Cf(r) r^{\delta_1+\gamma} =o(r^{\delta_1+\gamma}), 
\end{eqnarray*}
 from where the desired result follows.
\end{proof}

\medskip
\noindent
\textbf{Remark.} The lemmas extend to symmetric spatial spinors with
even number of indices by the Cauchy-Schwartz inequality.

\subsection{Decay assumptions}
As mentioned before, our analysis will be restricted to initial data
sets $(\mathcal{S},h_{ab},K_{ab})$ with 2 asymptotic ends. Without
loss of generality one of the ends will be denoted by the
subscript/superscript $+$ on the relevant objects, while those of the
other end by $-$. Often, when no confusion arises the
subscript/superscript will be dropped.

\medskip
\noindent
\textbf{Remark.} We do not need to assume any topological restriction apart from paracompactness, orientability and the requirement of 2 asymptotically flat ends. Hence, we can have an arbitrary number of handles. For black holes, this means that we can handle Misner-type data with several black holes \cite{Mis63}.

\bigskip
The standard assumption for asymptotic flatness is that on each end it is possible to introduce asymptotically
Cartesian coordinates $x^i_\pm$ with $r=((x^1_\pm)^2 +
(x^2_\pm)^2 + (x^3_\pm)^2)^{1/2}$, such that the intrinsic metric
and extrinsic curvature of $\mathcal{S}$ satisfy
\begin{subequations}
\begin{eqnarray}
&& h_{ij} = -\delta_{ij}+o_\infty(r^{-1/2}), \label{decay1} \\
&& K_{ij} = o_\infty(r^{-3/2}) \label{decay2}.
\end{eqnarray}
\end{subequations}
Note that the decay conditions
\eqref{decay1} and \eqref{decay2} allow for data containing
non-vanishing linear and angular momentum. For the purposes of our analysis, it will be necessary to have a bit
more information about the behaviour of leading terms in $h_{ij}$ and
$K_{ij}$. More precisely, we will require the initial data to be
\emph{asymptotically Schwarzschildean} in some suitable sense. For
example, in 
\cite{BaeVal10a} the assumptions
\begin{subequations}
\begin{eqnarray}
&& h_{ij} = -\left(1+ 2m_\pm r^{-1}\right)\delta_{ij} + o_\infty(r^{-3/2}), \label{OldDecay1} \\
&& K_{ij} = o_\infty(r^{-5/2}), \label{OldDecay2}
\end{eqnarray}
\end{subequations}
have been used. This class of data
can be described as \emph{asymptotically non-boosted
Schwarz\-schildean}. Here, we consider a more general class of data which
includes boosted Schwarzschild data. Following \cite{BeiOMu87,Hua10} we assume
\begin{subequations}
\begin{eqnarray}
&& h_{ij} = -\left(1+\frac{2A_\pm}{r}\right)\delta_{ij} - \frac{\alpha_\pm}{r}\left( \frac{2x_ix_j}{r^2}-\delta_{ij}\right)+o_\infty(r^{-3/2}), \label{BoostedDecay1} \\
&& K_{ij} = \frac{\beta_\pm}{r^2}\left( \frac{2x_ix_j}{r^2}-\delta_{ij} \right) + o_\infty(r^{-5/2}),  \label{BoostedDecay2}
\end{eqnarray}
\end{subequations}
where $\alpha_\pm$ and $\beta_\pm$ are smooth functions on the
2-sphere and $A_\pm$ denotes a constant. The functions $\alpha$ and
$\beta$ are related to each other via the vacuum constraint equations
\eqref{Hamiltonian} and \eqref{Momentum}. We will later need to be
more specific about their particular form. The decay assumption for
the metric, equation \eqref{decay1} and hence also
\eqref{BoostedDecay1}, is included  in the analysis of
\cite{Can81}.

Important for our analysis is that boosted
Schwarzschild data is of this form ---see \cite{BeiOMu87}. It is noticed that
a second fundamental form of the type given by (\ref{BoostedDecay2})
is, in general, not trace-free:
\[
K_i{}^i = \frac{\beta_\pm}{r^2} + o_\infty(r^{-5/2}).
\]

Henceforth, we drop the superscripts/subscripts $\pm$ for ease of
presentation. If $\pm$ appears in any formula, $+$ is assumed for the
$(+)$-end, $-$ for the $(-)$-end. For the $\mp$ sign we assume the
opposite.

\subsection{ADM mass and momentum}
The ADM energy, $E$, and momentum, $p_i$, at each end are given by the integrals:
\begin{eqnarray*}
&& E= \frac{1}{16\pi} \int_{\partial\mathcal{S}_\infty} \delta^{ij} \left(\partial_i h_{jk}- \partial_k h_{ij} \right)\frac{x^k}{r} \mbox{d}S, \\
&& p_i = \frac{1}{8\pi} \int_{\partial\mathcal{S}_\infty} \left(K_{ij} - K h_{ij} \right)\frac{x^j}{r} \mbox{d}S,
\end{eqnarray*}
so that the ADM 4-momentum covector is given by $p_\mu=(E,p_i)$. 
In what follows it will be assumed that $p_\mu$ is timelike ---that is, $p_\mu p^\mu>0$. The need of this assumption will become clear in the sequel. From the ADM 4-momentum, we define the constants 
\[
m\equiv\sqrt{p^\nu p_\nu}, \quad p^2\equiv E^2-m^2.
\]

\subsection{Asymptotically Schwarzschildean data}
Boosted Schwarzschild data sets---i.e. initial data for the Schwarzschild
spacetime for which $p_i\neq 0$ satisfy the decay assumptions
\eqref{BoostedDecay1}-\eqref{BoostedDecay2}. This type of data satisfies:
\begin{eqnarray*}
&& A=\frac{m}{\sqrt{1-v^2}},\\
&& \alpha = 2m\left(1+ 2\frac{(n\cdot v)^2}{1-v^2}\right)\left(1+\frac{(n\cdot v)^2}{1-v^2} \right)^{-1/2}-\frac{2m}{\sqrt{1-v^2}}, \\
&& \beta = 2m \frac{n\cdot v}{1-v^2}\left(\frac{3}{2}+\frac{(n\cdot v)^2}{1-v^2}\right)\left(1+\frac{(n\cdot v)^2}{1-v^2}\right)^{-3/2},
\end{eqnarray*} 
where $n^i\equiv x^i/r$, $n\cdot v\equiv n^i v_i$,
$v^2\equiv\delta^{ij}v_i v_j$, $v_i$ is a constant 3-covector
---cfr. \cite{BeiOMu87}, and $m_\pm=m$. Note that if $v_i=0$ then
\eqref{BoostedDecay1}-\eqref{BoostedDecay2} reduce to
\eqref{OldDecay1}-\eqref{OldDecay2}. It can be checked that
\[
E= \frac{m}{\sqrt{1-v^2}}, \quad p_i=\frac{mv_i}{\sqrt{1-v^2}}.
\]
Rewriting this in terms of $(E,p_i)$, we get
\begin{equation}\label{alphabeta}
A=E,\qquad
\alpha = \frac{2m^2+4(n\cdot p)^2}{\sqrt{m^2+(n\cdot p)^2}}-2E, \qquad
\beta = \frac{(n\cdot p) E (3m^2+2(n\cdot p)^2)}{(m^2+(n\cdot p)^2)^{3/2}},
\end{equation} 
where $n\cdot p=n^i p_i=r^{-1}x^ip_i$.

\medskip
\noindent
\textbf{Assumption.}
In the sequel, we will restrict our analysis to initial data sets
which are asymptotically Schwarzschildean to the order given by
\eqref{BoostedDecay1}-\eqref{BoostedDecay2}. For any asymptotically
flat data that admits ADM 4-momentum, one can compute $(E,p_i)$, and
then try to find coordinates that cast the metric and extrinsic
curvature into the form \eqref{BoostedDecay1}-\eqref{BoostedDecay2}
with $(A,\alpha,\beta)$ given by \eqref{alphabeta} with $m=m_\pm$. If
this is possible, we will say that the data is \emph{asymptotically
Schwarzschildean}. We expect this to be the case for a large class of
data. The initial data sets excluded by this assumption will be deemed
pathological. Examples of such pathological cases can be found in
\cite{Hua10}. We stress that all data of the form
\eqref{OldDecay1}-\eqref{OldDecay2} is included in our more general
analysis.

\medskip
The need to restrict our analysis to asymptotically Schwarzschildean
data as defined in the previous paragraph will become evident in the
sequel, where we need to find an asymptotic solution to the spatial Killing spinor equation.

\section{Asymptotic behaviour of the spatial Killing spinors}
\label{Section:AB}

In this section we discuss in some detail the asymptotic behaviour of
solutions to the spatial Killing spinor equation on an asymptotically
Euclidean manifold. We begin by studying the asymptotic behaviour of
the appropriate Killing spinor in the Kerr spacetime. Then, we will
impose the same asymptotics on the approximate Killing spinor on a
slice of a much more general spacetime. In what follows, we
concentrate our discussion on a particular asymptotic end.

\subsection{Asymptotic form of the stationary Killing vector}
As seen in section \ref{Section:Basics}, the Killing spinor of the
Kerr spacetime gives rise to its stationary Killing vector
$\xi^\mu$. It will be assumed that the spacetime is such that
$p_\mu=(E, p_i)$ is timelike at each asymptotic end. If this is
the case, then $p^\mu/\sqrt{p^\nu p_\nu}$ gives the asymptotic
direction of the stationary Killing vector at each end ---see
e.g.\cite{BeiChr96}. Let
\[
 m\equiv\sqrt{p^\nu p_\nu}, \quad p^2\equiv E^2-m^2.
\]
Recall now, that $\xi$ and $\xi_{AB}$ denote the lapse and shift of
the spinorial counterpart, $\xi^{AA'}$, of the Killing vector
$\xi^\mu$. One finds that for non-boosted initial data sets of the form
\eqref{OldDecay1}-\eqref{OldDecay2}, one has in terms of the asymptotic Cartesian coordinates and spin frame, that
\[
\xi= \pm\sqrt{2} +o_\infty(r^{-1/2}), \quad \xi_{\mathbf{AB}}= o_\infty(r^{-1/2}).
\]
The factor of $\sqrt{2}$ arises due to the particular
normalisations used in the space spinor formalism. This particular
form of the asymptotic behaviour of the Killing vector has been
discussed in \cite{BaeVal10a}.

\medskip
Now consider the more general case given by
\eqref{BoostedDecay1}-\eqref{BoostedDecay2}. Again, adopting
asymptotically Cartesian coordinates, we extend $p_i$ to a constant
covector field on the asymptotic end. In terms of the associated
asymptotically Cartesian spin frame, we then define $p_{\mathbf{AB}}
\equiv \sigma^i{}_{\mathbf{AB}} p_i$.  One finds that
\begin{equation}
\xi= \pm\frac{\sqrt{2}E}{m}+o_\infty(r^{-1/2}), \quad \xi_{\mathbf{AB}}=\pm\frac{\sqrt{2}p_{\mathbf{AB}}}{m}+o_\infty(r^{-1/2}).\label{xi_first_leading}
\end{equation}

We see that the conditions \eqref{xi_first_leading} are well defined even if we do not have a Killing vector in the spacetime.
Hence, for the general case when the metric satisfies
\eqref{BoostedDecay1}-\eqref{BoostedDecay2} and the ADM 4-momentum is well defined, we can still impose the asymptotics \eqref{xi_first_leading} for our approximate Killing spinor. We will however need to assume that the functions in the metric are given by \eqref{alphabeta}. Otherwise we will not be able to assume $\xi_{ABCD}\in H^\infty_{-3/2}$, as we will do in the next section. We will later see that this condition is important for the solvability of the elliptic equation \eqref{ApproximateKillingEquation}.

\subsection{Asymptotic form of the spatial Killing spinor}
\label{Section:DecayKappa}
In the sequel, given an initial data set $(\mathcal{S},h_{ab},K_{ab})$
satisfying the decay conditions
\eqref{BoostedDecay1}-\eqref{BoostedDecay2} with $A$, $\alpha$ and
$\beta$ given by \eqref{alphabeta} with $m=m_\pm$, it will be
necessary to show that it is always possible to solve the equation
\begin{equation}
\label{AsymptoticSpatialKillingSpinorEquation}
\nabla_{(AB}\kappa_{CD)}=o_\infty(r^{-3/2}),
\end{equation}
order by order without making any further assumptions on the
data. A direct calculation allows us to verify that:

\begin{lemma}
Let $(\mathcal{S},h_{ab},K_{ab})$ denote an initial data set for the
vacuum Einstein field equations satisfying at each asymptotic end the
decay conditions \eqref{BoostedDecay1}-\eqref{BoostedDecay2} with $A$,
$\alpha$ and $\beta$ given by \eqref{alphabeta} and $m$ the ADM mass
of the respective end. Then
\begin{eqnarray}
&& \kappa_{\mathbf{AB}} = 
\mp\frac{\sqrt{2}E}{3m}\left (1+\frac{2E}{r}\right)x_{\mathbf{AB}} \nonumber \\
&&\hspace{2cm}\pm\frac{2\sqrt{2}}{3m}\left(1
+\frac{4E}{r}
-\frac{m^2+2(n\cdot p)^2}
{\sqrt{m^2+(n\cdot p)^2}r}
\right)p_{\mathbf{Q}(\mathbf{A}}x_{\mathbf{B})}{}^{\mathbf{Q}}
+o_\infty(r^{-1/2}), \label{KillingSpinorLeading}
\end{eqnarray}
with $x_{\mathbf{AB}}$ as in \eqref{CartesianSpinor}, and $n\cdot
p=r^{-1}x^{\mathbf{AB}} p_{\mathbf{AB}}$ satisfies equation \eqref{AsymptoticSpatialKillingSpinorEquation}.
\end{lemma}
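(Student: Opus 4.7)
The plan is to verify the lemma by a direct order-by-order computation, using the expression of the spatial Killing spinor operator in terms of the intrinsic Levi-Civita connection; since $K_{ab}$ is symmetric we have $\Omega_{AB}=0$, and the equivalent form
\[
\nabla_{(AB}\kappa_{CD)} = D_{(AB}\kappa_{CD)} + \Omega_{(ABC}{}^{E}\kappa_{D)E}
\]
holds. I would split the candidate spinor into pieces according to their decay rate: the $O(r)$ ``leading'' part
\[
\kappa^{\text{lead}}_{\mathbf{AB}}=\mp\tfrac{\sqrt{2}E}{3m}x_{\mathbf{AB}}\pm\tfrac{2\sqrt 2}{3m}p_{\mathbf{Q}(\mathbf{A}}x_{\mathbf{B})}{}^{\mathbf{Q}},
\]
the $O(r^{0})$ correction obtained from the $1/r$ terms multiplying $x_{\mathbf{AB}}$ and $p_{\mathbf{Q}(\mathbf{A}}x_{\mathbf{B})}{}^{\mathbf{Q}}$, and an $o_\infty(r^{-1/2})$ remainder. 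By Lemmas \ref{Lemma:Multiplication}--\ref{Lemma:FinerMultiplication} combined with the decays \eqref{BoostedDecay1}--\eqref{BoostedDecay2}, the remainder automatically contributes $o_\infty(r^{-3/2})$ to $\nabla_{(AB}\kappa_{CD)}$; the genuine work is therefore localised at the $O(r^{-1})$ order.

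Next I would show that $\kappa^{\text{lead}}_{\mathbf{AB}}$ is a flat-space Killing spinor. The identity $\partial_{\mathbf{EF}}x_{\mathbf{AB}}=-\tfrac12(\epsilon_{\mathbf{EA}}\epsilon_{\mathbf{FB}}+\epsilon_{\mathbf{EB}}\epsilon_{\mathbf{FA}})$ together with the total symmetrisation over $(\mathbf{EFAB})$ gives $\partial_{(\mathbf{EF}}x_{\mathbf{AB})}=0$, and an entirely analogous computation handles the boost piece $p_{\mathbf{Q}(\mathbf{A}}x_{\mathbf{B})}{}^{\mathbf{Q}}$, so that $\partial_{(\mathbf{EF}}\kappa^{\text{lead}}_{\mathbf{AB})}\equiv 0$ identically. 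Consequently the leading part contributes to $\nabla_{(AB}\kappa_{CD)}$ only through the connection coefficients and through $\Omega_{ABCD}$, both of which are $O(r^{-2})$, giving contributions of order $r^{-1}$. Meanwhile $\kappa^{\text{corr}}$ has a flat partial derivative which is also $O(r^{-1})$, so at order $r^{-1}$ the relation to be checked becomes
\[
\partial_{(AB}\kappa^{\text{corr}}_{CD)} \;-\; 2\gamma_{(AB|C|}{}^{E}\kappa^{\text{lead}}_{D)E} \;+\; \Omega_{(ABC}{}^{E}\kappa^{\text{lead}}_{D)E} \;=\; o_\infty(r^{-3/2}),
\]
where $\gamma_{AB}{}^{E}{}_{F}$ denotes the spin-connection coefficients of $D_{AB}$ relative to the asymptotic flat spin frame, extracted from the Christoffel symbols of \eqref{BoostedDecay1}.

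The bulk of the calculation is then to compute $\gamma_{AB}{}^{E}{}_{F}$ from $h_{ij}$ with $A=E$ and $\alpha$ as in \eqref{alphabeta}, and $\Omega_{ABCD}$ from the traceless part of $K_{ij}$ with $\beta$ as in \eqref{alphabeta}, and to substitute together with the explicit form of $\kappa^{\text{corr}}$ (the $1/r$ piece of \eqref{KillingSpinorLeading}) to exhibit the cancellation. The main obstacle is purely algebraic: all three terms are rational functions of $n^i=x^i/r$ through the projector $2x_ix_j/r^{2}-\delta_{ij}$ and through $n\cdot p$, so the cancellation is delicate, requiring the identity $E^{2}=m^{2}+p^{2}$ and the precise functional dependence of $\alpha$ and $\beta$ on $n\cdot p$ prescribed by \eqref{alphabeta}. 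As a geometric sanity check, the data \eqref{BoostedDecay1}--\eqref{BoostedDecay2} with the parameters \eqref{alphabeta} are precisely the asymptotic form of the induced data on a boosted spacelike hyperplane of the Schwarzschild spacetime \cite{BeiOMu87}, on which a genuine Killing spinor exists; the expression \eqref{KillingSpinorLeading} is the expansion of that exact Killing spinor to the order retained here, which makes the cancellation inevitable and provides a way to organise, rather than brute-force, the computation.
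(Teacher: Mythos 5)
Your proposal matches the paper's own treatment: the lemma is established there by precisely this kind of direct order-by-order verification (the paper simply states that ``a direct calculation allows us to verify'' the claim), and your organising observation ---that \eqref{KillingSpinorLeading} is the expansion of the exact Killing spinor on a boosted slice of the Schwarzschild spacetime--- is also how the paper arrives at the formula (cf.\ the discussion of boosted Schwarzschild slices in Sections \ref{Section:AsymptoticallyEuclideanData} and \ref{Section:AB}). Your reduction of the problem to a single cancellation at order $r^{-1}$, with the flat-space Killing spinor property of the leading term and the remainder controlled via Lemmas \ref{Lemma:Multiplication} and \ref{Lemma:FinerMultiplication}, is sound, so the only respect in which your write-up differs from the paper's is that, like the paper, you leave the final algebraic cancellation (which hinges on $E^2=m^2+p^2$ and the specific forms \eqref{alphabeta}) to an explicit, in practice computer-assisted, computation.
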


\medskip
\noindent
\textbf{Remark.} Formula \eqref{KillingSpinorLeading} implies the
following expansions for $\xi$ and $\xi_{\mathbf{AB}}$:
\begin{subequations}
\begin{eqnarray}
&&\xi=\pm\frac{\sqrt{2}E}{m}\mp\frac{\sqrt{2}E(m^2+2(n\cdot p)^2)}{m\sqrt{m^2+(n\cdot p)^2}}r^{-1}+o_\infty(r^{-3/2}),\label{xi0Leading}\\
&& \xi_{\mathbf{AB}}=
\pm\Biggl(-\frac{2\sqrt{2}E}{m}
+\frac{\sqrt{2}(E^2+4(n\cdot p)^2)}{m\sqrt{m^2+(n\cdot p)^2}}
+\frac{mE^2}{\sqrt{2}(m^2+(n\cdot p)^2)^{3/2}}
\Biggr)(n\cdot p)r^{-2}x_{\mathbf{AB}}\nonumber\\
&&\hspace{2cm}\pm\Biggl(\frac{\sqrt{2}}{m}
+\frac{2\sqrt{2}E}{mr}
-\frac{2\sqrt{2}(m^2+2(n\cdot p)^2)}{m\sqrt{m^2+(n\cdot p)^2}r}
\Biggr)p_{\mathbf{AB}}
+o_\infty(r^{-3/2}).\label{xi2Leading}
\end{eqnarray}
\end{subequations}
In the case of non-boosted data the expansions \eqref{KillingSpinorLeading}, \eqref{xi0Leading} and \eqref{xi2Leading} reduce to
\begin{eqnarray*}
&& \kappa_{\mathbf{AB}} = 
\mp\frac{\sqrt{2}}{3}\left (1+\frac{2m}{r}\right)x_{\mathbf{AB}}
+o_\infty(r^{-1/2}),\\
&& \xi=\pm\sqrt{2}\mp\sqrt{2}mr^{-1}+o_\infty(r^{-3/2}),\\
&& \xi_{\mathbf{AB}}=o_\infty(r^{-3/2}),
\end{eqnarray*}
as discussed in \cite{BaeVal10a}. 

\subsection{Existence and uniqueness of spinors with Killing spinor asymptotics}
In this section we prove that given a spinor $\kappa_{AB}$ satisfying
equation \eqref{xi_first_leading} and
\eqref{AsymptoticSpatialKillingSpinorEquation}, then the asymptotic
expansion \eqref{KillingSpinorLeading} is unique up to a translation.

\begin{theorem}\label{ExistensKillingSpinorAsymptotics}
Assume that on an asymptotic end of the slice $\mathcal{S}$, one has
an asymptotically Cartesian coordinate system such that
\eqref{BoostedDecay1}-\eqref{BoostedDecay2} hold. Then there exists
\begin{equation}
\kappa_{\mathbf{AB}}=o_\infty(r^{3/2}),\label{AsymptoticAssumptions1}
\end{equation}
such that
\begin{equation}
\xi_{\mathbf{ABCD}}=o_\infty(r^{-3/2}), \quad 
\xi_{\mathbf{AB}}=\pm\frac{\sqrt{2}p_{\mathbf{AB}}}{m}+o_\infty(r^{-1/2}), \quad 
\xi=\pm\frac{\sqrt{2}E}{m}+o_\infty(r^{-1/2}).
\label{AsymptoticAssumptions2}
\end{equation}
The spinor 
$\kappa_{\mathbf{AB}}$ is unique up to order $o_\infty(r^{-1/2})$,
apart from a (complex) constant term.
\end{theorem}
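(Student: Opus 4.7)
Existence is immediate from the preceding lemma: the explicit formula \eqref{KillingSpinorLeading} provides a $\kappa_{\mathbf{AB}}$ which is $O(r)$ in the asymptotically Cartesian coordinates, hence $o_\infty(r^{3/2})$; the condition $\xi_{\mathbf{ABCD}}=o_\infty(r^{-3/2})$ is precisely \eqref{AsymptoticSpatialKillingSpinorEquation}, while the required leading behaviours of $\xi$ and $\xi_{\mathbf{AB}}$ are read off from the sharper expansions \eqref{xi0Leading}-\eqref{xi2Leading}. Uniqueness is where the substantive work lies. Let $\kappa$ and $\kappa'$ both satisfy the conclusion and set $\delta\kappa:=\kappa-\kappa'$; by linearity one has $\delta\kappa=o_\infty(r^{3/2})$ together with
\[
\xi_{\mathbf{ABCD}}(\delta\kappa)=o_\infty(r^{-3/2}),\qquad \xi_{\mathbf{AB}}(\delta\kappa)=o_\infty(r^{-1/2}),\qquad \xi(\delta\kappa)=o_\infty(r^{-1/2}),
\]
and the task reduces to showing $\delta\kappa_{\mathbf{AB}}=c_{\mathbf{AB}}+o_\infty(r^{-1/2})$ for some constant spinor $c_{\mathbf{AB}}$.

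The strategy is to pass to a flat-space model on $\Real^3$. In the asymptotically Cartesian spin frame the Sen connection reads $\nabla_{\mathbf{AB}}=\partial_{\mathbf{AB}}+\Gamma_{\mathbf{AB}}$, where $\Gamma_{\mathbf{AB}}$ is a multiplication operator built from the Christoffel symbols of $h_{ab}$ and the spinor $K_{\mathbf{ABCD}}$. Both are $O(r^{-2})$ under \eqref{BoostedDecay1}-\eqref{BoostedDecay2}, so $\Gamma_{\mathbf{AB}}\delta\kappa=o_\infty(r^{-1/2})$ and the three homogeneous conditions above hold at the same asymptotic orders with $\nabla$ replaced by the flat derivative $\partial$. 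The kernel of the resulting flat overdetermined system on polynomials of degree $\leq 1$ can then be classified directly: the degree-$1$ solutions $\kappa_{\mathbf{AB}}=c_{\mathbf{ABCD}}x^{\mathbf{CD}}$ with $c_{\mathbf{AB}\mathbf{CD}}=c_{(\mathbf{AB})(\mathbf{CD})}$ and $c_{(\mathbf{ABCD})}=0$ form a $4$-complex-dimensional space, and the natural map from this space to the pair $(\xi,\xi_{\mathbf{AB}})\in\Complex\oplus\mathfrak{S}_2$ is easily checked to be an isomorphism (by testing, for example, on $\kappa_{\mathbf{AB}}=x_{\mathbf{AB}}$ and $\kappa_{\mathbf{AB}}=p_{\mathbf{Q}(\mathbf{A}}x_{\mathbf{B})}{}^{\mathbf{Q}}$). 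Imposing the two further vanishing constraints $\xi=0$ and $\xi_{\mathbf{AB}}=0$ therefore annihilates the entire degree-$1$ piece of the polynomial kernel, leaving only constants.

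The main obstacle is upgrading this discrete flat polynomial statement to an $o_\infty$-decay conclusion about $\delta\kappa$ itself, since the bound $o_\infty(r^{3/2})$ a priori permits genuinely nonpolynomial linear growth at infinity. I would address this by an iterative bootstrap: at each step identify the leading angular polynomial forced on $\delta\kappa$ by the three conditions, use the kernel analysis above to conclude that this leading polynomial must be a constant, subtract it off, and use the improved decay of the remainder to repeat the argument, gaining a factor of $r^{1/2}$ in each round. After finitely many iterations one reaches $\delta\kappa=c_{\mathbf{AB}}+o_\infty(r^{-1/2})$. Alternatively, and more cleanly, one may invoke the Fredholm theory on weighted Sobolev spaces for the elliptic operator $L=\Phi^*\circ\Phi$ developed in Section~\ref{Section:ApproximateKSinAEM}, which identifies the asymptotic kernel of $L$ with its flat polynomial kernel and yields the uniqueness in a single step. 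Either way, the matching between an $o_\infty$-decay hypothesis and a discrete polynomial expansion at infinity is the only place where genuine analytic control is required; every other step is either immediate from the preceding lemma or a finite-dimensional linear-algebra check.
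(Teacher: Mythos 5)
Your existence step coincides with the paper's: the explicit expansion \eqref{KillingSpinorLeading} is checked directly to give \eqref{xi0Leading}, \eqref{xi2Leading} and $\xi_{\mathbf{ABCD}}=o_\infty(r^{-3/2})$. The uniqueness step, however, has a genuine gap. After subtracting, the hypotheses on $\delta\kappa$ are only $\xi_{ABCD}(\delta\kappa)=o_\infty(r^{-3/2})$ and $\xi(\delta\kappa),\xi_{AB}(\delta\kappa)=o_\infty(r^{-1/2})$, so the full derivative is merely $o_\infty(r^{-1/2})$; one integration gives $\delta\kappa=o_\infty(r^{1/2})$ (essentially the paper's first bootstrap step), and at that point there is no degree-one polynomial left to subtract. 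The whole difficulty is then to improve the \emph{trace parts} $\xi$ and $\xi_{AB}$ from $o_\infty(r^{-1/2})$ to $o_\infty(r^{-3/2})$, and this is not a finite-dimensional polynomial statement at all: the sub-leading behaviour of $\delta\kappa$ need not be polynomial, and the three conditions by themselves do not force the improvement. The paper obtains it from the integrability identities of the operator ---\eqref{Dxi1}-\eqref{Dxi3} and \eqref{nabla2xi0a}-\eqref{nabla2xi0c}, valid for an arbitrary symmetric $\kappa_{AB}$--- which express $\nabla_{AB}\xi_{CD}$ and $\nabla_{AB}\nabla_{CD}\xi$ in terms of derivatives of $\xi_{ABCD}$ plus curvature/extrinsic-curvature terms with fast decay; integrating these (with the decay conditions killing the integration constants) yields $\xi,\xi_{AB}=o_\infty(r^{-3/2})$, hence $D_{AB}\delta\kappa_{CD}=o_\infty(r^{-3/2})$ and $\delta\kappa_{AB}=C_{AB}+o_\infty(r^{-1/2})$. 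Your proposal never invokes these identities, and your ``subtract the leading angular polynomial, gain $r^{1/2}$ per round'' iteration has no mechanism to supply the missing decay of $\xi$ and $\xi_{AB}$; the finite-dimensional kernel computation (which is correct as linear algebra) only disposes of the exact degree-one piece, not of general $o_\infty(r^{1/2})$ remainders.

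Two further points. First, your reduction to the flat model is lossy where it matters most: with $\Gamma_{\mathbf{AB}},K_{\mathbf{ABCD}}=O(r^{-2})$ and $\delta\kappa=o_\infty(r^{3/2})$, the connection terms are only $o_\infty(r^{-1/2})$, so $\partial_{(\mathbf{AB}}\delta\kappa_{\mathbf{CD})}$ is a priori only $o_\infty(r^{-1/2})$, not $o_\infty(r^{-3/2})$; one must first bootstrap $\delta\kappa$ down to $o_\infty(r^{1/2})$ (as the paper does, gaining $\varepsilon$ twice via Lemma \ref{Lemma:Multiplication}) before the flat picture is faithful. Second, the proposed shortcut via the Fredholm theory for $L=\Phi^*\circ\Phi$ does not apply: $\delta\kappa$ is not known to satisfy $L(\delta\kappa)=0$ (this theorem concerns asymptotic conditions only and is needed precisely to build the seed $\mathring\kappa_{AB}$ \emph{before} the elliptic theory of Section \ref{Section:ApproximateKSinAEM} is invoked), and even for genuine kernel elements of $L$ with $o(r^{3/2})$ growth the flat operator $L'$ annihilates \emph{every} linear spinor (since $\Phi'$ of a linear spinor is constant and the adjoint differentiates it away), so the elliptic asymptotics alone cannot exclude linear growth; the conditions on $\xi,\xi_{AB}$ together with the integrability identities are still indispensable.
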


\medskip
\noindent
\textbf{Remark 1.} The complex constant term arising in Theorem \ref{ExistensKillingSpinorAsymptotics}
contains 6 real parameters. In the sequel, given a particular choice
of asymptotically Cartesian coordinates and frame, we will set this
constant term to zero. Note that a change of asymptotically
Cartesian coordinates would introduce a similar term containing only 3
real parameters ---which by construction could be removed by a
suitable choice of gauge. In what follows, we will use coordinate
independent expressions, and therefore, this translational ambiguity will not affect the result.

\medskip
\noindent
\textbf{Remark 2.}
Note that $\xi_{\mathbf{ABCD}}=o_\infty(r^{-3/2})$
implies $\xi_{ABCD}\in L^2$. The conditions in Theorem
\ref{ExistensKillingSpinorAsymptotics} are
coordinate independent. 

\begin{proof}
A direct calculation shows that the expansion
\eqref{KillingSpinorLeading} yields \eqref{xi0Leading},
\eqref{xi2Leading} and
$\xi_{\mathbf{ABCD}}=o_\infty(r^{-3/2})$. Hence,
\eqref{KillingSpinorLeading} gives a solution of the desired form. In
order to prove uniqueness we make use of the linearity of the
integrability conditions \eqref{Dxi1}-\eqref{Dxi3} and
\eqref{nabla2xi0a}-\eqref{nabla2xi0c}. Note that the translational
freedom gives an ambiguity of a constant term in
$\kappa_{\mathbf{AB}}$. Let
\begin{equation}
\mathring\kappa_{\mathbf{AB}} \equiv 
\mp\frac{\sqrt{2}E}{3m}\left (1+\frac{2E}{r}\right)x_{\mathbf{AB}}
\pm\frac{2\sqrt{2}}{3m}\left(1
+\frac{4E}{r}
-\frac{m^2+2(n\cdot p)^2}
{\sqrt{m^2+(n\cdot p)^2}r}
\right)p_{\mathbf{Q}(\mathbf{A}}x_{\mathbf{B})}{}^{\mathbf{Q}}.
\label{KillingSpinorLeadingTerms}
\end{equation}
Let $\breve\kappa_{\mathbf{AB}}$, be an arbitrary solution to the system \eqref{xi_first_leading}, \eqref{AsymptoticSpatialKillingSpinorEquation}. Furthermore, let
$\kappa_{\mathbf{AB}}\equiv\breve\kappa_{\mathbf{AB}}-\mathring\kappa_{\mathbf{AB}}$. We
then have
\[
\xi_{ABCD}=o_\infty(r^{-3/2}), \quad  \xi_{AB}=o_\infty(r^{-1/2}), \quad  \xi=o_\infty(r^{-1/2}), \quad  \kappa_{AB}=o_\infty(r^{3/2}).
\]
 To obtain the desired conclusion we only need to prove that
 $\kappa_{\mathbf{AB}}=C_{\mathbf{AB}}+o_\infty(r^{-1/2})$, where
 $C_{\mathbf{AB}}$ is a constant. This is equivalent to
 $D_{AB}\kappa_{CD}=o_\infty(r^{-3/2})$. Note that we now have
 coordinate independent statements to prove.

\medskip
We note that from \eqref{BoostedDecay1}-\eqref{BoostedDecay2} it follows that
\[
K_{ABCD}=o_\infty(r^{-2+\varepsilon}), \quad \Psi_{ABCD}=o_\infty(r^{-3+\varepsilon}),
\]
with $\varepsilon>0$. From \eqref{SenDiffKappaSplit} and
Lemma \ref{Lemma:Multiplication} we have
\begin{eqnarray*}
&& D_{AB}\kappa_{CD}=\xi_{ABCD}-\tfrac{1}{3}\epsilon_{A(C}\xi_{D)B}-\tfrac{1}{3}\epsilon_{B(C}\xi_{D)A}-\tfrac{1}{3}\epsilon_{A(C}\epsilon_{D)B}\xi-K_{AB(C}{}^E\kappa_{D)E} \\
&& \phantom{ D_{AB}\kappa_{CD}}=o_\infty(r^{-1/2+\varepsilon}).
\end{eqnarray*}
Integrating the latter yields 
\[
\kappa_{AB}=o_\infty(r^{1/2+\varepsilon}).
\]
The constant of integration is incorporated in the remainder
term. Repeating this procedure allows to gain an $\varepsilon$ in the
decay so that
\[
D_{AB}\kappa_{CD}=o_\infty(r^{-1/2}), \quad \kappa_{AB}=o_\infty(r^{1/2}).
\]
Estimating all terms in \eqref{nabla2xi0a}, \eqref{nabla2xi0b} and
\eqref{nabla2xi0c} gives
\begin{subequations}
\begin{align}
\nabla^{AB}\nabla_{AB}\xi={}&
\xi_{AB}\nabla^{AB}K 
+o_\infty(r^{-7/2}) \nonumber \\
={}&o_\infty(r^{-7/2+\varepsilon}),\label{estnabla2xi0a}\\
\nabla^C{}_{(A}\nabla_{B)C}\xi
={}&\tfrac{1}{2}\Omega_{ABCD}\nabla^{CD}\xi 
-\tfrac{1}{3}K \nabla_{AB}\xi \nonumber \\ 
={}&o_\infty(r^{-7/2+\varepsilon}),\label{estnabla2xi0b}\\
\nabla_{(AB}\nabla_{CD)}\xi
={}&
+\tfrac{1}{2} \hat\Psi_{ABCD}\xi 
-\tfrac{5}{2} \Psi_{ABCD}\xi 
-\tfrac{2}{3} \hat\Psi_{(ABC}{}^E \xi_{D)E} 
-\tfrac{10}{3} \Psi_{(ABC}{}^E \xi_{D)E}\nonumber\\ 
&+\Omega_{ABCDEL} \xi^{EL} 
+\tfrac{2}{5} \xi_{(CD} \nabla _{AB)}K
-3 \Omega_{E(BCD} \nabla_{A)}{}^E\xi\nonumber\\ 
&-3\kappa^{EL}\nabla_{L(D}\Psi_{ABC)E}
+3\kappa_{(A}{}^E \nabla_D{}^L\Psi_{BC)EL}
+o_\infty(r^{-7/2}) \nonumber \\
={}& o_\infty(r^{-7/2+\varepsilon}). \label{estnabla2xi0c}
\end{align}
\end{subequations}
Hence, $\nabla_{AB} \nabla_{CD}\xi=o_\infty(r^{-7/2+\varepsilon})$,
and therefore $D_{AB}
D_{CD}\xi=o_\infty(r^{-7/2+\varepsilon})$. Integrating this yields
$D_{AB}\xi=o_\infty(r^{-5/2+\varepsilon})$. In this step the constants
of integration are forced to vanish by the condition
$D_{AB}\xi=o_\infty(r^{-3/2})$, which is a consequence of
$\xi=o_\infty(r^{-1/2})$. Integrating
$D_{AB}\xi=o_\infty(r^{-5/2+\varepsilon})$ and using
$\xi=o_\infty(r^{-1/2})$ to remove the constants of integration 
yields 
\[
\xi=o_\infty(r^{-3/2+\varepsilon}).
\]
Estimating all terms in \eqref{Dxi1}, \eqref{Dxi2} and \eqref{Dxi3} yields
\begin{subequations}
\begin{eqnarray}
&& \nabla^{AB}\xi_{AB}
= o_\infty(r^{-7/2+\varepsilon}),\label{estDxi1}\\
&& \nabla^C{}_{(A}\xi_{B)C}
=\tfrac{3}{2}\Psi_{ABCD}\kappa^{CD} 
-\tfrac{2}{3}K\xi_{AB} 
-\tfrac{1}{2}\Omega_{ABCD}\xi^{CD} 
+\nabla_{AB}\xi 
+o_\infty(r^{-5/2}) \nonumber \\
&& \phantom{\nabla^C{}_{(A}\xi_{B)C}}= o_\infty(r^{-5/2+\varepsilon}),\label{estDxi2}\\
&& \nabla_{(AB}\xi_{CD)}
=3\Psi_{E(ABC}\kappa_{D)}{}^E 
-\Omega_{E(ABC}\xi_{D)}{}^E 
+o_\infty(r^{-5/2}) \nonumber \\
&& \phantom{\nabla_{(AB}\xi_{CD)}}=o_\infty(r^{-5/2+\varepsilon}).\label{estDxi3}
\end{eqnarray}
\end{subequations}
Hence, $\nabla_{AB} \xi_{CD}=o_\infty(r^{-5/2+\varepsilon})$, and therefore $D_{AB} \xi_{CD}=o_\infty(r^{-5/2+\varepsilon})$.
Integrating and using $\xi_{AB}=o_\infty(r^{-1/2})$ to remove the constants of integration yields 
\[
\xi_{AB}=o_\infty(r^{-3/2+\varepsilon}).
\]
Now,
\begin{eqnarray*}
&& D_{AB}\kappa_{CD}=\xi_{ABCD}-\tfrac{1}{3}\epsilon_{A(C}\xi_{D)B}-\tfrac{1}{3}\epsilon_{B(C}\xi_{D)A}-\tfrac{1}{3}\epsilon_{A(C}\epsilon_{D)B}\xi-K_{AB(C}{}^E\kappa_{D)E} \nonumber \\
&& \phantom{D_{AB}\kappa_{CD}} =o_\infty(r^{-3/2+\varepsilon}).
\end{eqnarray*}
Integrating the latter we get 
\[
\kappa_{AB}=C_{AB}+o_\infty(r^{-1/2+\varepsilon}),
\]
 where $C_{AB}$ is a constant in some frame. To get a frame independent statement one can still use the estimate
 $\kappa_{AB}=o_\infty(r^{\varepsilon})$. Reevaluating the estimates
 \eqref{estnabla2xi0a}, \eqref{estnabla2xi0b} and
 \eqref{estnabla2xi0c} yields
\begin{eqnarray*}
&& \nabla^{AB}\nabla_{AB}\xi=
o_\infty(r^{-7/2}),\\
&& \nabla^C{}_{(A}\nabla_{B)C}\xi
=o_\infty(r^{-9/2+\varepsilon}),\\
&& \nabla_{(AB}\nabla_{CD)}\xi
=o_\infty(r^{-7/2}).
\end{eqnarray*}
Hence, one obtains
\[
\nabla_{AB}\nabla_{CD}\xi=o_\infty(r^{-7/2}).
\]
Integrating as before, we get
\[
\xi=o_\infty(r^{-3/2}).
\]
Finally, we can reevaluate the estimates \eqref{estDxi2} and \eqref{estDxi3}, to get
\begin{eqnarray*}
&& \nabla^C{}_{(A}\xi_{B)C}
=o_\infty(r^{-5/2}),\\
&& \nabla_{(AB}\xi_{CD)}
= o_\infty(r^{-5/2}).
\end{eqnarray*}
Combining this with \eqref{estDxi1}, we obtain
\[
\nabla_{AB}\xi_{CD}=o_\infty(r^{-5/2}).
\]
 Integrating as before, we get 
\[
\xi_{AB}=o_\infty(r^{-3/2}).
\]
Hence,
\[
D_{AB}\kappa_{CD}=\xi_{ABCD}-\tfrac{1}{3}\epsilon_{A(C}\xi_{D)B}-\tfrac{1}{3}\epsilon_{B(C}\xi_{D)A}-\tfrac{1}{3}\epsilon_{A(C}\epsilon_{D)B}\xi-K_{AB(C}{}^E\kappa_{D)E}=o_\infty(r^{-3/2}),
\]
 from where the result follows.
\end{proof}

 From the asymptotic solutions we can obtain a globally defined spinor $\mathring{\kappa}_{AB}$ on $\mathcal{S}$ that will act as a seed for our approximate Killing spinor.

\begin{corollary}\label{corkapparing}
There are spinors $\mathring{\kappa}_{AB}$, defined everywhere on
$\mathcal{S}$, such that the asymptotics at each end is given by
\eqref{KillingSpinorLeading}, where opposite signs are used at each
end. Different choices of $\mathring{\kappa}_{AB}$ can only differ by
a spinor in $H^\infty_{-1/2}$.
\end{corollary}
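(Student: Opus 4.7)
The plan is to establish the corollary in two steps: first constructing a global $\mathring{\kappa}_{AB}$ via a cutoff-and-glue procedure from the asymptotic expressions produced by Theorem \ref{ExistensKillingSpinorAsymptotics}, and then deducing the uniqueness statement from the asymptotic behaviour of the difference of any two such spinors.

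For existence, I would start with the explicit spinors $\mathring{\kappa}^\pm_{AB}$ defined on each asymptotic end by the closed-form expression \eqref{KillingSpinorLeadingTerms}, using the asymptotically Cartesian coordinates native to that end and choosing the $+$ sign at one end and the $-$ sign at the other as prescribed. By the construction in the proof of Theorem \ref{ExistensKillingSpinorAsymptotics}, these spinors are smooth on their respective ends and carry precisely the asymptotics \eqref{KillingSpinorLeading} together with the companion expansions \eqref{xi0Leading}--\eqref{xi2Leading} and $\xi_{ABCD} = o_\infty(r^{-3/2})$. Choose smooth cutoff functions $\chi_\pm$ that are identically $1$ sufficiently far out on each end and compactly supported in that end, and set
\[
\mathring{\kappa}_{AB} \equiv \chi_+ \mathring{\kappa}^+_{AB} + \chi_- \mathring{\kappa}^-_{AB},
\]
with the spinor extended by $0$ on the complement of $\operatorname{supp}\chi_+ \cup \operatorname{supp}\chi_-$. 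Since $\chi_\pm \equiv 1$ near infinity, this spinor is smooth globally on $\mathcal{S}$ and has the prescribed asymptotics at each end with opposite signs.

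For uniqueness, let $\mathring{\kappa}_{AB}$ and $\mathring{\kappa}'_{AB}$ be two globally defined spinors satisfying the hypothesis of the corollary, and set $\Delta_{AB} \equiv \mathring{\kappa}_{AB} - \mathring{\kappa}'_{AB}$. At each asymptotic end the explicit leading terms of both spinors coincide (and the constant ambiguity flagged in Theorem \ref{ExistensKillingSpinorAsymptotics} is fixed to zero, as noted in the subsequent remark), so $\Delta_{AB} = o_\infty(r^{-1/2})$ at each end; the $o_\infty$ notation encodes the corresponding decay of all derivatives as well. Together with the smoothness of $\Delta_{AB}$ on the compact interior of $\mathcal{S}$, this is exactly the statement $\Delta_{AB} \in H^\infty_{-1/2}(\mathcal{S})$, which gives the claim.

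The only technical point to verify is that multiplying the explicit asymptotic expressions by the cutoff $\chi_\pm$ does not spoil the $o_\infty$ remainder estimates: this is immediate since $\chi_\pm$ is identically $1$ on the region where the asymptotic conditions are to be checked, and bounded smooth elsewhere, so Lemma \ref{Lemma:FinerMultiplication} controls any residual contribution. In short, all the substantive analysis has already been done in Theorem \ref{ExistensKillingSpinorAsymptotics}; the corollary is a straightforward packaging of that pointwise result, via a partition of unity, into a global seed spinor suitable for driving the approximate Killing spinor equation on $\mathcal{S}$.
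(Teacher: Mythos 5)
Your proposal is correct and follows essentially the same route as the paper: take the explicit asymptotic spinors from Theorem \ref{ExistensKillingSpinorAsymptotics} at each end (with opposite signs), smoothly cut off and paste them to obtain a global $\mathring{\kappa}_{AB}$, with uniqueness up to $H^\infty_{-1/2}$ following immediately because any two such spinors share the same explicit leading terms and hence differ by an $o_\infty(r^{-1/2})$ remainder at each end plus a smooth field on the compact core. The paper's own proof is just this cutoff-and-paste argument, so your write-up is a faithful (slightly more detailed) version of it.
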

\medskip
\noindent
\textbf{Remark.}  The opposite signs at each end are motivated by
looking at the explicit example of standard Kerr data.
\begin{proof}
Theorem \ref{ExistensKillingSpinorAsymptotics} gives the existence at
each end. Smoothly cut off these functions, and paste them
together. This gives a smooth spinor $\mathring{\kappa}_{AB}$ defined
everywhere on $\mathcal{S}$. Furthermore
$\nabla_{(AB}\mathring{\kappa}_{CD)}\in H^\infty_{-3/2}$.
\end{proof}

\section{The approximate Killing spinor equation in asymptotically
  Euclidean manifolds}
\label{Section:ApproximateKSinAEM}

In this section we study the invertibility properties of the
approximate Killing spinor operator $L:\mathfrak{S}_2 \rightarrow
\mathfrak{S}_2$ given by equation \eqref{ApproximateKillingEquation}
on a manifold $\mathcal{S}$ which is asymptotically Euclidean in the
sense discussed in section
\ref{Section:AsymptoticallyEuclideanData}. In order to do so, we first
present some adaptations to our context of results for elliptic
equations that can be found in \cite{Can81,ChrOMu81,Loc81}.

\subsection{Ancillary results of the theory of elliptic equations on asymptotically Euclidean manifolds}

\subsubsection{Asymptotic homogeneity of $L$}

Let $u$ be the vector given by equation \eqref{vectorU}. The elliptic
operator defined by \eqref{ApproximateKillingEquation} can be written matricially in the
form
\[
(A^{ij}+a^{ij}_2)D_i D_j u + a^i_1 D_i u + a_0u =0,
\]
where $A^{ij}$ corresponds to the matrix associated to the
elliptic operator with constant coefficients $L'$ given by equation
\eqref{matrixA}, and $a^{ij}_2$, $a^j_1$, $a_0$ are matrix valued
functions such that 
\[
a^{ij}_2 \in H^\infty_{-1/2}, \quad  a^j_1 \in H^\infty_{-3/2}, \quad a_0 \in H^\infty_{-5/2}.
\]
Using the terminology of \cite{Can81,Loc81} we say that $L$ is an
\emph{asymptotically homogeneous elliptic operator}\footnote{The sharp
conditions for a second order elliptic operator to be asymptotically homogeneous
are that
\[
a_2^{ij} \in H^\infty_\delta, \quad a_1^i \in H^\infty_{\delta-1},
\quad a_0 \in H^\infty_{\delta-2}, 
\]
for $\delta <0$. As one sees, our operator $L$ satisfies these
conditions with a margin.}. This is the
standard assumption on elliptic operators on asymptotically Euclidean
manifolds. It follows from \cite{Can81}, Theorem 6.3 that:

\begin{theorem}
The elliptic operator 
\[
L: H^{2}_{\delta} \rightarrow H^{0}_{\delta-2}, 
\]
with $\delta$ is not a non-negative integer is a linear bounded operator with
finite dimensional Kernel and closed range.
\end{theorem}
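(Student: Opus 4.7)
The plan is to verify the hypotheses of Theorem 6.3 of \cite{Can81} for second-order asymptotically homogeneous elliptic operators acting between weighted Sobolev spaces on asymptotically Euclidean manifolds. Three ingredients must be checked: (a) ellipticity of the principal symbol of $L$; (b) asymptotic homogeneity with adequate decay of the perturbation coefficients; and (c) invertibility of the model (constant-coefficient) operator at infinity for the given weight $\delta$.

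Conditions (a) and (b) are essentially already in place in the preceding material. Ellipticity has been verified via the explicit determinant computation $\det l(\xi_i) = \frac{1}{36}((\xi_1)^2+(\xi_2)^2+(\xi_3)^2)^6$. The decomposition $L = (A^{ij}+a^{ij}_2)D_iD_j + a^i_1 D_i + a_0$ together with the decay assumptions \eqref{BoostedDecay1}-\eqref{BoostedDecay2}, the asymptotic form of the Sen connection in terms of Levi-Civita plus $K_{ABCD}$, and Lemma \ref{Lemma:Multiplication} give the required memberships $a^{ij}_2 \in H^\infty_{-1/2}$, $a^i_1 \in H^\infty_{-3/2}$, $a_0 \in H^\infty_{-5/2}$. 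Boundedness of $L: H^2_\delta \to H^0_{\delta-2}$ is then an immediate consequence of the definition of the weighted norms, the chain rule, and the multiplication properties of $H^\infty_\gamma$ against $H^2_\delta$ (again via Lemma \ref{Lemma:Multiplication}). The work here is routine bookkeeping on weights.

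The substantive step, and the place where the hypothesis on $\delta$ becomes essential, is (c). Because the matrix $A^{ij}\partial_i\partial_j$ in \eqref{matrixA} has symbol $l(\xi_i)$ whose determinant is a power of $|\xi|^2$, the model operator decouples on the level of symbols into six scalar copies of the flat Laplacian on $\Real^3$. The analysis of the model problem therefore reduces to the well-known mapping properties of $\Delta: H^2_\delta(\Real^3) \to H^0_{\delta-2}(\Real^3)$, whose Fredholm character fails precisely at the indicial roots of $\Delta$ at infinity; for the Laplacian on $\Real^3$ these exceptional weights are exactly the non-negative integers (corresponding to the appearance of harmonic polynomials in the kernel and the consequent loss of closed range). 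This is the main obstacle conceptually, although no new computation is needed since the isomorphism theorems for $\Delta$ at non-exceptional weights are classical.

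With (a)--(c) in hand, the theorem follows from the standard scheme. A weighted Schauder/Sobolev estimate
\[
\|u\|_{2,\delta} \leq C\bigl(\|Lu\|_{0,\delta-2} + \|u\|_{0,\delta'}\bigr), \qquad \delta' < \delta,
\]
obtained by patching interior elliptic estimates with the model estimate on the asymptotic ends, together with the compact embedding $H^2_\delta \hookrightarrow H^0_{\delta'}$ for $\delta' < \delta$ valid on asymptotically Euclidean manifolds, yields finite-dimensionality of $\ker L$ by a standard Rellich-type argument and closed range by the classical abstract criterion (a bounded operator with an a priori estimate modulo a compact term has finite-dimensional kernel and closed range). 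Since these final functional-analytic steps are exactly those in \cite{Can81}, the proof is completed by citing that reference.
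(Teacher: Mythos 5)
Your proposal takes essentially the same route as the paper: the paper's proof consists of noting that $L$ is an asymptotically homogeneous elliptic operator, via the decomposition $(A^{ij}+a^{ij}_2)D_iD_j+a^i_1D_i+a_0$ with $a^{ij}_2\in H^\infty_{-1/2}$, $a^i_1\in H^\infty_{-3/2}$, $a_0\in H^\infty_{-5/2}$, together with the symbol computation already done, and then citing Theorem 6.3 of Cantor ---which is exactly your items (a)--(b) followed by the final citation. Your extra discussion of the model operator and the a priori estimate plus compact embedding scheme is just an unpacking of what is inside the cited theorem (and the phrase that the symbol ``decouples into six copies of the Laplacian'' should be read only as shorthand for $\det l(\xi)=\tfrac{1}{36}\vert\xi\vert^{12}$, since the symbol matrix itself is not scalar), so it does not alter the substance.
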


\subsubsection{The Kernel of $L$}
We investigate some relevant properties of the Kernel of $L$. This, in turn,
requires an analysis of the Kernel of the operator of the Killing
spinor equation \eqref{SpatialKillingSpinorEquation}.

\medskip
The following is an adaptation to the smooth spinorial setting of an
ancillary result from \cite{ChrOMu81}\footnote{The hypotheses in
  \cite{ChrOMu81} are much weaker than the ones presented here. The
  adaptation to the smooth setting has been chosen for simplicity.}.

\begin{theorem}
\label{Lemma:ChrOMu81}
Let $\nu_{A_1B_1\cdots A_pB_p}$ be a $C^\infty$ spinorial field over $\mathcal{S}$ such that 
\[
\nabla_{E_{m+1}F_{m+1}}\cdots \nabla_{E_1 F_1} \nu_{A_1B_1 \cdots A_pB_p} = H_{E_{m+1}F_{m+1}\cdots E_1 F_1 A_1 B_1 \cdots A_p B_p}
\]
with $m,\;p$ non-negative integers, and where $H_{E_{m+1}F_{m+1}\cdots
E_1 F_1 A_1 B_1 \cdots A_p B_p}$ is a linear combination of
$\nu_{A_1B_1 \cdots A_pB_p}$, $\nabla_{E_1F_1}\nu_{A_1B_1 \cdots
A_pB_p}$, $\ldots$, $\nabla_{E_mF_m}\cdots \nabla_{E_1F_1}\nu_{A_1B_1
\cdots A_pB_p}$ with coefficients $b_k$ where $k$ denotes the order of
the derivative the coefficient is associated to. If $b_k\in
H^\infty_{\delta_k}$ with
\[
k-m-1 > \delta_k, \quad 0\leq k \leq m
\]
and $\nu_{A_1B_1\cdots A_pB_p}\in H^\infty_\beta$, $\beta<0$, then 
\[
\nu_{A_1B_1\cdots A_pB_p}=0 \quad { on } \;\;\mathcal{S}.
\] 
\end{theorem}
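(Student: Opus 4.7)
The plan is a two-stage argument: first iteratively bootstrap the decay rate of $\nu_{A_1B_1\cdots A_pB_p}$ at each asymptotic end to show it decays faster than any polynomial, then exploit the overdetermined structure of the hypothesis to propagate this decay into identical vanishing along curves reaching infinity.

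For the bootstrap, suppose inductively that $\nu_{A_1B_1\cdots A_pB_p}\in H^\infty_{\beta'}$ for some $\beta'<0$, starting with $\beta'=\beta$. By Lemma \ref{Lemma:FinerMultiplication}, each summand $b_k\nabla^k\nu$ on the right-hand side of the assumed identity decays like $o(r^{\delta_k+\beta'-k})$ at each asymptotic end. Set
\[
\tau\equiv -(m+1)-\max_{0\leq k\leq m}(\delta_k-k),
\]
which is strictly positive thanks to the hypothesis $\delta_k<k-m-1$. Then $\nabla^{m+1}\nu = o(r^{\beta'-(m+1)-\tau})$. Because the lower-order derivatives $\nabla^j\nu$, $j\leq m$, all lie in $H^\infty_{\beta'-j}$ and hence vanish at infinity, integrating $m+1$ times along radial curves forces the constants of integration to vanish at each step and improves the decay rate of $\nu$ by $m+1$. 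Since elliptic (in fact, higher-order) regularity transfers the pointwise decay into the weighted Sobolev scale, one obtains $\nu\in H^\infty_{\beta'-\tau}$. As $\tau$ is independent of $\beta'$, iterating this step $n$ times yields $\nu\in H^\infty_{\beta-n\tau}$, so $\nu$ decays faster than any polynomial at each end.

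For the vanishing step, fix an arbitrary point $x_0\in\mathcal{S}$ and a curve $\gamma(t)$ through $x_0$ extending into an asymptotic end, parametrised by arc length. Because the hypothesis specifies the full (unsymmetrised) iterated covariant derivative of order $m+1$, contracting the $m+1$ outer index pairs with the tangent $T$ of $\gamma$ in a parallel-propagated spin frame along $\gamma$ produces a closed linear first-order ODE system
\[
Y'(t)=A(t)Y(t),\qquad Y(t)\equiv\bigl(\nu,\,\nabla_T\nu,\,\ldots,\,(\nabla_T)^m\nu\bigr)_{\gamma(t)},
\]
where $A(t)$ has the companion form with identity blocks on the superdiagonal and the $b_k$'s in the last block-row. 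Since $b_k\in H^\infty_{\delta_k}$ with $\delta_k<-1$, the matrix $A(t)$ tends to a nilpotent constant matrix at infinity at an integrable rate. Levinson-type asymptotic analysis of such systems then provides a fundamental basis of solutions growing at most polynomially in $t$ at infinity; the unique solution with faster-than-polynomial decay is the trivial one. By Step 1, $Y(t)=o(t^{-N})$ for every $N$, so $Y\equiv 0$ along $\gamma$ and in particular $\nu(x_0)=0$. Since $x_0\in\mathcal{S}$ was arbitrary, $\nu\equiv 0$.

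The main obstacle is the final implication, from "faster than any polynomial at infinity" to "identically zero": unlike the simple $m=0$ case, where the ODE reduces to $y'=b_0 y$ with $\int b_0\, dt$ absolutely convergent and the conclusion is immediate, the shift structure of $A$ prevents a direct Gronwall-in-reverse argument, and one must invoke the asymptotic-integration theory of linear ODEs with nilpotent principal part plus rapidly decaying perturbation to rule out non-trivial rapidly-decaying solutions. The bookkeeping of integration constants in the bootstrap step, ensuring that the polynomial-in-$r$ ambiguity is killed at each stage by the global decay of $\nu$ and its lower derivatives, is the other delicate point.
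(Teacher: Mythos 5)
First, note that the paper does not actually prove this statement: it is quoted as an adaptation of an ancillary result of \cite{ChrOMu81}, so your attempt has to be judged against the standard argument (form the closed first-order system for the full jet $(\nu,\nabla\nu,\dots,\nabla^m\nu)$ and integrate along curves reaching infinity). Your overall two-stage strategy is of that type, and your Step 1 bootstrap is essentially sound if phrased as in the paper's proof of Theorem \ref{ExistensKillingSpinorAsymptotics} (integrate the jet from the top derivative down, killing constants of integration by the known decay, and use Lemma \ref{Lemma:FinerMultiplication}); the appeal to ``elliptic regularity'' to return to the weighted Sobolev scale is out of place (the system is not elliptic), but pointwise decay of the jet is all the iteration and Step 2 actually need, so this is a presentational rather than a fatal issue.

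The genuine gap is in Step 2. First, the system you write down does not close: the right-hand side $H$ is built from the \emph{full} covariant derivatives $\nabla_{E_kF_k}\cdots\nabla_{E_1F_1}\nu$ contracted into the coefficients $b_k$, not from derivatives along $\gamma$ only, so $Y=(\nu,\nabla_T\nu,\dots,(\nabla_T)^m\nu)$ does not satisfy an ODE by itself; you must take $Y$ to be all frame components of the jet $(\nu,\nabla\nu,\dots,\nabla^m\nu)$ restricted to $\gamma$, for which the hypothesis does give a closed linear system. Second, and more seriously, your key asymptotic claim is false as stated: integrable-rate convergence of $A(t)$ to a nilpotent matrix (i.e.\ merely $\delta_k<-1$) does \emph{not} rule out nontrivial rapidly decaying solutions. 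The scalar example $y''=t^{-3/2}y$ (so $m=1$, $b_1=0$, $b_0\in L^1$) has a principal solution behaving like $t^{3/8}e^{-4t^{1/4}}$, which decays faster than any polynomial; Levinson-type theorems for a non-diagonalizable (Jordan) leading part require stronger, weighted decay of the perturbation, and that is precisely what the hypothesis $\delta_k<k-m-1$ provides but your Step 2 never uses. The repair is the classical shearing transformation $V_j=t^{\,j-m}\nabla^j\nu$: the hypothesis $\delta_k<k-m-1$ guarantees the resulting system has the form $V'=\bigl(t^{-1}C+E(t)\bigr)V$ with $C$ constant and $E\in L^1$, after which a reversed Gronwall inequality together with the super-polynomial decay of $V$ from Step 1 forces $V\equiv 0$ near infinity, and backward uniqueness for linear ODEs then propagates the vanishing along $\gamma$ to the arbitrary point $x_0$. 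With those two corrections your argument becomes a valid self-contained proof, close in spirit to the cited one.
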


This last result, together with Lemma \ref{Lemma:ThirdDerivative}
allows to show that there are no non-trivial Killing spinor candidates
that go to zero at infinity ---in \cite{ChrOMu81} an analogous result
has been proved for Killing vectors. More precisely,

\begin{proposition}
\label{Proposition:TrivialityKernel}
Let $\nu_{AB}\in H^\infty_{-1/2}$ such that
$\nabla_{(AB}\nu_{CD)}=0$. Then $\nu_{AB}=0$ on $\mathcal{S}$.
\end{proposition}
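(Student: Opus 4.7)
The plan is to reduce the statement to an application of the unique-continuation result Theorem~\ref{Lemma:ChrOMu81}. The hypothesis $\nabla_{(AB}\nu_{CD)}=0$ is precisely the assumption required by Lemma~\ref{Lemma:ThirdDerivative}, and a direct application of this lemma to $\nu_{AB}$ gives a relation of the form
\[
\nabla_{AB}\nabla_{CD}\nabla_{EF}\nu_{GH}=H_{ABCDEFGH},
\]
with $H_{ABCDEFGH}$ depending linearly on $\nu_{AB}$, $\nabla_{AB}\nu_{CD}$ and $\nabla_{AB}\nabla_{CD}\nu_{EF}$, with coefficients algebraically built from $\Psi_{ABCD}$, $\hat{\Psi}_{ABCD}$ and $K_{ABCD}$. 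Using the standard identity $\nabla_{AB}\pi_{C}=D_{AB}\pi_{C}+\tfrac{1}{2}K_{ABC}{}^{D}\pi_{D}$ to trade the Sen connection for the Levi-Civita connection, one obtains an equivalent relation for $D_{AB}$-derivatives whose coefficients are of the same algebraic type (plus further $K$-terms from the commutation). This puts the equation in precisely the form required by Theorem~\ref{Lemma:ChrOMu81} with $m=2$ and $p=1$.

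The next step is to verify the decay hypothesis $b_{k}\in H^{\infty}_{\delta_{k}}$ with $k-m-1>\delta_{k}$, i.e.\ $\delta_{0}<-3$, $\delta_{1}<-2$, $\delta_{2}<-1$. From the asymptotic assumptions \eqref{BoostedDecay1}-\eqref{BoostedDecay2} one reads off that $K_{ABCD}\in H^{\infty}_{-2+\varepsilon}$ for every $\varepsilon>0$, and expressions \eqref{Weyl:Electric}-\eqref{Weyl:Magnetic} (together with the decay of the intrinsic Ricci tensor and derivatives of $K$) yield $\Psi_{ABCD}\in H^{\infty}_{-3+\varepsilon}$. The coefficient multiplying $\nabla^{2}\nu$ in Lemma~\ref{Lemma:ThirdDerivative} is $K$-like and therefore in $H^{\infty}_{-2+\varepsilon}$, comfortably satisfying $\delta_{2}<-1$; the coefficients of $\nabla\nu$ and $\nu$ involve at least one factor of $\Psi$, $\hat{\Psi}$ or a derivative of $K$, so by Lemma~\ref{Lemma:Multiplication} they lie respectively in $H^{\infty}_{-3+\varepsilon}$ (fulfilling $\delta_{1}<-2$) and in $H^{\infty}_{-4+\varepsilon}$ (fulfilling $\delta_{0}<-3$).

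Since $\nu_{AB}\in H^{\infty}_{-1/2}$ with $-1/2<0$, all hypotheses of Theorem~\ref{Lemma:ChrOMu81} are met and one concludes $\nu_{AB}=0$ on $\mathcal{S}$. The main technical obstacle is the bookkeeping of the coefficient decays: one must trace through the structural form of $H_{ABCDEFGH}$ in Lemma~\ref{Lemma:ThirdDerivative} and check that the lowest-order coefficient really gains an extra power of $r^{-1}$ beyond $K^{2}$, which is guaranteed either by the presence of a factor $\Psi$ or by a derivative falling on $K$ when commutators are used. Once this is in place, the argument is clean and avoids having to solve the elliptic PDE directly.
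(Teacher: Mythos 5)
Your proposal is correct and follows essentially the same route as the paper: Lemma \ref{Lemma:ThirdDerivative} (valid because $\nabla_{(AB}\nu_{CD)}=0$) supplies the third-order relation, and Theorem \ref{Lemma:ChrOMu81} with $m=2$ then forces $\nu_{AB}=0$, your decay bookkeeping for the coefficients being a correct spelling-out of what the paper summarises as "coefficients with the proper decay". The only superfluous step is the conversion of the Sen derivatives to Levi-Civita derivatives, since Theorem \ref{Lemma:ChrOMu81} is already stated for the Sen connection $\nabla_{AB}$.
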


\begin{proof}
 From Lemma \ref{Lemma:ThirdDerivative} it follows that
$\nabla_{AB} \nabla_{CD} \nabla_{EF} \nu_{GH}$ can be expressed as a
linear combination of lower order derivatives with smooth coefficients
with the proper decay. Thus, Theorem \ref{Lemma:ChrOMu81} applies with $m=2$
and one obtains the desired result.
\end{proof}

\bigskip
We are now in the position to discuss the Kernel of the approximate
Killing spinor operator in the case of spinor fields that go to zero
at infinity. The following is the main result of this section.

\begin{proposition}\label{EllipticKernel}
Let $\nu_{AB}\in H^\infty_{-1/2}$. If $L(\nu_{AB})=0$, then $\nu_{AB}=0$. 
\end{proposition}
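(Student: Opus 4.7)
The plan is to exploit the factorization $L=\Phi^{*}\circ\Phi$ together with the ``triviality of the kernel'' statement in Proposition~\ref{Proposition:TrivialityKernel}. Concretely, I would first show that for $\nu_{AB}\in H^{\infty}_{-1/2}$ with $L(\nu)_{AB}=0$, an integration by parts based on \eqref{IntegrationbyParts} yields
\[
0 \;=\; \langle \Phi^{*}\Phi(\nu),\nu\rangle \;=\; \langle \Phi(\nu),\Phi(\nu)\rangle \;=\; \int_{\mathcal{S}} \nabla_{(AB}\nu_{CD)}\,\widehat{\nabla^{AB}\nu^{CD}}\,\mathrm{d}\mu.
\]
Since the integrand is pointwise non-negative (it is the Hermitian norm-squared of a symmetric spinor), this forces $\Phi(\nu)_{ABCD}=\nabla_{(AB}\nu_{CD)}=0$ everywhere on $\mathcal{S}$. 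Once this is established, Proposition~\ref{Proposition:TrivialityKernel} applies directly to $\nu_{AB}$ and delivers $\nu_{AB}=0$.

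The technical point that must be justified is the vanishing of the boundary contribution in \eqref{IntegrationbyParts}, because $\mathcal{S}$ has two asymptotic ends rather than being compact. I would therefore apply \eqref{IntegrationbyParts} on an exhaustion $\mathcal{U}_{R}\subset\mathcal{S}$ bounded by the 2-spheres $\partial\mathcal{S}_{R}^{\pm}$ on the two ends, with $\kappa\rightsquigarrow\nu$ and $\zeta_{ABCD}\rightsquigarrow\nabla_{(AB}\nu_{CD)}$. The boundary integrand is schematically $n^{AB}\,\nu^{CD}\,\widehat{\nabla_{(AB}\nu_{CD)}}$, and the area element on $\partial\mathcal{S}_{R}^{\pm}$ grows like $R^{2}$.

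The hypothesis $\nu_{AB}\in H^{\infty}_{-1/2}$ together with the embedding lemma stated after Definition~\ref{Definition:WeightedSobolev} gives $\nu_{AB}=o_{\infty}(r^{-1/2})$ and $\nabla_{(AB}\nu_{CD)}=o_{\infty}(r^{-3/2})$ (using that the Sen connection differs from the Levi--Civita connection by $K_{ABCD}=o_{\infty}(r^{-2+\varepsilon})$ contracted with $\nu$, which is of even lower order). Hence by Lemma~\ref{Lemma:FinerMultiplication} the boundary integrand is $o(r^{-2})$, so the surface integral over $\partial\mathcal{S}_{R}^{\pm}$ is $o(1)$ as $R\to\infty$ on both ends. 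The volume integrals on the left-hand side of \eqref{IntegrationbyParts} converge by Cauchy--Schwartz (using $\Omega_{AB}=0$ in our setting, the $\Omega^{CDF}{}_{A}$ term drops out anyway, but even when present it contributes an $L^{1}$ expression because $\Omega_{ABCD}=o_{\infty}(r^{-2+\varepsilon})$). Passing to the limit $R\to\infty$ therefore gives the desired identity $\langle L(\nu),\nu\rangle=\|\Phi(\nu)\|_{L^{2}}^{2}$.

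The remainder of the argument is immediate: $\|\Phi(\nu)\|_{L^{2}}^{2}=0$ forces $\nabla_{(AB}\nu_{CD)}=0$ pointwise, and Proposition~\ref{Proposition:TrivialityKernel} then yields $\nu_{AB}=0$. The only genuine obstacle is the boundary-term estimate described above; once the weighted-Sobolev decay is unpacked via Lemmas~\ref{Lemma:Multiplication} and~\ref{Lemma:FinerMultiplication}, the argument is otherwise a clean application of the self-adjoint factorization of $L$.
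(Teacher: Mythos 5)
Your argument is correct and is essentially the proof given in the paper: the same integration by parts via \eqref{IntegrationbyParts} with $\zeta_{ABCD}=\nabla_{(AB}\nu_{CD)}$, the same boundary-term estimate $n^{AB}\nu^{CD}\widehat{\nabla_{(AB}\nu_{CD)}}=o(r^{-2})$ via Lemma \ref{Lemma:FinerMultiplication}, and the same final appeal to Proposition \ref{Proposition:TrivialityKernel}. The extra remarks about exhausting $\mathcal{S}$ by regions bounded by spheres on both ends merely make explicit what the paper's boundary integral over $\partial\mathcal{S}_\infty$ encodes.
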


\begin{proof}
Using the identity \eqref{IntegrationbyParts} with
$\zeta_{ABCD}= \nabla_{(AB} \nu_{CD)}$ and assuming that
$L(\nu_{CD})=0$, one obtains
\[
\int_{\mathcal{S}} \nabla^{AB}\nu^{CD} \widehat{\nabla_{(AB}
\nu_{CD)}} \mbox{d}\mu = \int_{\partial\mathcal{S}_\infty}
n^{AB}\nu^{CD} \widehat{\nabla_{(AB}\nu_{CD)}} \mbox{d}S,
\]
where $\partial S_\infty$ denotes the sphere at infinity. Assume now,
that $\nu_{AB}\in H^\infty_{-1/2}$. It follows that $\nabla_{(AB} \nu_{CD)}
\in H^\infty_{-3/2}$ and furthermore, using Lemma
\ref{Lemma:FinerMultiplication} that 
\[
n^{AB} \nu^{CD}
\widehat{\nabla_{(AB}\nu_{CD)}} = o(r^{-2}).
\]
 The integration of the
latter over a finite sphere of sufficiently large radius is of type
$o(1)$. Thus one has that
\[
\int_{\partial\mathcal{S}_\infty}
n^{AB}\nu^{CD} \widehat{\nabla_{(AB}\nu_{CD)}} \mbox{d}S=0,
\]
 from where
\[
\int_{\mathcal{S}} \nabla^{AB}\nu^{CD} \widehat{\nabla_{(AB} \nu_{CD)}} \mbox{d}\mu =0.
\]
Therefore, one concludes that 
\[
\nabla_{(AB} \nu_{CD)}=0.
\]
That is, $\nu_{AB}$ has to be a spatial Killing spinor. Using
Proposition \ref{Proposition:TrivialityKernel} it follows that
$\nu_{AB}= 0$ on $\mathcal{S}$. 
\end{proof}

\subsubsection{The Fredholm alternative and elliptic regularity}

We will make use of the following adaptation of the Fredholm alternative
for second order asymptotically homogeneous elliptic operators
on asymptotically Euclidean manifolds ---cfr. \cite{Can81}.

\begin{theorem}
\label{Theorem:FredholmAlternative}
Let $A$ be an asymptotically homogeneous elliptic
operator of order 2 with smooth coefficients. Given $\delta<0$, the equation
\[
A(\zeta_{AB}) = f_{AB}, \quad f_{AB}\in H^0_{\delta-2},
\]
has a solution $\zeta_{AB}\in H^2_{\delta}$  if
\[
\int_{\mathcal{S}} f_{AB} \hat{\nu}^{AB} \mbox{d}\mu=0
\]
for all $\nu_{AB}$ satisfying  
\[
\nu_{AB} \in H^0_{-1-\delta}, \quad A^*(\nu_{AB})=0,
\]
where $A^*$ denotes the formal adjoint of $A$. 
\end{theorem}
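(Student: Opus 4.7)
The plan is to deduce the statement from the closed range result for $A$ (inherited from the preceding theorem, originally in Cantor \cite{Can81}) by combining the closed range theorem from functional analysis with a duality computation that identifies the annihilator of $\mathrm{Range}(A)$ with the kernel of the formal adjoint $A^*$ in a dual weighted Sobolev space.

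First, I would fix the duality pairing. For $u_{AB}\in H^0_{\delta_1}$ and $v_{AB}\in H^0_{\delta_2}$, the sesquilinear pairing
\[
(u,v)\equiv \int_{\mathcal{S}} u_{AB}\hat{v}^{AB}\,\mathrm{d}\mu
\]
is bounded by Cauchy--Schwarz (the volume element $\mathrm{d}\mu$ differs from the Euclidean one by a bounded factor in view of \eqref{BoostedDecay1}) provided $\delta_1+\delta_2\le -3$. A short check of exponents in the definition \eqref{Definition:WeightedSobolev} identifies the topological dual of $H^0_{\delta-2}$ with $H^0_{-1-\delta}$ under this pairing, which explains the weight appearing in the statement.

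Second, since the preceding theorem provides that $A:H^2_\delta\to H^0_{\delta-2}$ has closed range, the closed range theorem in the Banach space setting yields
\[
\mathrm{Range}(A)=\bigl(\mathrm{Range}(A)^{\perp}\bigr)^{\perp},
\]
so that $f_{AB}\in\mathrm{Range}(A)$ if and only if $(f,\nu)=0$ for every continuous functional on $H^0_{\delta-2}$ annihilating the range; equivalently, by the duality identification above, for every $\nu_{AB}\in H^0_{-1-\delta}$ such that $(A\zeta,\nu)=0$ for all $\zeta_{AB}\in H^2_\delta$. The goal is therefore to show that this last condition on $\nu$ is equivalent to $A^*\nu=0$.

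Third, I would convert the weak annihilation condition into the formal adjoint equation. Pairing $A\zeta$ with $\nu$ and integrating by parts against test spinors $\zeta$ compactly supported on $\mathcal{S}$, one obtains
\[
(A\zeta,\nu)=(\zeta,A^*\nu),
\]
so $(A\zeta,\nu)=0$ for all such $\zeta$ is exactly the statement that $A^*\nu=0$ distributionally. Since $A$ is asymptotically homogeneous elliptic with smooth coefficients, so is $A^*$ (its coefficients are obtained by differentiating those of $A$ and lie in the appropriate weighted spaces by Lemma \ref{Lemma:Multiplication}). Standard interior elliptic regularity then promotes any distributional solution $\nu\in H^0_{-1-\delta}$ of $A^*\nu=0$ to a smooth solution. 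To upgrade the test function class from compactly supported to all of $H^2_\delta$ one uses an approximation argument together with the fact that the relevant boundary term from a large sphere $\partial\mathcal{S}_R$ decays to zero: by Lemmas \ref{Lemma:Multiplication}--\ref{Lemma:FinerMultiplication} the product $n^{AB}(\partial\zeta)\hat\nu$ or $n^{AB}\zeta\widehat{\partial\nu}$ is $o(r^{\delta-1}\cdot r^{-1-\delta})=o(r^{-2})$, so the integral over $\partial\mathcal{S}_R$ is $o(1)$ as $R\to\infty$.

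The main obstacle is the careful book-keeping in the last step: one must verify that, although the boundary integrand is only $o(r^{-2})$ rather than absolutely integrable, the limit $R\to\infty$ can be taken along an appropriate sequence (as done in the proof of Proposition \ref{EllipticKernel}), and one must also check that $A^*$ really falls into the class of asymptotically homogeneous elliptic operators so that the regularity theory of \cite{Can81} applies to it as well. Once these points are handled, combining the three steps gives the stated sufficient condition: whenever $f_{AB}\in H^0_{\delta-2}$ is orthogonal, in the above sense, to every $\nu_{AB}\in H^0_{-1-\delta}$ with $A^*\nu=0$, the equation $A\zeta=f$ admits a solution $\zeta_{AB}\in H^2_\delta$.
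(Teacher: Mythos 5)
The paper does not actually prove this statement: it is quoted as an adaptation of Cantor's Fredholm alternative (Theorem 6.3 of \cite{Can81}) and the proof is delegated to that reference. Your proposal supplies a genuine argument, and it is essentially the correct, standard functional-analytic derivation: boundedness and closed range of $A:H^2_\delta\to H^0_{\delta-2}$ (which you must take from Cantor's general theorem for asymptotically homogeneous elliptic operators, since the paper's preceding theorem states it only for the particular operator $L$ --- the hypothesis $\delta<0$ keeps you away from the excluded weights), the identification of the dual of $H^0_{\delta-2}$ with $H^0_{-1-\delta}$ under the pairing $\int u_{AB}\hat v^{AB}\,\mbox{d}\mu$ (your exponent count is right in Bartnik's conventions, $(\delta-2)+(-1-\delta)=-3$), and the closed range theorem reducing membership in $\mathrm{Range}(A)$ to orthogonality against the annihilator of the range. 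One remark: the step you flag as the main obstacle --- upgrading the test class from compactly supported spinors to all of $H^2_\delta$ and controlling a boundary term at infinity --- is not needed for the implication actually stated in the theorem. For sufficiency you only need the inclusion $\{\nu\in H^0_{-1-\delta}:(A\zeta,\nu)=0\ \forall\,\zeta\in H^2_\delta\}\subseteq\{\nu\in H^0_{-1-\delta}:A^*\nu=0\}$, and this follows already by testing against compactly supported smooth $\zeta$, which defines $A^*\nu=0$ distributionally; orthogonality of $f$ to the (possibly larger) kernel of $A^*$ then forces orthogonality to the annihilator, hence $f\in\mathrm{Range}(A)$ by closedness. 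The converse inclusion, the boundary estimate, and the elliptic regularity of elements of the kernel would only be required if you wanted the necessity direction or smoothness of the obstructions, neither of which the statement asserts. With that simplification your argument is complete and self-contained, which is arguably a gain over the paper's bare citation.
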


In order to assert the regularity of solutions, we will need the
following elliptic estimate ---see expression (62) in the proof of
Theorem 6.3 of \cite{Can81}.

\begin{theorem}
\label{Lemma:Regularity}
Let $A$ be an asymptotically homogeneous elliptic operator of order 2
with smooth coefficients. Then for any $\delta\in \Real$ and any $s\geq
2$ there exists a constant $C$ such that for every $\zeta_{AB} \in
H^s_{loc} \cap H^0_\delta$, the following inequality holds
\[
\Vert \zeta_{AB} \Vert_{H^s_\delta} \leq C \left(  \Vert A(\zeta_{AB})
  \Vert_{H^{s-2}_{\delta-2}} + \Vert \zeta_{AB} \Vert_{H^{s-2}_\delta} \right).
\] 
\end{theorem}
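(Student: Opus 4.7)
The plan is to reduce the weighted estimate to standard interior elliptic regularity via a dyadic decomposition of the asymptotic ends combined with a scaling argument that exploits the asymptotic homogeneity of $A$. First I would cover $\mathcal{S}$ by a compact core $K$ and, on each asymptotic end, a sequence of dyadic annuli $A_k = \{x : 2^{k-1}\le r(x) \le 2^{k+1}\}$ for $k \ge k_0$, with bounded overlap. On $K$ the operator is uniformly elliptic with smooth coefficients, so classical interior elliptic regularity (Gårding/Calderón--Zygmund) yields directly the localised bound
\[
\Vert\zeta_{AB}\Vert_{H^s(K)} \le C\bigl(\Vert A(\zeta_{AB})\Vert_{H^{s-2}(K')} + \Vert\zeta_{AB}\Vert_{H^{s-2}(K')}\bigr),
\]
for a slightly enlarged $K'\supset K$, which is the core contribution.

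On each annulus $A_k$ I would rescale by $x = 2^k y$, so that $A_k$ becomes the fixed annulus $\mathcal{A} = \{1/2 \le |y| \le 2\}$. Writing $A = A^{ij}\partial_i\partial_j + a_2^{ij}\partial_i\partial_j + a_1^i\partial_i + a_0$ and using $\partial_{x_i} = 2^{-k}\partial_{y_i}$, the rescaled operator $\tilde A_k = 2^{2k}A$ has constant-coefficient principal part $A^{ij}\partial_{y_i}\partial_{y_j}$, while the rescaled lower-order terms acquire extra decay from the asymptotic homogeneity hypotheses $a_2^{ij}\in H^\infty_{-1/2}$, $a_1^i\in H^\infty_{-3/2}$, $a_0\in H^\infty_{-5/2}$: their $C^{s-2}(\mathcal{A})$ norms tend to zero as $k\to\infty$. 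Standard unweighted interior elliptic regularity applied to $\tilde A_k$ on $\mathcal{A}$ then gives
\[
\Vert\tilde\zeta\Vert_{H^s(\mathcal{A}')} \le C_0\bigl(\Vert \tilde A_k\tilde\zeta\Vert_{H^{s-2}(\mathcal{A})} + \Vert\tilde\zeta\Vert_{H^{s-2}(\mathcal{A})}\bigr),
\]
with a constant $C_0$ independent of $k$, for $k$ sufficiently large.

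Undoing the rescaling, the weights in the definition $\Vert u\Vert_{s,\delta} = \sum_{|\alpha|\le s}\Vert D^\alpha u\Vert_{\delta-|\alpha|}$ are tuned precisely so that on $A_k$, where $\sigma\sim 2^k$, every term on both sides of the local estimate picks up exactly the same factor $2^{-2k\delta}$ after the change of variables $x=2^k y$ (the extra $2^{3k}$ from the volume element, the $2^{-2k|\alpha|}$ from the derivatives, and the weight $\sigma^{-2\delta-3+2|\alpha|}\sim 2^{k(-2\delta-3+2|\alpha|)}$ all combine correctly; similarly $A\zeta$ carries the shifted weight $\delta-2$ which cancels against the $2^{-2k}$ from the rescaling of $A$). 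Squaring the local estimate, restoring the weights, and summing over $k$ using bounded overlap yields the global weighted estimate, to which the core contribution is then added.

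The main obstacle is ensuring that the constant $C_0$ in the local unweighted estimate is truly uniform across all $k$. This is the heart of where asymptotic homogeneity enters: one needs the rescaled operators $\tilde A_k$ to form a precompact family in, say, $C^{s-2}(\mathcal{A})$, whose limits are all elliptic with the same ellipticity constant as the constant-coefficient operator $A^{ij}\partial_i\partial_j$. Either a direct Agmon--Douglis--Nirenberg perturbation argument (treating $\tilde A_k - A^{ij}\partial_i\partial_j$ as a small perturbation for $k$ large) or a compactness--contradiction argument of the Simon type delivers the uniform constant, and the remaining finitely many $k<k_0$ are handled individually. Higher regularity ($s>2$) follows by the usual bootstrap using difference quotients, applied to the localised and rescaled problems.
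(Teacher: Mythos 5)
Your proposal is essentially correct, but note that the paper does not prove this statement at all: it is quoted verbatim from the literature, with a pointer to expression (62) in the proof of Theorem 6.3 of Cantor's paper \cite{Can81}. What you have written is, in effect, a self-contained reconstruction of the standard argument behind that cited result (the same mechanism as in Nirenberg--Walker, Cantor, and Bartnik's Proposition 1.11): a compact core plus dyadic annuli with bounded overlap, rescaling each annulus to a fixed one so that the weight $\sigma^{-2(\delta-|\alpha|)-3}$, the volume factor $2^{3k}$, the derivative factors $2^{-2k|\alpha|}$, and the $2^{-2k}$ from rescaling $A$ all conspire to produce the common factor $2^{-2k\delta}$ on both sides, with uniformity of the local constant coming from the fact that the rescaled lower-order coefficients tend to zero in $C^{s-2}$ of the fixed annulus precisely because of the asymptotic homogeneity hypotheses $a_2^{ij}\in H^\infty_{-1/2}$, $a_1^i\in H^\infty_{-3/2}$, $a_0\in H^\infty_{-5/2}$ (via $D^l u = o(r^{\delta-|l|})$ for $u\in H^\infty_\delta$). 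Your bookkeeping of the exponents checks out, the perturbation-of-a-constant-coefficient-system argument (ADN, since $L$ acts on the six real components of $\kappa_{AB}$) gives the uniform $C_0$ for large $k$, and the finitely many remaining annuli and the core are handled by ordinary interior estimates; the lower-order term you obtain, $\Vert\zeta_{AB}\Vert_{H^0_\delta}$, is even stronger than the stated $\Vert\zeta_{AB}\Vert_{H^{s-2}_\delta}$. Two details worth tightening if you write this out in full: choose the shrunken annuli $\mathcal{A}'$ so that their preimages still cover the end, and account for the difference between coordinate derivatives and the covariant derivatives $D$ appearing in the weighted norms (the Christoffel symbols of $h_{ab}$ decay fast enough that this only changes constants). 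So the comparison is: the paper buys brevity by citation, while your argument makes explicit where the weight convention and the asymptotic homogeneity assumptions are actually used.
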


\medskip
\noindent
\textbf{Notation.} $H^s_{loc}$ denotes the local Sobolev
space. That is, $u \in H^s_{loc}$ if for an arbitrary smooth function
$v$ with compact support, $uv \in H^s$.

\medskip
\noindent
\textbf{Remark.} If $A$ has smooth coefficients, and 
$A(\zeta_{AB})=0$ then it follows that all the
$H^s_\delta$ norms of $\zeta_{AB}$ are bounded by the $H^0_\delta$
norm. Thus, it follows that if a solution to $ A(\zeta_{AB})=0$
exists, it must be smooth ---\emph{elliptic regularity}.

\subsection{Existence of approximate Killing spinors}
We are now in the position of providing an existence proof to
solutions to equation \eqref{ApproximateKillingEquation} with the
asymptotic behaviour discussed in section \ref{Section:DecayKappa}.

\begin{theorem}
\label{Theorem:ExistenceKS}
Given an asymptotically Euclidean initial data set
$(\mathcal{S},h_{ab},K_{ab})$ satisfying the asymptotic conditions
\eqref{BoostedDecay1}-\eqref{BoostedDecay2} and \eqref{alphabeta},
there exists a smooth unique solution to equation
\eqref{ApproximateKillingEquation} with asymptotic behaviour at each
end given by \eqref{KillingSpinorLeading}.
\end{theorem}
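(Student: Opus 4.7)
The plan is to reduce the problem to a linear equation for a small correction to the asymptotic seed provided by Corollary \ref{corkapparing}, and then to apply the Fredholm alternative (Theorem \ref{Theorem:FredholmAlternative}) together with the triviality of the kernel of $L$ established in Proposition \ref{EllipticKernel}. First, let $\mathring{\kappa}_{AB}$ be a fixed seed spinor with the expansion \eqref{KillingSpinorLeading} at each asymptotic end, as furnished by Corollary \ref{corkapparing}. I would seek a solution of the form $\kappa_{AB}=\mathring{\kappa}_{AB}+\vartheta_{AB}$ with $\vartheta_{AB}\in H^\infty_{-1/2}$, which automatically guarantees that $\kappa_{AB}$ inherits the required leading asymptotics. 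The equation $L(\kappa_{AB})=0$ then becomes
\[
L(\vartheta_{AB})=-L(\mathring{\kappa}_{AB}).
\]

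Next, I would check that the right-hand side lies in the correct weighted space. By Corollary \ref{corkapparing}, $\Phi(\mathring{\kappa})=\nabla_{(AB)}\mathring{\kappa}_{CD)}\in H^\infty_{-3/2}$, and $\Phi^*$ is a first-order differential operator with smooth coefficients whose algebraic part involves $\Omega_{ABCD}$, which decays at the rate of $K_{ab}$. Using the multiplication properties of the weighted Sobolev spaces (Lemma \ref{Lemma:Multiplication}) together with the asymptotic form of the coefficients of $L$, one concludes that $-L(\mathring{\kappa}_{AB})=-\Phi^*(\Phi(\mathring{\kappa}))\in H^\infty_{-5/2}\subset H^0_{-5/2}$.

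The core step is then the application of Theorem \ref{Theorem:FredholmAlternative} with $\delta=-1/2$, so that $\delta-2=-5/2$ matches the decay of the source. The orthogonality hypothesis demands that $-L(\mathring{\kappa}_{AB})$ be $L^2$-orthogonal to every $\nu_{AB}\in H^0_{-1/2}$ satisfying $L^*(\nu_{AB})=0$. Since $L$ is formally self-adjoint, $L^*=L$, and by the elliptic estimate of Theorem \ref{Lemma:Regularity} any such $\nu_{AB}\in H^0_{-1/2}$ lies in $H^\infty_{-1/2}$; Proposition \ref{EllipticKernel} then forces $\nu_{AB}=0$. The orthogonality condition is therefore vacuous, and the Fredholm alternative produces $\vartheta_{AB}\in H^2_{-1/2}$. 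A second appeal to Theorem \ref{Lemma:Regularity} bootstraps $\vartheta_{AB}\in H^\infty_{-1/2}$, so that $\kappa_{AB}=\mathring{\kappa}_{AB}+\vartheta_{AB}$ is smooth and has the prescribed asymptotics.

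For uniqueness, if $\kappa^{(1)}_{AB}$ and $\kappa^{(2)}_{AB}$ are two such solutions, their difference $\mu_{AB}\equiv\kappa^{(1)}_{AB}-\kappa^{(2)}_{AB}$ lies in $H^\infty_{-1/2}$ because the leading terms in \eqref{KillingSpinorLeading} cancel, and it satisfies $L(\mu_{AB})=0$; Proposition \ref{EllipticKernel} forces $\mu_{AB}=0$. The main conceptual obstacle in this plan is not the Fredholm step, which is essentially automatic thanks to self-adjointness and the triviality of the kernel, but rather the careful matching of weights: one needs $\delta=-1/2$ to be strong enough that the allowed error $\vartheta_{AB}$ does not corrupt the leading asymptotics \eqref{KillingSpinorLeading}, yet weak enough that $\mathring{\kappa}_{AB}$ provides an admissible source through the coefficient decay encoded in the asymptotically Schwarzschildean assumptions \eqref{BoostedDecay1}--\eqref{BoostedDecay2} together with \eqref{alphabeta}.
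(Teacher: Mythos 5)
Your proposal is correct and follows essentially the same route as the paper: the ansatz $\kappa_{AB}=\mathring{\kappa}_{AB}+\theta_{AB}$ with the seed from Corollary \ref{corkapparing}, the source $-L(\mathring{\kappa}_{AB})\in H^\infty_{-5/2}$, the Fredholm alternative at $\delta=-1/2$ with the kernel killed via Proposition \ref{EllipticKernel}, elliptic regularity for smoothness, and uniqueness from the triviality of the $H^\infty_{-1/2}$ kernel. Your explicit use of elliptic regularity to upgrade $H^0_{-1/2}$ kernel elements before invoking Proposition \ref{EllipticKernel} is a slightly more careful phrasing of the same step.
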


\begin{proof} 
We consider the Ansatz
\[
\kappa_{AB} = \mathring{\kappa}_{AB} + \theta_{AB}, \quad \theta_{AB} \in H^2_{-1/2},
\]
with $\mathring{\kappa}$ given by Corollary \ref{corkapparing}. Substitution into equation \eqref{ApproximateKillingEquation} renders the following equation for the spinor
$\theta_{AB}$:
\begin{equation}
\label{elliptic:general}
L(\theta_{CD}) = -L(\mathring{\kappa}_{CD}).
\end{equation}
By construction it follows that 
\[
\nabla_{(AB} \mathring{\kappa}_{CD)}\in
H^\infty_{-3/2},
\]
 so that 
\[
F_{CD}\equiv-L(\mathring{\kappa}_{CD})\in H^\infty_{-5/2}.
\]
Using Theorem \ref{Theorem:FredholmAlternative} 
with $\delta=-1/2$, one concludes
that equation \eqref{elliptic:general} has a unique solution if
$F_{AB}$ is orthogonal to all $\nu_{AB}\in H^0_{-1/2}$ in the Kernel
of $L^*=L$. Proposition \ref{Proposition:TrivialityKernel} states that this Kernel is trivial. Thus, there are no restrictions on $F_{AB}$ and equation \eqref{elliptic:general} has a unique solution as
desired. Due to elliptic regularity, any $H^2_{-1/2}$
solution to the previous equation is in fact a $H^\infty_{-1/2}$
solution ---cfr. Lemma \ref{Lemma:Regularity}. Thus, $\theta_{AB}$ is
smooth. 
To see that $\kappa_{AB}$ does not depend on the particular choice of $\mathring\kappa_{AB}$, let $\mathring\kappa'_{AB}$, be another choice. Let $\kappa'_{AB}$ be the corresponding solution to \eqref{elliptic:general}. Due to Corollary \ref{corkapparing}, we have $\mathring\kappa_{AB}-\mathring\kappa'_{AB} \in H^\infty_{-1/2}$. 
Hence, we have $\kappa_{AB}-\kappa'_{AB}\in H^\infty_{-1/2}$ and 
$L(\kappa_{AB}-\kappa'_{AB})=0$. According to Proposition \ref{EllipticKernel}, $\kappa_{AB}-\kappa'_{AB}=0$, and the proof is complete. 
\end{proof}

\medskip
The following is a direct consequence of Theorem \ref{Theorem:ExistenceKS}, and will be
crucial for obtaining an invariant characterisation of Kerr data.

\begin{corollary}
\label{Corollary:Boundedness}
A solution, $\kappa_{AB}$, to equation
\eqref{ApproximateKillingEquation} with asymptotic behaviour given by \eqref{KillingSpinorLeading} satisfies $J<\infty$ where $J$ is the functional given by equation \eqref{functional}.
\end{corollary}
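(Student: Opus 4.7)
The plan is to read off the conclusion directly from the construction carried out in the proof of Theorem \ref{Theorem:ExistenceKS}. Recall that the functional $J$ evaluated on $\kappa_{AB}$ is precisely the squared $L^2$-norm of $\xi_{ABCD} \equiv \nabla_{(AB}\kappa_{CD)}$. Since the weighted Sobolev norm $\Vert\cdot\Vert_{-3/2}$ coincides with the unweighted $L^2$-norm (the exponent $-2\delta-3$ in \eqref{Definition:WeightedSobolev} vanishes for $\delta=-3/2$), the claim $J<\infty$ is equivalent to $\xi_{ABCD}\in H^0_{-3/2}$, and it suffices to establish the stronger statement $\xi_{ABCD}\in H^\infty_{-3/2}$.

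I would decompose $\kappa_{AB}=\mathring{\kappa}_{AB}+\theta_{AB}$ as in the proof of Theorem \ref{Theorem:ExistenceKS}, with $\mathring{\kappa}_{AB}$ the seed provided by Corollary \ref{corkapparing} and $\theta_{AB}\in H^\infty_{-1/2}$ (smoothness coming from elliptic regularity, see Theorem \ref{Lemma:Regularity}). By linearity,
\[
\xi_{ABCD}=\nabla_{(AB}\mathring{\kappa}_{CD)}+\nabla_{(AB}\theta_{CD)} .
\]
The first term on the right-hand side already lies in $H^\infty_{-3/2}$: this is precisely what is recorded at the end of the proof of Corollary \ref{corkapparing} and what drives the Fredholm argument in Theorem \ref{Theorem:ExistenceKS}.

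For the second term, I would expand the Sen derivative in terms of the intrinsic Levi-Civita derivative, $\nabla_{AB}\theta_{CD}=D_{AB}\theta_{CD}+\tfrac{1}{2}K_{ABC}{}^E\theta_{ED}+\tfrac{1}{2}K_{ABD}{}^E\theta_{CE}$. The leading piece $D_{AB}\theta_{CD}$ lies in $H^\infty_{-3/2}$ by the very definition of the weighted Sobolev space to which $\theta_{AB}$ belongs. The correction terms involve $K_{ABCD}$, which under the asymptotic assumptions \eqref{BoostedDecay1}--\eqref{BoostedDecay2} satisfies $K_{ABCD}=o_\infty(r^{-2+\varepsilon})$ for any $\varepsilon>0$; invoking the multiplication Lemma \ref{Lemma:Multiplication} then gives $K\cdot\theta\in H^\infty_{-5/2+2\varepsilon}\subset H^\infty_{-3/2}$. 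Thus $\nabla_{(AB}\theta_{CD)}\in H^\infty_{-3/2}$ as well, and combining the two contributions yields $\xi_{ABCD}\in H^\infty_{-3/2}$, whence $J<\infty$.

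There is really no obstacle to overcome here; the proof is a bookkeeping exercise in weights, and all the nontrivial work has already been done in establishing the decay $\nabla_{(AB}\mathring{\kappa}_{CD)}\in H^\infty_{-3/2}$ for the seed (Corollary \ref{corkapparing}) and the existence of the remainder $\theta_{AB}\in H^\infty_{-1/2}$ (Theorem \ref{Theorem:ExistenceKS}). The only place where one must be mildly careful is in checking that the Sen-Levi-Civita correction terms do not spoil the $H^\infty_{-3/2}$ membership of the derivative of $\theta_{AB}$, but the asymptotic decay of $K_{ABCD}$ together with Lemma \ref{Lemma:Multiplication} handles this with room to spare.
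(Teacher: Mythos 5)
Your proposal is correct and follows essentially the same route as the paper: the paper's proof simply notes that for the solution of Theorem \ref{Theorem:ExistenceKS} one has $\nabla_{(AB}\kappa_{CD)}\in H^{0}_{-3/2}$ and that in Bartnik's conventions the $H^{0}_{-3/2}$ norm is the $L^2$ norm, which is exactly your argument. You merely make explicit the bookkeeping the paper leaves implicit (the split $\kappa_{AB}=\mathring{\kappa}_{AB}+\theta_{AB}$, the decay $\nabla_{(AB}\mathring{\kappa}_{CD)}\in H^\infty_{-3/2}$ from Corollary \ref{corkapparing}, and the Sen--Levi-Civita correction terms controlled by Lemma \ref{Lemma:Multiplication}), all of which is consistent with the paper's construction.
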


\begin{proof}
The functional $J$ given by equation
\eqref{functional} is the $L^2$ norm of
$\nabla_{(AB}\kappa_{CD)}$. Now, if $\kappa_{AB}$ is the solution
given by Theorem \ref{Theorem:ExistenceKS}, one has that
$\nabla_{(AB}\kappa_{CD)}\in H^0_{-3/2}$. In Bartnik's
conventions one has that 
\[
\Vert\nabla_{(AB}\kappa_{CD)}\Vert_{L^2} =\Vert\nabla_{(AB}\kappa_{CD)}\Vert_{H^0_{-3/2}}<\infty.
\]
The result follows. 
\end{proof}

\medskip
\noindent
\textbf{Remark.} Again, let $\kappa_{AB}$ be the solution to equation
\eqref{ApproximateKillingEquation} given by Theorem
\ref{Theorem:ExistenceKS}.  Using the identity
\eqref{IntegrationbyParts} with $\zeta_{ABCD}=\nabla_{(AB}\kappa_{CD)}$
one obtains that
\[
J = \int_{\partial \mathcal{S}_\infty} n^{AB}\kappa^{CD}
\widehat{\nabla_{(AB}\kappa_{CD)}} \mbox{d}S <\infty.
\]
Thus, the invariant $J$ evaluated at the solution $\kappa_{AB}$ given
by Theorem \ref{Theorem:ExistenceKS} can be expressed as a boundary
integral at infinity. A crude estimation of the integrand of the
boundary integral does not allow directly to establish its
boundedness. This follows, however, from Corollary
\ref{Corollary:Boundedness}. 
Hence, the leading order terms of $n_{AB}\kappa_{CD}$ and
$\nabla_{(AB}\kappa_{CD)}$ are orthogonal. 

For an independent proof of this fact, see appendix \ref{AlternativeEstimate}.

\section{The geometric invariant}
\label{Section:Invariant}
In this section we show how to use the functional (\ref{functional})
and the algebraic conditions (\ref{kspd2}) and (\ref{kspd3}) to
construct the desired geometric invariant measuring the deviation of
$(\mathcal{S},h_{ab},K_{ab})$ from Kerr initial data. To this end, let
$\kappa_{AB}$ be a solution to equation
\eqref{ApproximateKillingEquation} as given by Theorem
\ref{Theorem:ExistenceKS}. Furthermore, let $\xi_{AB} \equiv
\tfrac{3}{2}\nabla^{P}{}_{(A}\kappa_{B)P}$. Define
\begin{subequations}
\begin{eqnarray}
&& I_1 \equiv \int_{\mathcal{S}} \Psi_{(ABC}{}^{F}\kappa_{D)F}
\hat{\Psi}^{ABCG}\hat{\kappa}^D{}_G \mbox{d}\mu, \label{I1} \\
&& I_2 \equiv{} \int_{\mathcal{S}} \left(3\kappa_{(A}{}^{E}\nabla_{B}{}^{F}\Psi_{CD)EF}+\Psi_{(ABC}{}^{F}\xi_{D)F}\right) \nonumber \\
&& \hspace{3cm}\times \left(3\hat\kappa^{AP}\widehat{\nabla^{BQ}\Psi^{CD}{}_{PQ}}+\hat\Psi^{ABCP}\hat\xi^D{}_P\right){} \mbox{d}\mu. \label{I2}
\end{eqnarray}
\end{subequations}
The geometric invariant is then defined by
\begin{eqnarray}
I \equiv J + I_1 + I_2. \label{geometric:invariant}
\end{eqnarray}

\medskip
\noindent
\textbf{Remark.} It should be stressed that by construction $I$ is
coordinate independent and that $I\geq 0$. We also have the following
lemma.

\begin{lemma}
The geometric invariant given by \eqref{geometric:invariant} is finite
for an initial data set $(\mathcal{S},h_{ab},K_{ab})$ satisfying the
decay conditions \eqref{BoostedDecay1}-\eqref{BoostedDecay2}.
\end{lemma}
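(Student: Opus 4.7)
The plan is to reduce the finiteness of $I=J+I_1+I_2$ to three separate claims and to handle each by direct estimation on the two asymptotic ends; on any compact subset of $\mathcal{S}$ every integrand is smooth and therefore integrable, so only the behaviour at infinity needs attention. The contribution $J<\infty$ is already supplied by Corollary \ref{Corollary:Boundedness}, so the work reduces to establishing $I_1<\infty$ and $I_2<\infty$.

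The first step is to collect the relevant asymptotic decay rates. As noted in the proof of Theorem \ref{ExistensKillingSpinorAsymptotics}, the hypotheses \eqref{BoostedDecay1}-\eqref{BoostedDecay2} combined with \eqref{Weyl:Electric}-\eqref{Weyl:Magnetic} give $K_{ABCD}=o_\infty(r^{-2+\varepsilon})$ and $\Psi_{ABCD}=o_\infty(r^{-3+\varepsilon})$ for any $\varepsilon>0$, and differentiating once more yields $\nabla_{AB}\Psi_{CDEF}=o_\infty(r^{-4+\varepsilon})$. Theorem \ref{Theorem:ExistenceKS} provides the splitting $\kappa_{AB}=\mathring{\kappa}_{AB}+\theta_{AB}$ with $\theta_{AB}\in H^\infty_{-1/2}$ and $\mathring{\kappa}_{AB}$ carrying the leading expansion \eqref{KillingSpinorLeading}, whence $\kappa_{AB}=O(r)$. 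Differentiating and using the expansion \eqref{xi2Leading} for $\mathring{\kappa}_{AB}$ together with $\nabla_{AB}\theta_{CD}=o(r^{-3/2})$ gives $\xi_{AB}=O(1)$.

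Combining these estimates through Lemma \ref{Lemma:FinerMultiplication}, the integrand of $I_1$ has pointwise norm bounded by $o(r^{-4+2\varepsilon})$, which is integrable against the volume element $r^2\,\mathrm{d}r\,\mathrm{d}\Omega$ provided $\varepsilon<1/2$. Expanding the integrand of $I_2$ into its four cross-terms and bounding each factor separately, every summand is of pointwise order $o(r^{-6+2\varepsilon})$, again integrable after inclusion of the volume element. Thus $I_1<\infty$ and $I_2<\infty$, and together with $J<\infty$ this yields the claim.

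No deep analytic obstacle arises in this argument; the essential care is the bookkeeping of the $\varepsilon$-losses in the weighted multiplication lemma, and the observation that, because $\kappa_{AB}$ grows linearly at infinity, the pointwise statement of Lemma \ref{Lemma:FinerMultiplication} (rather than the weighted Sobolev product of Lemma \ref{Lemma:Multiplication}) must be applied to every product involving $\kappa_{AB}$ in order to obtain sharp enough decay for integrability.
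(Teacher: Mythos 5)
Your proof is correct, and its skeleton coincides with the paper's: $J<\infty$ is quoted from Corollary \ref{Corollary:Boundedness}, and $I_1$, $I_2$ are handled by combining the decay of $\Psi_{ABCD}$, $\nabla_{AB}\Psi_{CDEF}$, $\kappa_{AB}$ and $\xi_{AB}$ with a multiplication lemma. The difference is in the packaging. The paper stays inside the weighted Sobolev framework: it notes $\Psi_{ABCD}\in H^\infty_{-3+\varepsilon}$ and $\kappa_{AB}\in H^\infty_{1+\varepsilon}$, applies Lemma \ref{Lemma:Multiplication} to conclude $\Psi_{(ABC}{}^F\kappa_{D)F}\in H^\infty_{-3/2}$ (and similarly for the integrand factor of $I_2$), and then uses that $H^0_{-3/2}$ is exactly the unweighted $L^2$ space, so no explicit integration is needed. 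You instead track pointwise $o(r^\delta)$ bounds via Lemma \ref{Lemma:FinerMultiplication} and integrate them directly against the $r^2\,\mathrm{d}r\,\mathrm{d}\Omega$ volume growth; both routes close with room to spare, yours being more hands-on and the paper's shorter. One side remark of yours is inaccurate, though it does not affect your argument: it is not true that Lemma \ref{Lemma:Multiplication} must be avoided because $\kappa_{AB}$ grows linearly. Since $\mathring{\kappa}_{AB}$ is an explicit smooth field of order $O(r)$ with correspondingly decaying derivatives and $\theta_{AB}\in H^\infty_{-1/2}$, one has $\kappa_{AB}\in H^\infty_{1+\varepsilon}$ for every $\varepsilon>0$, and the weighted Sobolev product then yields $\Psi_{(ABC}{}^F\kappa_{D)F}\in H^\infty_{-2+3\varepsilon}\subset H^\infty_{-3/2}$ for $\varepsilon$ small, which is sharp enough and is precisely the paper's route; the $\varepsilon$-losses are harmless here.
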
 

\begin{proof}
 From Corollary \ref{Corollary:Boundedness} we already have
$J<\infty$. From the form of the decay assumptions
\eqref{BoostedDecay1}-\eqref{BoostedDecay2} we have $\Psi_{ABCD}\in
H^\infty_{-3+\varepsilon}$, $\varepsilon>0$.  By 
Lemma \ref{Lemma:Multiplication} and $\kappa_{AB}\in
H^\infty_{1+\varepsilon}$ we have
\[
\Psi_{(ABC}{}^{F}\kappa_{D)F} \in H^\infty_{-3/2}.
\]
 Thus, again one
finds that $I_1<\infty$.  A similar argument shows that
\[
3\kappa_{(A}{}^{E}\nabla_{B}{}^{F}\Psi_{CD)EF}+\Psi_{(ABC}{}^{F}\xi_{D)F}\in H^\infty_{-3/2},
\]
from where it follows that
 $I_2 <\infty$. Hence, the invariant (\ref{geometric:invariant}) is finite and well defined.
\end{proof}

\medskip
Finally, we are in the position of stating the main result of this
article. It combines all the results in the sections 2 to 7.

\begin{theorem}
Let $(\mathcal{S},h_{ab},K_{ab})$ be an asymptotically Euclidean
initial data set for the Einstein vacuum field equations satisfying on
each of its two asymptotic ends the decay conditions
\eqref{BoostedDecay1}-\eqref{BoostedDecay2} and \eqref{alphabeta} with
a timelike ADM 4-momentum. Furthermore, assume that $\Psi_{ABCD}\neq
0$ and $\Psi_{ABCD}\Psi^{ABCD}\neq 0$ everywhere on $\mathcal{S}$. Let
$I$ be the invariant defined by equations \eqref{functional},
\eqref{I1}, \eqref{I2} and \eqref{geometric:invariant}, where
$\kappa_{AB}$ is given as the only solution to equation
\eqref{ApproximateKillingEquation} with asymptotic behaviour on each
end given by \eqref{KillingSpinorLeading}. The invariant $I$ vanishes
if and only if $(\mathcal{S},h_{ab},K_{ab})$ is locally an initial data set
for the Kerr spacetime.
\end{theorem}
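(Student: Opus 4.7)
The plan is to handle the two implications separately, with the forward direction being essentially a bookkeeping exercise and the reverse direction carrying the geometric content. For the forward direction, if $(\mathcal{S},h_{ab},K_{ab})$ is locally Kerr data, then in a spacetime neighborhood of $\mathcal{S}$ there exists the Kerr Killing spinor (Theorem \ref{Theorem:TypeDhasalwaysaKS} combined with Proposition \ref{Proposition:Kerrdoesnotdegenerate}), whose restriction to $\mathcal{S}$ satisfies all three Killing spinor initial data equations \eqref{kspd1}--\eqref{kspd3} pointwise and carries the asymptotic form \eqref{KillingSpinorLeading}. The uniqueness clause in Theorem \ref{Theorem:ExistenceKS} then identifies this exact Killing spinor with our approximate Killing spinor $\kappa_{AB}$, so the integrands of $J$, $I_1$ and $I_2$ vanish pointwise and $I=0$.

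For the reverse direction, assume $I=0$. Because $J$, $I_1$ and $I_2$ are each the $L^2$ integral of a Hermitian square of a totally symmetric spinor, hence non-negative, they must vanish individually. The pointwise vanishing of the three integrands is precisely the Killing spinor initial data system \eqref{kspd1}--\eqref{kspd3}, together with \eqref{kspd4} (which follows automatically from $\xi_{ABCD}=0$ via the split \eqref{Split:xi}). Applying Theorem \ref{Theorem:KSData} produces a spacetime Killing spinor in a neighbourhood $\mathcal{U}$ of $\mathcal{S}$. The hypotheses $\Psi_{ABCD}\neq 0$ and $\Psi_{ABCD}\Psi^{ABCD}\neq 0$ extend from $\mathcal{S}$ to $\mathcal{U}$ by continuity, so the work reduces to verifying the conditions (i) and (ii) of Theorem \ref{Theorem:SpacetimeCharacterisation} for the associated Killing vector $\xi_{AA'}=\nabla^{B}{}_{A'}\kappa_{AB}$.

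The asymptotic construction of $\kappa_{AB}$ does the heavy lifting here. The expansions \eqref{xi0Leading}--\eqref{xi2Leading} show that $\xi$ and $\xi_{AB}$ have real leading terms $\pm\sqrt{2}E/m$ and $\pm\sqrt{2}p_{AB}/m$ built from the ADM quantities; consequently $\mathrm{Im}\,\xi_{AA'}$ decays at infinity on $\mathcal{S}$. Since $\mathrm{Im}\,\xi_{AA'}$ is itself a Killing vector of the vacuum spacetime, a rigidity argument for Killing vectors decaying at an asymptotically flat end (adapted from the type of bootstrap used in the proof of Theorem \ref{ExistensKillingSpinorAsymptotics} and in the same spirit as Proposition \ref{Proposition:TrivialityKernel}) forces $\mathrm{Im}\,\xi_{AA'}=0$, securing condition (i). Condition (ii) then follows from the timelike ADM 4-momentum assumption: the real leading term of $\xi^{\mu}$ is asymptotically proportional to $p^{\mu}$, hence timelike, so after an asymptotic Lorentz rotation it is a time translation, and the Komar mass coincides with the ADM mass $m>0$. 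Theorem \ref{Theorem:SpacetimeCharacterisation} then yields local isometry with Kerr in $\mathcal{U}$, i.e.\ $(\mathcal{S},h_{ab},K_{ab})$ is locally Kerr data.

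The main obstacle is the reality argument. One must upgrade a purely asymptotic, $o_\infty(r^{-1/2})$-level statement about $\mathrm{Im}\,\xi_{AA'}$ at a single slice to the global vanishing of a spacetime Killing vector in $\mathcal{U}$. The natural route is a two-step bootstrap: first use the spatial Killing spinor equation (already established on $\mathcal{S}$) together with the integrability relations \eqref{Dxi1}--\eqref{Dxi3} and \eqref{nabla2xi0a}--\eqref{nabla2xi0c} to iteratively improve the decay of $\mathrm{Im}\,\xi$ and $\mathrm{Im}\,\xi_{AB}$ to a rate strong enough to fall within the scope of a uniqueness theorem for Killing vectors in asymptotically flat vacuum spacetimes; then conclude the global vanishing. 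A minor, but necessary, companion step is to confirm that the propagated spacetime Killing spinor inherits the real asymptotics also at the second end, which follows from the construction in Corollary \ref{corkapparing} with opposite signs at the two ends.
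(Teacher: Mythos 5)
Your proposal is correct and follows essentially the same route as the paper: non-negativity of $J$, $I_1$, $I_2$ forces the Killing spinor initial data equations \eqref{kspd1}--\eqref{kspd3} pointwise, Theorem \ref{Theorem:KSData} yields a spacetime Killing spinor on a slab, the hypotheses on $\Psi_{ABCD}$ exclude types N and O, the asymptotics \eqref{xi0Leading}--\eqref{xi2Leading} show that the imaginary part of $\xi_{AA'}$ decays and hence vanishes by the rigidity of Killing vectors going to zero at infinity (the paper simply cites \cite{BeiChr96,ChrOMu81} for this, rather than re-deriving it by a bootstrap as you suggest), after which Theorem \ref{Theorem:SpacetimeCharacterisation} together with the identification of the Komar and (non-zero) ADM masses concludes. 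The only other difference is cosmetic: you spell out the forward implication via the uniqueness clause of Theorem \ref{Theorem:ExistenceKS}, a step the paper leaves implicit.
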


\begin{proof}
Due to our smoothness
assumptions, if $I=0$ it follows that equations
\eqref{kspd1}-\eqref{kspd3} are satisfied on the whole of
$\mathcal{S}$. Thus, the development of $(\mathcal{S},h_{ab},K_{ab})$
will have, at least in a slab, a Killing spinor. Accordingly, it must
be of Petrov type D, N or O on the slab ---see Theorem
\ref{Theorem:TypeDhasalwaysaKS}. The types N and O are excluded by the
assumptions $\Psi_{ABCD}\neq 0$ and $\Psi_{ABCD}\Psi^{ABCD}\neq 0$ on
$\mathcal{S}$ ---by continuity, these conditions will also hold in a
suitably small slab. Thus the development of the data can only be of
Petrov type D ---at least on a suitably small slab.

Now, from the general theory on Killing spinors, we know that
$\xi_{AA'}=\nabla_{A'}{}^Q \kappa_{AQ}$ will be, in general, a complex
Killing vector. In particular, both the real and imaginary parts of
$\xi_{AA'}$ will be real Killing vectors. The Killing initial data for
$\xi_{AA'}$ on $\mathcal{S}$ consists of the fields $\xi$ and
$\xi_{AB}$ on $\mathcal{S}$ calculated from $\kappa_{AB}$ using the
expressions \eqref{xi_sen_1} and \eqref{xi_sen_2}. It can be verified
that
\[
\xi-\hat{\xi}=o_\infty(r^{-1/2}), \quad \xi_{AB}+\hat{\xi}_{AB}=o_\infty(r^{-1/2}).
\]
The latter corresponds to the Killing initial data for the imaginary
part of $\xi_{AA'}$. It follows that the imaginary part of $\xi_{AA'}$
goes to zero at infinity. However, there are no non-trivial Killing
vectors of this type \cite{BeiChr96,ChrOMu81}. Thus, $\xi_{AA'}$ is a
real Killing vector. This means that the spacetime belongs, at least
in a suitably small slab of $\mathcal{S}$, to the generalised Kerr-NUT
class. By construction, it tends to a time translation at infinity so that,
in fact, it is a stationary Killing vector.  By virtue of the decay
assumptions \eqref{BoostedDecay1}-\eqref{BoostedDecay2} the
development of the initial data will be asymptotically flat, and it
can be verified that the Komar mass of each end coincides with the
corresponding ADM mass ---these are non-zero by assumption. Hence,
Theorem \ref{Theorem:SpacetimeCharacterisation} applies and the slab of $\mathcal{S}$ is locally
isometric to the Kerr spacetime.
\end{proof}

\begin{corollary}
If furthermore, the slice $\mathcal{S}$ is assumed, a priori, to
have the same topology as a slice of the Kerr spacetime one has that
the invariant $I$ vanishes if and only if
$(\mathcal{S},h_{ab},K_{ab})$ is an initial data set for the Kerr
spacetime.
\end{corollary}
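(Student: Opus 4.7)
The easy direction is the forward implication. Assume $(\mathcal{S},h_{ab},K_{ab})$ is data for the Kerr spacetime. Then its development admits a Killing spinor $\kappa^{\mathrm{Kerr}}_{AB}$; its restriction to $\mathcal{S}$ satisfies the full Killing spinor initial data system \eqref{kspd1}--\eqref{kspd3}. A straightforward asymptotic analysis in Boyer--Lindquist or boosted coordinates shows that this restriction obeys the asymptotic conditions \eqref{AsymptoticAssumptions1}--\eqref{AsymptoticAssumptions2}. By the uniqueness statement in Theorem \ref{Theorem:ExistenceKS}, it must coincide with the $\kappa_{AB}$ produced by solving the approximate Killing spinor equation. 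Since this $\kappa_{AB}$ satisfies \eqref{kspd1}--\eqref{kspd3} pointwise, each of the integrands in $J$, $I_1$, $I_2$ vanishes identically, hence $I=0$.

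For the reverse direction, suppose $I=0$. As each of $J$, $I_1$, $I_2$ is a non-negative $L^2$-type functional of a smooth integrand, their vanishing forces $\nabla_{(AB}\kappa_{CD)} = 0$ and the algebraic conditions \eqref{kspd2}--\eqref{kspd3} pointwise on all of $\mathcal{S}$, not merely on a subset. The main theorem just proved then provides a slab $\mathcal{U}\supset \mathcal{S}$ inside the maximal globally hyperbolic development $(\mathcal{M},g_{\mu\nu})$ on which the metric is locally isometric to the Kerr metric. Equivalently, a honest spacetime Killing spinor $\kappa_{AB}$ exists on the whole slab (by Proposition \ref{Proposition:KSDevelopment}), its associated Killing vector $\xi_{AA'}$ is real (as shown in the proof of the main theorem), and $\xi_{AA'}$ tends to a time translation at each end with non-vanishing Komar/ADM mass.

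To upgrade ``locally isometric'' to ``isometric'' I would now invoke topological rigidity. By standard analyticity of vacuum solutions in harmonic or wave gauge, the local isometries to Kerr extend by analytic continuation along paths in the slab; the obstruction to their global agreement is purely monodromy-theoretic and is controlled by $\pi_1(\mathcal{S})$. Since by assumption $\mathcal{S}$ is diffeomorphic to a Cauchy slice $\mathcal{S}^{\mathrm{Kerr}}$ of the (maximal analytic) Kerr spacetime, there is a natural candidate target $(\mathcal{S}^{\mathrm{Kerr}}, h^{\mathrm{Kerr}}_{ab}, K^{\mathrm{Kerr}}_{ab})$; the local isometries provided by Theorem \ref{Theorem:SpacetimeCharacterisation} define a map of $\mathcal{S}$ into $\mathcal{S}^{\mathrm{Kerr}}$ which is a local diffeomorphism, proper at infinity (because the ADM data and the leading asymptotics of $\kappa_{AB}$ determine the Kerr parameters $(m_\pm,p_\mu)$ uniquely at each end), and hence a covering. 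The topology assumption forces this covering to be a diffeomorphism, giving a global isometry of data sets. Uniqueness of the MGHD for vacuum initial data then yields isometry of the developments.

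The principal obstacle is the globalisation step in the last paragraph: turning a local Kerr structure together with a global Killing spinor into a genuine isometry of initial data sets. The main subtlety is that a Cauchy slice of maximal analytic Kerr is not unique up to diffeomorphism (there are several natural topologies depending on whether one crosses horizons), so the argument must identify which Kerr slice is the correct target from the asymptotic invariants $(m_\pm,p^\pm_\mu)$ and the assumption on $\mathcal{S}$, and then exploit the rigidity of Killing spinors on Kerr to rule out non-trivial deck transformations. Once this identification is made, uniqueness of maximal globally hyperbolic developments completes the argument.
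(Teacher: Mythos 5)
Your proof reaches the corollary along the same essential path as the paper, but buries that path under extra machinery. The paper's own proof is a single sentence: the main theorem already gives that $I=0$ if and only if the data is \emph{locally} Kerr data, and the upgrade to ``data for the Kerr spacetime'' under the topology hypothesis is obtained by appealing to the uniqueness of the maximal globally hyperbolic development \cite{ChoGer69} --- recall also that the paper's conventions (see the notation paragraph of the introduction) only ever regard Kerr as the MGHD of its Cauchy data, so the ambiguity you raise about which Kerr slice to target does not arise. Your forward direction is correct but unnecessarily indirect: Kerr data is in particular locally Kerr data, so $I=0$ follows at once from the main theorem, without matching the restricted Kerr Killing spinor to the approximate Killing spinor through Theorem~\ref{Theorem:ExistenceKS} (a detour where you would anyway have to handle the constant-spinor normalisation ambiguity of Theorem~\ref{ExistensKillingSpinorAsymptotics}, Remark~1, before the two can be said to coincide).

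Where you genuinely diverge is the reverse direction: you try to make explicit the passage from ``locally isometric to Kerr'' to a global isometry of data sets via analytic continuation, a developing map and a covering-space argument, whereas the paper leaves this to the citation of Choquet-Bruhat--Geroch. Your sketch is plausible but, as you yourself flag, not closed: the properness of the developing map at infinity and the triviality of the deck group are asserted rather than proved. Note that the monodromy worry is in fact harmless here, since the Cauchy surfaces of the maximal globally hyperbolic Kerr development have topology $\Real\times\Sphere^2$ and are therefore simply connected, so the topology hypothesis already forces $\pi_1(\mathcal{S})$ to be trivial. In short, your argument is compatible with, and more ambitious than, the paper's one-line proof; the additional globalisation step you attempt is left as an acknowledged sketch rather than a complete argument, but nothing in it contradicts the paper's route, whose only ingredients are the main theorem, the topological assumption, and MGHD uniqueness.
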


\begin{proof}
This follows from the  uniqueness of the
maximal globally hyperbolic development of Cauchy data ---see
\cite{ChoGer69}.
\end{proof}

\medskip
\noindent
\textbf{Remark 1.} A improvement of Theorem
\ref{Theorem:SpacetimeCharacterisation} in which no \emph{a priori}
restrictions on the Petrov type of the spacetime are made ---see the
remark after Theorem \ref{Theorem:SpacetimeCharacterisation}--- would
allow to remove the conditions $\Psi_{ABCD}\neq 0$ and
$\Psi_{ABCD}\Psi^{ABCD}\neq 0$, and thus obtain a stronger
characterisation of Kerr data.

\medskip
\noindent
\textbf{Remark 2.} It is of interest to analyse whether the same
conclusion of the corollary can be obtained without making \emph{a
priori} assumptions on the topology of the 3-manifold.

\section{Future prospects}
We have seen that one can construct a geometric invariant for a slice
with two asymptotically flat ends. A natural extension of this work
would be to also allow asymptotically hyperboloidal and asymptotically
cylindrical slices. Furthermore, one would like to analyse parts of
manifolds in the same way. In this case we need to find appropriate
conditions that can be imposed on $\kappa_{AB}$ on the boundary of the
region we would like to study. A typical scenario would be to study
the domain of outer communication for a black hole, or the exterior of
a star.

Another natural question to be asked is how the geometric invariant
behaves under time evolution. A great part of this problem is to
obtain a time evolution of $\kappa_{AB}$ such that it satisfies
\eqref{ApproximateKillingEquation} on every leaf of the foliation.  If
the geometric invariant is small, one could instead use
\eqref{boxkappa} as an approximate evolution equation for the
approximate Killing spinor. In this case the system \eqref{wave1},
\eqref{wave2} could be used to gain control over the evolution.

If some type of constancy or monotonicity property could be
established for the geometric invariant, this would be a useful tool
for studying non-linear stability of the Kerr spacetime and also in
the numerical evolutions of black hole spacetimes.

\section*{Acknowledgements}
We thank A Garc\'{\i}a-Parrado and J M Mart\'{\i}n-Garc\'{\i}a for
their help with computer algebra calculations in the suite {\tt xAct} \cite{xAct}, and M
Mars and N Kamran for valuable comments. TB is funded by a scholarship
of the Wenner-Gren foundations. JAVK is funded by an EPSRC Advanced
Research fellowship.

\appendix

\section{An alternative estimation of the boundary integral}\label{AlternativeEstimate}
In this section we present an alternative argument to show that the
boundary integral 
\[
\int_{\partial \mathcal{S}_r} n^{AB} \kappa^{CD} \widehat{\nabla_{(AB}\kappa_{CD)}} \mbox{d}S,
\]
is finite as $r\rightarrow \infty$ ---cfr. the remark after Corollary
\ref{Corollary:Boundedness}. For simplicity, we only consider the
non-boosted case, so we have
\[
\kappa_{AB}=\pm \frac{\sqrt{2}}{3}r n_{AB} +O(1).
\]
A similar, but much lengthier argument can be implemented in the
boosted case. It is only necessary to consider the finiteness of the integral
\begin{equation}
\label{BoundaryIntegral}
r \int_{\partial \mathcal{S}_r} n^{AB} n^{CD} \widehat{\nabla_{(AB}\kappa_{CD)}} \mbox{d}S \quad \mbox{ as } r\rightarrow \infty.
\end{equation}

\medskip
We begin by investigating the multipole structure of $\xi_{ABCD}\equiv
\nabla_{(AB}\kappa_{CD)}$ in an asymptotically flat end
$\mathcal{U}\subset\mathcal{S} $.  The equation satisfied by
$\xi_{ABCD}$ is
\begin{equation}
\label{Divergence:xi}
\nabla^{AB} \xi_{ABCD} - 2\Omega^{ABF}{}_{(C} \xi_{D)ABF}=0,
\end{equation}
---see equation \eqref{ApproximateKillingEquation}.  As
$\mathcal{U}\approx (r_0,\infty)\times \Sphere^2$, with $r_0\in\Real$,
it will be convenient to work in spherical coordinates. For
simplicity, we adopt the point of view that all the angular dependence
of the various functions involved is expressed in terms of
(spin-weighted) spherical harmonics. Accordingly, we use the
differential operators $\eth, \;\bar{\eth} \in
\mbox{T}\Sphere^2$---see e.g. \cite{PenRin84}. Let
$\omega_+,\;\omega_-\in \mbox{T}^*\Sphere^2$ denote the 1-forms dual
to $\eth$ and $\bar{\eth}$: 
\[
\langle \eth, \omega_+ \rangle =1, \quad \langle \bar{\eth},
\omega_-\rangle =1.
\]
In addition, we consider
$\partial_r\in \mbox{T}\mathcal{U}$. The operators $\eth,\;\bar{\eth}$
are extended into $\mbox{T} \mathcal{U}$ by requiring that
\[
[\eth,\partial_r]=[\bar{\eth},\partial_r]=0.
\]
Again, let $\mbox{d}r\in\mbox{T}^*\mathcal{U}$ denote the form dual to $\partial_r$. One has
that
\[
\delta_{ij} \mbox{d} x^i \otimes \mbox{d}x^j = \mbox{d}r \otimes \mbox{d} r
+ r^2 \left( \omega_+ \otimes \omega_- + \omega_-\otimes \omega_+ \right).
\]
Now, recalling that
\[
h_{ij} = -\left(1+\frac{2m}{r}\right) \delta_{ij} + o_{\infty}(r^{-3/2}),
\]
we introduce the following frame and coframe:
\begin{eqnarray*}
&& e_{01}= \left( 1-\frac{m}{r} \right)\partial_r +
o_{\infty}(r^{-3/2}), \quad \sigma^{01}= \left( 1+\frac{m}{r} \right)\mbox{d}r +
o_{\infty}(r^{-3/2})  \\
&& e_{00}= \left( 1-\frac{m}{r} \right)\frac{1}{r}\eth +
o_{\infty}(r^{-5/2}), \quad \sigma^{00} =  \left( 1+\frac{m}{r}
\right)r \omega_+ + o_\infty(r^{-1/2})\\
&& e_{11}= \left( 1-\frac{m}{r} \right)\frac{1}{r}\bar{\eth} +
o_{\infty}(r^{-5/2}), \quad \sigma^{11}= \left( 1+\frac{m}{r} \right)r
\omega_- + o_\infty(r^{-1/2}).
\end{eqnarray*}
The fields $e_{AB}$ and $\sigma^{AB}$ satisfy
\[
\langle e_{AB} , \sigma^{CD} \rangle = h_{AB}{}^{CD}, \quad h =
h_{ABCD} \sigma^{AB} \otimes \sigma^{CD}.
\]
where $h_{ABCD} \equiv - \epsilon_{A(C}\epsilon_{D)B}$. Let $\mu_{AB}$
denote a smooth spinorial field. Its covariant
derivative $D_{EF}\mu_{AB}$ can be computed using
\[
D_{EF}\mu_{AB}= e_{EF}(\mu_{AB}) -\Gamma_{EF}{}^Q{}_A \mu_{QB}-\Gamma_{EF}{}^Q{}_B \mu_{AQ},
\]
where $\Gamma_{EF}{}^Q{}_A$ denote the spin coefficients of the frame
$e_{AB}$.

\medskip
The components of the spinor field $\xi_{ABCD}$ with respect to the frame $e_{AB}$  can be written as
\[
\xi_{ABCD} = \xi_0 \epsilon^0_{ABCD}+ \xi_1 \epsilon^1_{ABCD} +\xi_2
\epsilon^2_{ABCD}+ \xi_3 \epsilon^3_{ABCD}+ \xi_4 \epsilon^4_{ABCD}, 
\]
where 
\[
\epsilon^k_{ABCD} \equiv \epsilon_{(A}{}^{(E} \epsilon_B{}^F \epsilon_C{}^G \epsilon_{D)}{}^{H)_k},
\]
where ${}^{(EFGH)_k}$ means that after symmetrisation, $k$ indices are set
to $1$.  In terms of this formalism, equation \eqref{Divergence:xi} is
given by
\begin{equation}
\label{DivergenceFrame}
\epsilon^{AP}\epsilon^{BQ} e_{PQ}(\xi_{ABCD}) - 4 \Gamma^{ABQ}{}_{(A}
\xi_{BCD)Q} + 2 K^{ABQ}{}_{(A} \xi_{BCD)Q} - 2\Omega^{ABQ}{}_{(C}\xi_{D)ABQ}=0.
\end{equation}
Recalling that by assumption $\xi_{ABCD}=o_\infty(r^{-3/2})$, a
lengthy but straightforward calculation shows that
\eqref{DivergenceFrame} implies the equations
\begin{subequations}
\begin{eqnarray}
&& \partial_r \xi_1 -\frac{1}{r}\bar{\eth}\xi_0 + \frac{1}{6}\frac{1}{r} \eth \xi_2 + \frac{3}{r}\left( 1+\frac{m}{r} \right) \xi_1 = o_\infty(r^{-5}), \label{Frame1}\\
&&\partial_r \xi_2 + \frac{3}{2}\frac{1}{r} \bar{\eth} \xi_1 +
\frac{3}{2}\frac{1}{r} \eth\xi_3 +
\frac{3}{r}\left( 1+\frac{m}{r} \right) \xi_2 = o_\infty(r^{-5}), \label{Frame2}\\
&& \partial_r \xi_3 + \frac{1}{r} \eth \xi_4 - \frac{1}{6}\frac{1}{r} \bar{\eth} \xi_2 + \frac{3}{r}\left( 1+\frac{m}{r} \right)\xi_3 =
o_\infty(r^{-5}). \label{Frame3}
\end{eqnarray}
\end{subequations}

\medskip
A computation shows that
\[
n_{(AB} n_{CD)}= \epsilon^2_{ABCD},
\]
so that the boundary integral \eqref{BoundaryIntegral} involves only
the component $\xi_2$. Furthermore, only the harmonic $Y_{0,0}$
(monopole) contributes to the integral as $\epsilon^2_{ABCD}$ is a
constant spinor in our frame. From the equations
\eqref{Frame1}-\eqref{Frame3}, it follows that the coefficient
$\xi_{2;0}$ of $\xi_2$ associated to the harmonic $Y_{0,0}$ satisfies
the ordinary differential equation 
\[
\left(1-\frac{m}{r}\right)\partial_r \xi_{2;0} + \frac{3}{r} \xi_{2;0}
=f(r), \quad f(r)=o_\infty(r^{-5}). 
\]
Consequently one has that
\[
\xi_{2;0} = \frac{\alpha}{(r-m)^3} + \frac{1}{(r-m)^3}\int r(r-m)^2
f(r) \mbox{d}r, \quad \alpha\in \Complex.
\]
It follows that
\[
\xi_{2;0} = \frac{\alpha}{r^3} +o_\infty(r^{-4}).
\]
Using this last expression in the integral \eqref{BoundaryIntegral}
and recalling that $\mbox{d}S=O(r^2)$, it follows that 
\[
r \int_{\partial \mathcal{S}_r} n^{AB} n^{CD}
\widehat{\nabla_{(AB}\kappa_{CD)}} \mbox{d}S = 4\pi\alpha<\infty.
\]
It is worth noting that the constant $\alpha$ contains information of
global nature and it is only known after one has solved the
approximate Killing spinor equation.

\section{Tensor expressions}
For many applications, it is useful to have tensor expressions for the invariants. To this end, define the following tensors on $\mathcal{S}$:
\begin{align*}
\kappa_a&\equiv \sigma_a{}^{AB}\kappa_{AB},&
\zeta&\equiv \xi,\\
\zeta_a&\equiv \sigma_a{}^{AB}\xi_{AB},&
\zeta_{ab} &\equiv \sigma_a{}^{AB}\sigma_b{}^{CD}\xi_{ABCD},\\
C_{ac} &\equiv E_{ac}+\mbox{i}B_{ac}.
\end{align*}
Here $\epsilon_{abc}$, $E_{ac}$ and $B_{ac}$ are the pull-backs of 
$\tfrac{1}{\sqrt{2}}\tau^\mu\epsilon_{\mu\alpha\beta\gamma}$, 
$\tfrac{1}{2}\tau^\gamma \tau^\delta C_{\alpha\beta\gamma\delta}$ 
and $\tfrac{1}{4}\epsilon_{\mu\nu\gamma\delta}\tau^\beta \tau^\delta C_{\alpha\beta}{}^{\mu\nu}$ respectively.
Observe that we are using a negative definite metric.
In this section we assume $K_{ab}=K_{ba}$.

The tensorial versions of the equations \eqref{xi_sen_1}, \eqref{xi_sen_2}, \eqref{xi_sen_3} then read
\begin{subequations}
\begin{align*}
\zeta &=D^a\kappa_a,\\
\zeta_a &=\tfrac{3}{2\sqrt{2}}\mbox{i}\epsilon_{abc} D^c\kappa^b
-\tfrac{3}{4}K_{ab}\kappa^b+\tfrac{3}{4}K_b{}^b\kappa_a,\\
\zeta_{ab}&=D_{(a}\kappa_{b)}-\tfrac{1}{3} h_{ab}D^c\kappa_c-
\tfrac{1}{\sqrt{2}}\mbox{i}\epsilon_{cd(a}K_{b)}{}^d \kappa^c.
\end{align*}
\end{subequations}
Note that the spatial Killing spinor equation $\zeta_{ab}=0$ reduces to the conformal Killing vector equation in the time symmetric case ($K_{ab}=0$).

Expressed in terms of these tensors the elliptic equation \eqref{ApproximateKillingEquation} takes the form
\begin{equation}\label{elliptictensor}
D^b\zeta_{ab}-\tfrac{1}{\sqrt{2}}\mbox{i}\epsilon_{acd}K^{bc}\zeta_b{}^d=0.
\end{equation}

Let $\kappa_a\in H^\infty_{3/2}$ be the solution to \eqref{elliptictensor} with the asymptotics
\begin{align*}
\kappa_i ={}& 
\mp\frac{\sqrt{2}E}{3m}\left (1+\frac{2E}{r}\right)x_i 
\pm\frac{2\mbox{i}}{3m}\left(1
+\frac{4E}{r}
-\frac{m^2+2(n\cdot p)^2}
{\sqrt{m^2+(n\cdot p)^2}r}
\right)\epsilon_i{}^{jk}p_j x_k
+o_\infty(r^{-1/2}),
\end{align*}
at each end, where $p_\mu=(E, p_i)$ is the ADM-4 momentum, $m\equiv\sqrt{p^\mu p_\mu}$, and $n \cdot p=r^{-1}x^i p_i$. The metric and extrinsic curvature are assumed to have the asymptotics \eqref{BoostedDecay1} and \eqref{BoostedDecay2} respectively.

The integrand in \eqref{functional} is 
\[
\mathfrak{J}\equiv\xi_{ABCD}\hat\xi^{ABCD}=\zeta_{ab}\bar\zeta^{ab}.
\]

 From the equation
\begin{align*}
\sigma_a{}^{AB}\sigma_b{}^{CD}\Psi_{(ABC}{}^F\kappa_{D)F}&=
\tfrac{1}{\sqrt{2}}\mbox{i}\epsilon_{cd(a}C_{b)}{}^d\kappa^c.
\end{align*}
we get the integrand for the $I_1$ part of the invariant
\[
\mathfrak{I}_1\equiv\Psi_{(ABC}{}^F\kappa_{D)F}\hat\Psi^{ABCP}\hat\kappa^D{}_{P}=
-\tfrac{1}{2}C^{bc}\bar C_{bc}\kappa^a\bar\kappa_a
+\tfrac{1}{2}C_b{}^c\bar C_{ac}\kappa^a\bar\kappa^b
+\tfrac{1}{4}C_a{}^c\bar C_{bc}\kappa^a\bar\kappa^b.
\]

In order to discuss the integrand of $I_2$ we introduce the spinor
$\Sigma_{ABCD}\equiv \nabla_{(A}{}^F\Psi_{BCD)F}$, and its tensor
equivalent
$\Sigma_{ab}=\sigma_a{}^{AB}\sigma_b{}^{CD}\Sigma_{ABCD}$. One finds that
\begin{align*}
0&=\sigma_a{}^{AB}\nabla^{CD}\Psi_{ABCD}=D^bC_{ab}-\tfrac{\mbox{i}}{\sqrt{2}}\epsilon_{acd}C^{bc}K_b{}^d, \\
\Sigma_{ab}&=\tfrac{\mbox{i}}{\sqrt{2}}\epsilon_{df(a}D^fC^d{}_{b)}
+\tfrac{1}{2}C^{cd}K_{cd}h_{ab}+C_{ab}K^f{}_f
-\tfrac{3}{2}C^c{}_{(a}K_{b)c}.
\end{align*}
The integrand for $I_2$ is given by 
\begin{align*}
\mathfrak{I}_2={}&(3\kappa_{(A}{}^F\Sigma_{BCD)F}+\Psi_{(ABC}{}^F\xi_{D)F})
(3\hat\kappa^{AP}\hat\Sigma^{BCD}{}_P+\hat\Psi^{ABCP}\hat\xi^D{}_P)
\nonumber\\
={}&-\tfrac{9}{2}\Sigma^{bc}\bar\Sigma_{bc}\kappa^a\bar\kappa_a
+\tfrac{9}{2}\Sigma_b{}^c\bar\Sigma_{ac}\kappa^a\bar\kappa^b
+\tfrac{9}{4}\Sigma_a{}^c\bar\Sigma_{bc}\kappa^a\bar\kappa^b
+\tfrac{3}{2}\bar\Sigma_{bc} C^{bc}\bar\kappa^a\zeta_a
-\tfrac{3}{4}\bar\Sigma_{ac} C_b{}^c\bar\kappa^a\zeta^b \nonumber \\
&-\tfrac{3}{2}\bar\Sigma_{bc} C_a{}^c\bar\kappa^a\zeta^b
+\tfrac{3}{2}\Sigma_{bc}\bar C^{bc}\kappa^a\bar\zeta_a
-\tfrac{3}{4}\Sigma_{ac}\bar C_b{}^c\kappa^a\bar\zeta^b
-\tfrac{3}{2}\Sigma_{bc}\bar C_a{}^c\kappa^a\bar\zeta^b
+\tfrac{1}{2}C^{bc}\bar C_{bc}\zeta^a\bar\zeta_a \nonumber \\
&+\tfrac{1}{2}C_b{}^c\bar C_{ac}\zeta^a\bar\zeta^b
+\tfrac{1}{4}C_a{}^c\bar C_{bc}\zeta^a\bar\zeta^b.
\end{align*}
The complete invariant is given by
\[
I=\int_\mathcal{S}(\mathfrak{J}+\mathfrak{I}_1+\mathfrak{I}_2)\mbox{d}\mu.
\]


\end{document}